\tikzset{
m/.style={circle,draw,fill=gray!20,minimum size=5},outer sep=2pt}
\newcolumntype{d}[1]{D{.}{.}{#1} } % per allineare i numeri nelle tabelle al punto decimale
\theoremstyle{plain}
\newtheorem{prop}{Proposition}%[section]
\newtheorem{lemma}{Lemma}%[section]
\newtheorem{remark}{Remark}
\theoremstyle{plain}
\newtheorem{defi}{Definition}%[section]
\newtheorem{assumption}{Assumption}%[section]
\newcommand{\dd}{\mathrm{d}}
\newcommand{\EE}{\mathbb{E}}
\begin{document}

\title{\vspace{-2.5cm}Homophily and infections:\\
static and dynamic effects\thanks{%
We thank seminar participants at the University of Antwerp, Universit\`a di Bologna, Oxford University, at the 2nd EAYE Conference at Paris School of Economics, Bocconi University, and Politecnico di Torino. We would also like to thank Daniele Cassese, Satoshi Fukuda, Andrea Galeotti, Melika Liporace, Matthew Jackson, Alessia Melegaro, Luca Merlino, Dunia Lopez--Pintado, Davide Taglialatela, and Fernando Vega--Redondo for useful comments and suggestions. Fabrizio Panebianco acknowledges
funding from the Spanish Ministry of Economía y Competitividad project ECO2107-87245-R. Paolo Pin acknowledges
funding from the Italian Ministry of Education Progetti di Rilevante Interesse
Nazionale (PRIN) grants 2017ELHNNJ, P20228SXNF and 2022389MRW and from Region Tuscany grant Spin.Ge.Vac.S..
A previous version of this manuscript has circulated with title ``Epidemic dynamics with homophily, vaccination choices, and pseudoscience attitudes''.}}

\author{\vspace{-0.5cm} Matteo Bizzarri\thanks{Universit\`a  Federico II and CSEF, Napoli, Italy. 
\href{matteo.bizzarri@unina.it}{matteo.bizzarri@unina.it}},
Fabrizio Panebianco\thanks{Department of Economics and Finance, Universit\`a Cattolica, Milan, Italy. \href{fabrizio.panebianco@unicatt.it}{fabrizio.panebianco@unicatt.it}},
Paolo Pin\thanks{Department of Economics and Statistics, Universit\`a di Siena, 
and BIDSA, Universit\`a  Bocconi, Milan, Italy. \href{paolo.pin@unisi.it}{paolo.pin@unisi.it}}
}

%\affil[a]{Universit\`a  Federico II and CSEF, Napoli, Italy}
%\affil[b]{Department of Economics and Finance, Universit\`a Cattolica, Milan, Italy}
%\affil[c]{Department of Economics and Statistics, Universit\`a di Siena, Italy}
%\affil[d]{BIDSA, Universit\`a  Bocconi, Milan, Italy}

\date{ }

\maketitle

%\tableofcontents	

\vspace{-1cm}
\begin{abstract}
% Some pro-vaccine policies (e.g., compulsory vaccination in public schools) may have the effect of separating those in favor from those against vaccines, inducing a segregating effect.

We analyze the effect of homophily in  {the diffusion of a harmful state} between two groups of agents that differ in  {immunization} rates.
Homophily has a very different impact on the steady state infection level (that is increasing in homophily when homophily is small, and decreasing when high), and on the cumulative number of infections generated by a deviation from the steady state (that, instead, is decreasing in homophily when homophily is small, and increasing when high).
%The steady state infection rate is hump-shaped in homophily, whereas the cumulative number of agents infected during an outbreak is u-shaped. 
If immunization rates are endogenous, homophily has the opposite impact on the two groups. 
However, the sign of the group-level impact is the opposite if immunization is motivated by infection risk or by peer pressure. If motivations are group-specific, homophily can be harmful to both groups. %We discuss the policy implications of these results.

\end{abstract}

\begin{small}
\textbf{JEL classification codes:} 
% 	\textbf{C22}	Time-Series Models, Dynamic Quantile Regressions, Dynamic Treatment Effect Models, Diffusion Processes --
%	\textbf{C61}	Optimization Techniques, Programming Models, Dynamic Analysis --
	\textbf{D62}	Externalities --
	\textbf{D85}	Network Formation and Analysis: Theory --
		\textbf{I12}	Health Behavior 
  %	\textbf{I18}	Government Policy, Regulation, Public Health

\bigskip

\textbf{Keywords:} Homophily; diffusion; epidemics; vaccination; SIS--type model.
\end{small}

	\section{Introduction}
	
It is well known that diffusion processes in social networks are crucially affected by the homophily between different groups, namely the tendency of members of a group to interact among themselves more than across groups. In this paper, our goal is to show that homophily can have a very different impact on the long-run steady state and on the dynamics of the diffusion.
Specifically, we study the diffusion of a harmful state in a population where immunization is available. We can think of the harmful state as an infectious disease, a harmful behavior (tobacco use, as in \citealp{galeotti2013strategic}\footnote{In the example of tobacco use, immunization \enquote{is
interpreted as a commitment to avoid the temptation of smoking} (\citealp{galeotti2013strategic}).}), or a rumor whose belief may be harmful (as in \citealp{merlino2023debunking}).
In the following, we use for simplicity the terminology of the diffusion of a disease, referring to immunization as vaccination.

We study a stylized environment with two heterogeneous groups of agents, one with high vaccination rate, the other with a low vaccination rate, and we focus on the case of a non-zero stable steady state for infection.  {At time 0 there is a random shock to the number of infected (which we sometimes refer to as an \emph{outbreak}), that generates a dynamic adjustment. After that time, the system converges back to the stable steady state.} %, but subject to outbreaks,  {or fluctuations around the steady state}. 

We analyze what is the impact of homophily on two measures: the steady-state number of infections, and the cumulative number of infected--person--periods generated during by an outbreak (which we refer to as \enquote{cumulative infection}, for simplicity).  {Our main contributions are two: first, we show that the effect of an increase in homophily on the \emph{steady state} infection level can be diametrically opposite to the effect on the cumulative infection.
Second, we show that homophily has an opposite effect on the vaccination rates depending on whether the motivation for infection is a classic \enquote{rational} cost-benefit analysis, or instead depends on peer pressure (that some recent evidence, reviewed in the literature section, highlights as an important factor).}

It is well known that social networks exhibit a high degree of homophily, and that homophily is one of the network characteristics crucially affecting diffusion and contagion (see e.g.~\citealp{jackson2008social}). Moreover, when policy makers have to implement certain policies, such as affecting vaccination uptake, this may  {in turn} affect homophily. For example, there has been a lot of evidence that in the US private and charter schools have a higher level of non-vaccinated children, and this is driven by a larger number of families that use the possibility of religious or philosophical exemptions (see the discussion and references in Section \ref{sec:peer}.). 
 {Another example, focusing on the tobacco use interpretation, is that smokers likely prefer to spend time in spaces that either allow smoking, or have spaces to do it: for example some types of bars, or clubs. Once again, different enforcement of smoking regulations can have the effect of increasing homophily of interaction between people with similar immunization (commitment) rates.}

To model the diffusion of the harmful state, we adopt the SIS (\enquote{susceptible-infected-susceptible}) model, following \cite{galeotti2013strategic} and \cite{jackson2008social}.
The network is stylized and composed of two groups, with heterogeneous vaccination rates.
Vaccination is perfectly effective and is not updated over time. In the basic model, vaccination rates are exogenous. We later endogenize vaccination rates and microfound the discrepancy  {across groups, by assuming that agents in a group perceive a higher cost of vaccination}. With this in mind, we label the two groups \enquote{\textit{vaxxers}} and \enquote{\textit{anti-vaxxers}}.\footnote{ {Actually, our main results of Section \ref{sec:model} (for exogenous vaccination rates) only depend on the presence of heterogeneity between the amount of susceptibles in each group, and not on the fact that this heterogeneity comes from vaccination (even if it is a natural example). This could, for example, represent a difference in infectivity across age or ethnic groups: our result on the impact of homophily would be the same.}} The homophily of contacts between the two groups is modeled by a parameter $h \in [0,1]$, which is the percentage of contacts that people have exclusively with others in their same group, while the rest of contacts is with a fraction of agents drawn at random from the population. For tractability, we focus on the linear approximation of the dynamics around the steady state.

 {In the main analysis, we adopt the SIS model because it is among the most tractable diffusion models, and has often been used in economic theory to describe stylized diffusion processes (\citealp{galeotti2013strategic}, \citealp{jackson2013diffusion} and others). However, in \ref{app:SIR}, we show the robustness  of our main insight in the setting of a SIR model  (\enquote{susceptible-infected-removed}), in which recovered agents are immune, and we characterize how homophily affects the cumulative infection. The main insight, namely that the effect of an increase in homophily on the steady state infection can be very different from the effect on the cumulative infection, generalizes immediately to the SIR  model: indeed, in the SIR model the steady state has zero infections, and so homophily has no effect on the steady state. However, homophily affects the dynamics, and so it naturally has an effect on the cumulative infection. So, the main insight is true also in the SIR model.}

 %\footnote{ {These assumptions are strong simplifications, but we choose them to put ourselves in the same setting as \cite{galeotti2013strategic}, to better highlight the role of cumulative infection with respect to their setting. Moreover, the SIS model guarantees analytical tractability, because it is characterized  by only two differential equations, that can be linearized around the steady state, contrary to, for example, the SIR model (see e.g. \cite{brauer2012mathematical}). }} 

We consider deviations from the steady state of an amount that is stochastic and has zero mean. We study the effect of homophily on two measures of infection: a static one, that is the \textit{steady-state infection}, and one relative to the dynamic, that depends on the size of a deviation from the steady state, and is the discounted sum of each period's infections due to the outbreak, that we call \emph{cumulative infection}. This cumulative measure of infections can be seen as a reduced form of various utilitarian welfare functions that have been used in the recent literature. These examples are all connected to the idea that each infection in each time period has a cost, and those costs must be summed: the more similar being \cite{rowthorn2012optimal}, \cite{farboodi2021internal} and \cite{toxvaerd2022management}. An example in this sense is discussed in detail in Remark \ref{welfare}.\footnote{Moreover, as we discuss in Section \ref{subsec:cum}, this dynamic component can be thought of as the \cite{bonacich1987power} centrality of each group in the network composed by the two groups, where the strength of the connection between groups depends on the amount of probable infections transmitted.} The key characteristic of the cumulative infection is that, contrary to the steady-state level of infection, it contains information on the way the epidemic evolves over time, how long it lasts, and the discount rate.  {If infections are fixed at the steady state, this discounted sum is the same as the steady-state prevalence. However, in an outbreak, when the infection level can vary over time, the two objects can be very different.}

 {In the main result, we do not adopt a specific utility interpretation, because our point is descriptive: our first main contribution is to show that the effect of homophily on infections may be opposite on the steady state and on the cumulative infection.
However, when in Section \ref{sec:preferences} we introduce agents' preferences, to endogenize vaccination choices, we show how a modification of a utilitarian welfare function, with the addition of a health care cost that must be financed through uniform taxes, gives rise to a welfare that depends both on the steady state level of infection \emph{and} the cumulative infection.}\footnote{We thank an anonymous referee for suggesting this interpretation.}

%\deleted{As the object of interest, we analyze the expected discounted aggregate number of infections across all the periods. This is what we would be interested in if, for example, each period an agent is ill she cannot work, or has to be cured, thus creating a cost for the society. }

In particular, we find that the steady-state total infection level is increasing for small homophily $h$ and decreasing for large homophily $h$ (Proposition \ref{Steady}). Instead, cumulative infection follows the opposite behavior: it is decreasing in homophily for small $h$ and increasing for $h$ large (Proposition \ref{Cumulative}). %Since the size of the outbreak is zero on average, the long-run steady state can also be thought of as the \emph{average} number of cumulative infections generated, while the outbreak cumulative infections is a measure of the variance of the total number. Hence, the global effect of homophily depends crucially on the level of risk aversion that society has.
The key intuition behind the result is as follows. In steady state, a change in homophily has a direct effect of increasing infections in the group with less vaccinated agents, because they meet non-vaccinated people more often, and decrease them in the other, for the symmetric reason. Then, there are indirect effects due to the impact of the steady-state levels on the dynamics. The key determinants of these indirect effects are $(i)$ a (negative) \emph{size} effect: a higher infection decreases susceptibles, hence decreases future infections; and $(ii)$ a (positive) \emph{contagion} effect: higher infection increases future infections, increasing contagion probability. When homophily is small, the size effect is symmetric, hence the sign of the impact of homophily is determined by the contagion effect and is positive; on the other hand, when homophily is large, the contagion effect is symmetric, so the sign is determined by the size effect and is negative.

The effect of homophily on cumulative infections due to an outbreak can also be decomposed in a direct and indirect effect. First, homophily has a direct effect of changing the dynamics, that we explore in Section \ref{subsec:cum} through a two-step example. Namely, the level of homophily affects the sign of the gap between the additional infections generated in the two groups: this acts as a reversal of the direct effect of homophily on steady-state levels, thus reversing the behavior when $h$ is large. Moreover, homophily decreases the convergence rate of the system (Proposition \ref{speed_conv}): intuitively, this suggests that homophily makes the epidemic longer, increasing the number of cumulative infections. Second, there is an indirect effect due to the change in steady-state levels: higher steady-state levels mean that there are less susceptible individuals that can be infected, hence outbreaks are smaller. As a consequence, cumulative infections due to the outbreak are \emph{decreasing} in the steady-state levels.

To study the effect of vaccination rates that adjust when homophily changes, we explore a model in which agents trade-off an heterogeneous vaccination cost with their perceived benefit of vaccination.
Furthermore, we assume that
the two groups differ only in (possibly) size, and in their judgment about the real cost of vaccination, which is deemed higher by anti-vaxxers. This can be thought of as a psychological cost, a sheer mistake, or any phenomenon that may lead to a difference in perceived cost: we remain agnostic on the cause of it as our aim is to study its consequences.\footnote{
%	Even before Covid19, vaccination has been almost unanimously considered the most effective public health intervention by the scientific community (see, e.g., \citealp{larson2016state} or \citealp{trentini2017measles}). 
In recent years many people either refuse drastically any vaccination scheme or reduce (or delay) the prescribed vaccination.  %This has been often associated with pseudo-scientific beliefs.\footnote{\cite{dube2016vaccine}.}
	The phenomenon has become more pronounced in the last decades, especially in Western Europe and in the US. See  \cite{larson2016state} for a general cross country comparison,
 %\cite{robison2012frequency}, \cite{smith2011parental},  \cite{nadeau2015vaccinating} and
 \cite{phadke2016association} for the US and
	\cite{funk2017critical} for measles in various European countries.
%	\cite{rey2018vaccine} analyzes the case of France. 
With the model in the main text we mean to capture not the extremists that would never take a vaccine, but the more general phenomenon of \emph{vaccine hesitancy}, which is more widespread and, so, potentially  more dangerous \citep{trentini2017measles}.
	}
%	and many public health organizations have issued public calls to researchers to enhance the understanding of the phenomenon and its remedies. 
%We think of this difference in the perceived cost as a basic cultural trait, which, as in the literature on cultural transmission, affects the  \emph{preferences} (or beliefs) of agents, who are still free to choose to vaccinate or not, based on a heterogeneous component of the cost. This means that an anti-vaxxer in our model may still vaccinate if the heterogeneous component of the cost is small enough.  

We explore two different possibilities for the motivation of vaccinations: vaccinations motivated by avoiding the risk of infections, and vaccinations motivated by peer pressure. In the former, agents evaluate the benefit of vaccination as the negative of the infection rate: the gain in utility if they do not get the illness. In the latter, agents receive a high benefit from vaccination if many other agents in their neighborhood are vaccinated,  {through a peer effect channel. The cost-benefit model is classic, a variation of the one studied in \cite{galeotti2013strategic}. 
We explore the peer effects model because it is widely believed that peer affect decision making, and specifically decisions related to health and insurance, e.g.: in the use of social insurance (\citealp{markussen2015social}), the adoption of menstrual cups (\citealp{oster2012determinants}), HIV testing (\citealp{godlonton2012peer}), retirement plan choice (\citealp{duflo2003role}). Some recent papers find direct evidence of a role of peer effects in vaccination decisions (see \citealp{hoffmann2019vaccines} for a systematic review). %More in general, the large debate around vaccinations and vaccination skepticism after the Covid-19 pandemic, has raised the issue Various studies, reviewed, e.g., by \cite{dube2016vaccine}, find that people are more likely to have positive attitudes toward vaccination if their family or peers have.
In Section \ref{sec:peer} we review further the empirical evidence.}
Real decision making likely involves a mix of peer effects and cost-benefit calculation, so these two cases should be thought of as the two extreme cases in which only one of the two components is visible, for the sake of illustrating the mechanisms. %We explore these alternatives because there is a well-documented fact about vaccine hesitancy that seems hard to reconcile with strategic models: its geographical and social clustering.  %This is particularly evident in the case of specific religious confessions that hold anti-vaccination prescriptions and tend to be very correlated with social contacts and geographical clustering. 
%These studies, though observational and making no attempt to assess causal mechanisms, present evidence at odds with the strategic model: if the main reason not to vaccinate is free riding, people should be \emph{less} likely to vaccinate if close to many vaccinated people, and not vice versa. 
%\footnote{  However, people do change their minds about vaccination schemes, as documented recently by  \cite{brewer2017increasing}, for example. In a review of the literature, \cite{yaqub2014attitudes} finds that lack of knowledge is cited less than distrust in public authorities as a reason to be vaccine-hesitant. This is true both among the general public and professionals: in a study of French physicians, \cite{verger2015vaccine} finds that only 50\% of the interviewed trusted public health authorities. They both find a correlation between vaccine hesitancy and the use or practice of alternative medicine.}

Our second main contribution is to show that the effect of homophily on vaccination rates is opposite in the two vaccination models. 
Indeed, if vaccinations are motivated by the risk of infection, an increase in homophily has the effect of increasing risk, hence vaccinations, among anti-vaxxers, and decreasing them among vaxxers (Proposition \ref{prop:groups}). Instead, if vaccinations are motivated by peer pressure the mechanism is the opposite: an increase in homophily increases the peer pressure in the group with more vaccinations (the vaxxer group), hence increasing vaccination among the vaxxers, and decreasing them among the anti-vaxxers (Proposition \ref{prop:peer}). Homophily is the most harmful to vaccinations in a hybrid model in which vaxxers vaccinate according to the risk of infection, while anti-vaxxers according to peer pressure. In such a case homophily unambiguously decreases vaccinations both among vaxxers (because it reduces risk), and among anti-vaxxers (because it decreases the peer pressure).

Moreover, even in the endogenous vaccination models, the basic insight that steady-state and cumulative infection measures may have opposite behavior for the extremes values of homophily survives. In the rational model, this is true when the vaccination cost is high enough (Proposition \ref{prop:endovax}). In the peer-effects based model, it is always true (Proposition \ref{prop:peer}). This happens, essentially, because vaccination rates adjust in opposite directions, hence the additional effect is never too strong.

\bigskip
\subsection*{Related literature}

We contribute to three lines of literature: the literature on epidemics in economics, the literature on contagion and diffusion in networks, and the literature on strategic immunization.
%the analysis of the effects of segregation in epidemiological models, the economics literature on vaccination and its equilibrium effects, and the literature on diffusion of social norms and transmission of cultural traits. 

Our contribution to the literature on epidemics in economics is first to study how homophily impacts infections, and more generally to highlight how different risk and time preferences used to evaluate the welfare impact of an epidemic may give different weights to the steady state and to the cumulative infection. Our cumulative measure of infections can be seen as a reduced form of various utilitarian welfare functions that have been used in the recent literature: the most similar being \cite{rowthorn2012optimal}, \cite{farboodi2021internal}, and \cite{toxvaerd2022management}. Other papers use richer models, studying the tradeoffs between economic activity and deaths (both absent from our model): \cite{acemoglu2021optimal}, \cite{brotherhood2021economic}, \cite{bognanni2020economics}. All of these papers do not consider the effect of the social network. \cite{bisin2021efficient} consider health and economic trade-offs, but do not consider dynamics or homophily. %In the recent wave of covid papers, most author study welfare functions that implicitly incorporate dynamic considerations , but none highlights the different impacts of homophily as we do.
The structure of the cross-country network is considered in \cite{chandrasekhar2021interacting}, that also consider as the objective of the planner to minimize what they call the \enquote{number of infected--person--periods}, a measure analogous to cumulative infection. None of these papers studies the homophily of interactions.

Our contribution to the literature on contagion is to highlight how the effect of homophily of interactions can be radically different when focusing on the cumulative number of infections over time, rather than the steady state. The closest paper, from which we adapt the basic setting, is \cite{galeotti2013strategic}. That paper was the first to study in depth the effect of homophily on infection rates. However, the authors focus on the steady-state level of infections, rather than the dynamic measure of cumulative infection we focus on. 
It is well known that homophily can facilitate the diffusion of a disease, as illustrated, e.g., in
\cite{jackson2013diffusion}. However, they do not study the impact of homophily on the steady-state levels or the dynamic cumulative infections. \cite{izquierdo2018mixing} and \cite{burgio2022homophily} study the steady state and find a non-monotonic effect of homophily similar to our result, but they do not study the dynamic cumulative infection.
%\cite{pananos2017critical} analyze critical transitions in the dynamics of a three equation model including epidemic and infection, but they do not focus on the static-dynamic comparison.
 {A different strand of literature has argued that the whole time evolution of the dynamics is important beyond the steady state, studying departures from the standard mean field approximation, to allow stochastic fluctuations, as in \cite{nakamura2019hamiltonian} and \cite{esen2022generalization}. These papers do not study the effect of homophily.}

Our contribution to the literature on strategic immunization models is to show that the impact of homophily on group-level vaccinations can be opposite if vaccination is motivated by avoiding the risk of infection or by peer pressure.
Our model of vaccinations motivated by infection risk is analogous to \cite{galeotti2013strategic}. %This paper focuses on the steady state and not on the dynamic component, which is our focus. 
However, the endogenous vaccination model in \cite{galeotti2013strategic} generates symmetric vaccination in the two groups, because it assumes a homogeneous vaccination cost, while we use heterogeneous vaccination costs precisely to microfound and study different vaccination rates.
 \cite{goyal2015interaction} studies the interaction between the endogenous level of interaction and vaccinations, again in steady state.
The fact that vaxxers tend to vaccinate less when homophily increases is similar to the risk compensation effect studied in \cite{talamas2020free}, that shows that a partially effective vaccination can decrease welfare. Again, our focus is rather on the static-dynamic trade-offs. \cite{chen2014economics} argues that the market mechanism yields inefficiently low levels of vaccination. None of these papers explores vaccination driven by peer pressure.%At an abstract level, the difference with respect to our setting is that we endogenize the group partition through the diffusion of social norms.
\footnote{%
There is also a recent literature in applied physics that studies models where the diffusion is simultaneous for the disease and for the vaccination choices. On this, see the review of \cite{wang2015coupled}, and the more recent analysis of \cite{alvarez2017epidemic} and \cite{velasquez2017interacting}. 
} 

%The economics of social norms and transmission of cultural traits is a lively field, surveyed by \cite{bisin2011economics}.  Common to this literature is the use of simple, often non-strategic, dynamic models of the evolution of preferences. We adopt this framework, finding it useful despite the differences we discuss later. A paper close to ours is \cite{panebianco2017paternalism}, which considers how social networks affect cultural transmission in a SI-type model, with a more concrete network specification through degree distributions. The literature on segregation in cities and communities has also studied the trade-offs generated by stratification and asymmetric interactions, and the inefficiencies of social separation: cfr \cite{benabou1993workings,benabou1996equity,benabou1996heterogeneity}.

	\bigskip
	
		The paper is organized as follows.
		The next section presents the model.
		Section \ref{sec:mechanical} shows  {the main} results for the mechanical model, when all choices are exogenous.
		Section \ref{sec:vaccination} explores the robustness of the results to endogenous vaccination rates.
  % Section \ref{extensions} explores various generalizations and extensions.
   {In Section \ref{sec:peer} we extend the model by considering the case in which vaccination choices are motivated by peer pressure.} We conclude in Section \ref{sec:conclusion}.
   {In \ref{app:SIR} we study how our setuo applies to the SIR model.
  In \ref{app:optimal} we discuss the optimal vaccination rates.}
Finally, in \ref{app:proofs} we prove the formal results of our paper.
		
% 	Some of the proofs are in the Appendix, some other are calculations that have been performed with the software Mathematica\footnote{Wolfram Research, Inc., Mathematica, Version 11.0, Champaign, IL (2019).}. \href{https://www.dropbox.com/s/p3bmjo2qts99vmk/file_proofs.nb?dl=0}{Here} and \href{https://www.dropbox.com/s/zoeho8qbj25sxg3/Stability.nb?dl=0}{here} you can find the links to the calculations files.

\section{The Model}
	
\label{sec:model}

We consider a simple SIS model with vaccination and with two groups of agents, analogous to the setup in \cite{galeotti2013strategic}. %To understand the main forces at play, we start by taking all the decisions of the agents as exogenous, and we focus on the infection dynamics. In the following sections, we endogenize the choices of the players.

%\subsection{The Epidemic}	

Our society is composed of a continuum of agents of mass $1$, exogenously partitioned into two groups.
%To begin with, in this section this partition is exogenous.
Agents in each group are characterized by their attitude towards vaccination. In details, following a popular terminology, we label the two groups with $a$, for \emph{anti-vaxxers}, and with $v$, for \emph{vaxxers}. Thus, the set of the two groups is $G:=\{a,v\}$, with $g\in G$ being the generic group. 
Let $q\in[0,1]$ denote the fraction of \emph{anti-vaxxers} in the society and $1-q$ the fraction of \emph{vaxxers}.
 %Moreover, also when this does not generate ambiguity, we drop time indexes from the variables.

People in the two groups meet each other with an \emph{homophilous} bias. We model this by assuming that an agent of any of the two groups has a probability $h$ to meet only someone from her own group and a probability $1-h$ to meet someone else randomly drawn from the whole society.\footnote{$h$ is the \emph{inbreeding homophily} index, as defined in \cite{coleman1958relational}, \cite{marsden1987core}, \cite{mcpherson2001birds} and \cite{currarini2009economic}.
It can be interpreted as the amount of time that agents spend interacting with people in their group, while in the remaining time they meet uniformly at random.  
}
This implies that anti-vaxxers meet each other at a rate of $\tilde{q}_a:=h+(1-h)q$, while vaxxers meet each other at a rate of $\tilde{q}_v:=h+(1-h)(1-q)$. Note that $h$ is the same for both groups, but if $q \ne 1/2$ and $h<1$, then $\tilde{q}_a\neq \tilde{q}_v$.

For each $g\in G$, let $x_g\in[0,1]$ denote the fraction of agents in group $g$ that are vaccinated against our generic disease. It is natural to assume, without loss of generality, that $x_a<x_v$, and by now this is actually the only difference characterizing the two groups. The total number of vaccinated (or average vaccination rate) is $x=qx_a+(1-q)x_v$.
We start by taking $x_a$ and $x_v$ as exogenous parameters, and we endogenize them later.  {We are going to assume $x_g\in(0,1)$ to make the problem nontrivial}.
Similarly, $\rho_g$ and $S_g$ denote, respectively, the fraction of infected and susceptible in group $g$. The total number of vaccinated is denoted $\rho=q\rho_a+(1-q)\rho_v$. Whenever there is possible ambiguity, steady-state variables are denoted with an $SS$ apex, so that the total number of infections in the steady state is denoted $\rho^{SS}$. We omit the $SS$ apex whenever the context makes it clear that we are using steady-state values. Let $\mu$ be the recovery rate of the disease, whereas its infectiveness is normalized to $1$. 

We are going to be concerned with the stable steady state of the system above and with stochastic, zero-mean, deviations from the steady state. At time 0, there is the realization of the random variables $d\rho_0=(d\rho_{0,a},d\rho_{0,v})$, measuring such deviation, where $\EE \dd\rho_{0,a}=\EE \dd\rho_{0,v}=0$.
 For simplicity, we assume that the deviation is symmetric across the two groups: $d\rho_{0,a}=d\rho_{0,v}=\overline{d\rho_0}$. This is already sufficient to show the difference between the impact of homophily in the steady state and in the dynamics, which is our goal; so we stick to this simplifying assumption. %We denote the variance of such stochastic deviation as $\sigma^2:=Var(d\rho_0)$.

\subsection{The dynamical system}

Setting the evolution of the epidemic in continuous time, we study the fraction of infected people in each group. % For each $g\in G$, let $\rho^i$ be the share of infected agents in group $i$. 
%Since vaccinated agents cannot get infected, we have $\rho_a\in[0,1-x_a]$ and $\rho_v\in[0,1-x_v]$, respectively. 

The differential equations of the system are given by:
\begin{eqnarray}
\label{system1}
\dot{\rho}^a & =F_a(\rho_a,\rho_v)= & S_a \Big( \tilde{q}_a \rho_a + (1-\tilde{q}_a) \rho_v \Big) - \rho_a \mu;
\nonumber \\
\dot{\rho}^v & =F_v(\rho_a,\rho_v)= & S_v \Big(  \tilde{q}_v\rho_v + (1-\tilde{q}_v) \rho_a \Big) - \rho_v \mu  .
\end{eqnarray}
where $S_g=\big(1-\rho^g-x^g \big)\in[0,1]$ are the fraction of agents who are neither vaccinated, nor infected, and thus susceptible of being infected by other infected agents. Moreover, the shares of infected agents met by anti-vaxxers and vaxxers are given by $\tilde{\rho}_a:=\Big( \tilde{q}_a \rho_a + (1-\tilde{q}_a) \rho_v \Big)$ and by $\tilde{\rho}_v:=\Big(  \tilde{q}_v\rho_v + (1-\tilde{q}_v) \rho_a \Big)$, respectively. Finally, $\rho_a \mu$ and $\rho_v \mu$ are the recovered agents in each group. % Finally, $\mu$ is the share of infected agents who recovered. 
% Let  $T(q_a):=\tilde{q}_a(1-x_a)+\tilde{q}_v(1-x_v)$, $\Delta(q_a):=\sqrt{T^2-4h(1-x_a)(1-x_v)}$, and $\hat{\mu}(q_a)=\frac{1}{2}\left(T(q_a)+\Delta(q_a)\right)\in[0,1]$.	

%We are going to assume that, at the beginning of the epidemic, a fraction of agents is infected, selected at random independently of the group. We can think for example of random encounters with spreaders coming from another country or region. Since the initial infected status is independent of group identity, the initial condition is symmetric: $\rho_{0,a}=d\rho_{0,v}=d\rho_0$.\footnote{An alternative is to think of the different fractions $\rho_{0,a}$ and $d\rho_{0,v}$ each extracted at random from distributions with the same mean $\EE \rho_{0,a}=\EE \dd\rho_{0,v}=d\rho_0$. %This will not change our results because in the following we will focus on the linearization around the steady state, so our expressions will depend linearly on the initial conditions.

 First, in the next proposition we characterize some properties of the steady states, using a classic result in \cite{lajmanovich1976deterministic} (see also Observation 1 in \citealp{galeotti2013strategic}). 

\begin{prop}[{Homophily and endemic disease}]
\label{SS}
The system \eqref{system1} always admits a unique stable steady state. %: $ ( \rho^{a}_1 , \rho^{v}_{1}):=(0,0)$. %and for some values of the parameters an interior one:  $ ( \rho^{a}_2 , \rho^{v}_{2})\in (0,1-x_a) \times (0,1-x_v)$. Then, (i) if  $\mu<\hat{\mu}$, $ ( \rho^{a}_2 , \rho^{v}_{2})$ is stable, while $(0,0)$ is unstable, (ii)  for $\mu>\hat{\mu}$,  $(0,0)$ is stable. The value of $\hat{\mu}$ decreases in $h$.
For each $h\in[0,1]$, there exists a $\hat{\mu}(h)>0$ such that (i) if  $\mu<\hat{\mu}(h)$, the stable steady state is interior: $\rho^{SS}_a>0$, $\rho^{SS}_v>0$, while there is another (unstable) steady state in $(0,0)$, whereas (ii)  if $\mu\ge \hat{\mu}(h)$, the unique steady state is $\rho^{SS}=(0,0)$, and is globally asymptotically stable.
\end{prop}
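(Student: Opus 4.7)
The plan is to cast system \eqref{system1} into the framework of \cite{lajmanovich1976deterministic} for multi-group SIS dynamics and invoke their main theorem. First I would rewrite the right-hand side as
\[
\dot\rho_g = (1-x_g-\rho_g)\sum_{g'} C_{gg'}\rho_{g'} - \mu\rho_g, \qquad g\in\{a,v\},
\]
with contact matrix
\[
C=\begin{pmatrix} \tilde q_a & 1-\tilde q_a\\ 1-\tilde q_v & \tilde q_v\end{pmatrix}.
\]
This is a cooperative (off-diagonal nonnegative) system on the forward-invariant rectangle $\mathcal{D}=[0,1-x_a]\times[0,1-x_v]$, with $(0,0)$ always an equilibrium. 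For $h<1$ the matrix $C$ is irreducible, which will be the key structural property.

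Next, I would compute the Jacobian at $0$, obtaining $J(0)=M-\mu I$, where
\[
M=\begin{pmatrix}(1-x_a)\tilde q_a & (1-x_a)(1-\tilde q_a)\\ (1-x_v)(1-\tilde q_v) & (1-x_v)\tilde q_v\end{pmatrix},
\]
a nonnegative (and, for $h<1$, irreducible) matrix to which Perron--Frobenius applies. I would then define
\[
\hat\mu(h):=\lambda_{\max}(M),
\]
which is strictly positive because $x_a,x_v\in(0,1)$. Linear stability of $(0,0)$ is therefore governed by whether $\mu$ exceeds or falls below $\hat\mu(h)$.

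The core of the argument is then to invoke the Lajmanovich--Yorke theorem (or equivalently Observation~1 in \citealp{galeotti2013strategic}), which states the dichotomy for exactly this class of cooperative SIS systems: (i) if $\mu\geq \hat\mu(h)$, then $(0,0)$ is the unique equilibrium in $\mathcal{D}$ and is globally asymptotically stable; (ii) if $\mu<\hat\mu(h)$, then $(0,0)$ is unstable and there exists a unique componentwise positive equilibrium $\rho^{SS}\gg 0$ which attracts every trajectory starting from $\mathcal{D}\setminus\{(0,0)\}$. I would complement this by ruling out mixed boundary equilibria by hand: if $\rho_a=0$ and $\rho_v>0$, then $\dot\rho_a=(1-x_a)(1-\tilde q_a)\rho_v>0$ whenever $h<1$, so no such steady state exists; the symmetric argument rules out the other mixed case. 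For the degenerate case $h=1$ the two groups are decoupled and the claim reduces to the standard one-group SIS result applied twice, with $\hat\mu(1)=\max\{(1-x_a),(1-x_v)\}=1-x_a$.

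The main obstacle I anticipate is not existence or local stability, which are immediate from Perron--Frobenius, but rigorously importing the \emph{global} stability and \emph{uniqueness} of the interior equilibrium in the subcritical regime. For this I would rely entirely on the Lajmanovich--Yorke theorem, whose proof uses the cooperativity and concavity of the vector field in $\rho$ (the map $\rho\mapsto (1-x_g-\rho_g)(C\rho)_g$ is concave in each coordinate) together with irreducibility of $M$. Provided I verify these structural conditions for our two-group system, the proposition follows directly, and $\hat\mu(h)$ is explicitly identified as the spectral radius of $M$, a representation that will also be useful later when studying comparative statics in $h$.
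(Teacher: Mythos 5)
Your proposal follows essentially the same route as the paper: compute the Jacobian of \eqref{system1} at the origin, identify the threshold $\hat\mu(h)$ with the largest eigenvalue of the matrix $M$ (the paper writes it explicitly as $\tfrac{1}{2}(T+\Delta)$, which is exactly $\lambda_{\max}(M)$), and invoke Theorem 3.1 of \cite{lajmanovich1976deterministic} for the uniqueness/global-stability dichotomy. Your added checks (irreducibility for $h<1$, exclusion of mixed boundary equilibria, the decoupled $h=1$ case) are compatible refinements rather than a different argument, so the proof is correct and matches the paper's.
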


%This result is obtained in the standard way, by setting to zero the two right--hand side parts of the system in \eqref{system1} and solving for $\rho_a$ and $\rho_v$.
The formal passages of the proof are in \ref{app:proofs}, as those of the other results that follow. In all the paper from now on, we focus on the interior steady state, and to ensure that this exists for all $h$ we assume $\mu<\min_h\hat{\mu}(h)=1-x$, denoting $x=qx_a+(1-q)x_v$ the total vaccination rate.

% {In the appendix of the working paper version of this paper we analyze the model when the disease is not endemic but has a zero steady state.}

%In the main body of the paper, we consider the case in which $\mu<\hat{\mu}(h)$, and show the results concerning the interior steady state. This might be the setting more apt to describe the recent behavior of COVID-19, with a consistent number of baseline cases (the interior steady state), and occasional outbreaks.\footnote{The case in which  $\mu>\hat{\mu}(h)$ is consistent with diseases that are not endemic but show themselves in episodic or seasonal waves.
%For those diseases, society lays for most of its time in a steady state where no one is infected. However, exogenous shocks increase the number of infected people temporarily. Eventually, the disease dies out, as it happens, for example, for the seasonal outbreaks of flu.

For analytical tractability, in the following we will approximate the dynamics of outbreaks away from the steady state with the linearized dynamics of the deviation from the steady state $\dd\rho_{i,t}=\rho_{i,t}-\rho_i^{SS}$, for $i=a,v$.

\begin{defi}[Dynamic]

We define the functions $\dd\rho_{a,t}$, $\dd\rho_{v,t}$ as the time evolutions that satisfy: 
\begin{align}
    \begin{pmatrix}
    \dot{\dd\rho}_{a,t}\\
    \dot{\dd\rho}_{v,t}
    \end{pmatrix}&=J\left(\begin{array}{c}
        \dd\rho_{t,a}  \\
        \dd{\rho}_{t,v} 
    \end{array}\right)\ ,\  \quad \dd{\rho}_0=\left(\begin{array}{c}
        \dd \rho_{0,a}  \\
        \dd\rho_{0,v }
    \end{array}\right),
\end{align}
where 
\[
J=\left(\begin{array}{cc}
-\tilde{\rho}^{SS}_a-\mu+S^{SS}_a\tilde{q}_a     &  (1-S^{SS}_a)\tilde{q}_a \\
 (1-S_v)\tilde{q}_v    & -\tilde{\rho}^{SS}_v-\mu+S^{SS}_v\tilde{q}_v
\end{array}\right)
\]
is the Jacobian matrix of \eqref{system1} calculated in the steady state, and $(\dd\rho_{0,a},\dd\rho_{0,v})'$ is the initial magnitude of the outbreak. 
\label{approximation}

Moreover, we denote: $\dd\rho_t=q\dd\rho_{a,t}+(1-q)\dd\rho_{v,t}$.
\end{defi}

If $\mu < \hat{\mu}$, the steady state is stable, implying that $J$ has negative diagonal elements. We denote the determinant of $J$ by $|J|$ and note that it is positive. Additionally, $\hat{\mu}(h)$ (explicitly derived in \ref{app:proofs}) is increasing in $h$. This highlights the first important role of $h$ in comparative statics: if $h$ increases, a disease that was previously non-endemic (because $\mu > \hat{\mu}(h)$) might become endemic as $\hat{\mu}(h)$ increases with $h$, reversing the inequality. Indeed, higher homophily counterbalances the negative effect of the recovery rate $\mu$ on the epidemic outbreak (see also the discussion in \citealp{jackson2013diffusion}).

\subsection{Cumulative infection}

%The main focus of our interest is to see what is the welfare loss due to the epidemic, how this depends on the policy parameter $h$ and on the measure of welfare adopted by the policy maker. In the steady state, we can adopt the total number of infected people $\rho=q\rho_a+(1-q)\rho_v$ as a reasonable object the policy maker is interested in. When we consider outbreaks that bring the population away from the steady state, we have to measure the welfare loss by the total number of infected people over time, that is the \textbf{cumulative infection}.

 {Suppose that, instead of the number of infected in the steady state, we care about the total number of infected--person--periods generated by the epidemic, as termed by \citealp{chandrasekhar2021interacting}. For example, if each infection has a cost to the health-care system, this aggregate number would correspond to the total monetary cost of healthcare during an epidemic.}

In this section, we precisely define the concept.
A key simplification is that the dynamics $\dd\rho_{t,a}$ and $\dd\rho_{t,v}$ are, by construction, linear in $\dd\rho_{0}$. Therefore, the total cumulative number of infections over time is also linear in $\dd\rho_{0}$.

\begin{defi}[Cumulative infection]

Define $\tilde{CI}$ as the (normalized) cumulative number of infections due to a deviation from the steady state of size $\dd\rho_0=(\overline{d\rho_0},\overline{d\rho_0})$, discounted with discount rate $r$:
    \begin{align*}
%\tilde{CI}_a&:=r \int_0^{\infty}e^{-r t}(\rho_a^SS+\dd{\rho}_a(t))d t,  \nonumber\\
%\tilde{CI}_v&:=r \int_0^{\infty}e^{-r t}(\rho_v^SS+\dd{\rho}_v(t))d t,  \\
\tilde{CI}&:=r\left(q \int_0^{\infty}e^{-r t}(\rho_a^{SS}+\dd{\rho}_{a,t})\dd t+(1-q)\int_0^{\infty}e^{-r t}(\rho_v^{SS}+\dd{\rho}_{v,t})\dd t\right) \nonumber \\
&=\rho^{SS}+r\int_0^{\infty}e^{-r t}\dd{\rho}_t\dd t \label{outbreak}
\end{align*}
Moreover, thanks to the linearity of $\dd\rho_t$, define $CI$ as the coefficient such that:
\begin{align}
%\tilde{CI}_a&=\rho_a^SS+CI_a\dd\rho_0 \nonumber\\
%\tilde{CI}_v&=\rho_v^SS+CI_v\dd\rho_0 \nonumber\\ 
\tilde{CI}(\overline{d\rho_0})&=\rho^{SS}+CI\overline{d\rho_0} 
\end{align}   

\end{defi}

 {Since the number of infected in the steady state and the cumulative number of infections due to the deviation from the steady state are distinct quantities, both potentially of interest for policymakers,
in the following we separately analyze the impact of homophily on the two measures: the steady state infection level $\rho$, and the cumulative infection $CI$. The idea is that both these statistics may be relevant, depending on the problem studied, but homophily has a very different impact on each.}

\section{Steady state vs Cumulative infection}
\label{sec:mechanical}

In this section we start analyzing the pure epidemic part of the model, taking the vaccination rates $x_a$ and $x_v$ as exogenous.
Remember that, in this case, the only difference between the two groups is that $x_a<x_v$. In the following, we drop the superscript $SS$ from steady state variables whenever there is no ambiguity, for ease of notation.

%The result simply follows from comparing the explicit expressions for $CI^a$ and $CI^v$ (we derive this in Lemma \ref{cuminf} in the Appendix \ref{app:proofs}). 

%Inequality \eqref{CIcomparison} underlines the roles of the parameters in determining the welfare of the groups. The left--hand side is increasing in $x_v$ and decreasing in $x_a$: the gap in vaccinations tends to penalize the less vaccinated group. Since the cumulative infection is an intertemporal measure, the initial conditions also concur in determining which group is better off: the difference is increasing in $\rho_a_0$ and decreasing in $\rho_v_0$.\footnote{Because the stability assumptions imply $-1+x_v+\mu>0$ and $-1+x_a+\mu>0$.} $\mu$ regulates the importance of this effect in the discrepancy of initial conditions: the larger $\mu$, the shorter the epidemic, the larger the importance of the initial conditions. In particular, we have:
% \begin{enumerate}[i)]
%     \item if the outbreak starts among vaxxers, vaxxers have a larger cumulative infection;
%     \item if the outbreak starts among anti--vaxxers, anti--vaxxers have a larger cumulative infection;
% \item if the outbreak starts symmetrically in both groups, the group with less vaccinated (anti--vaxxers, under our assumptions) has the largest cumulative infection.
% \end{enumerate}

\subsection{Homophily in steady state}

First, we explore what is the effect of homophily in the steady state. Homophily has the effect of increasing the social contacts among agents of the same group: hence, an increase in homophily $h$ has the effect of increasing the amount of not vaccinated people that anti-vaxxers interact with, with the result of increasing the steady-state infection level. The opposite effect is true for the vaxxers. What is the balance of these effects? The next proposition answers.

\begin{prop}[Homophily in the steady state]

In the interior steady state the total infection $\rho$  is increasing if homophily is small enough, and decreasing if homophily is high enough.\footnote{For completeness, we can show it has only one maximum under the assumption that $\mu<(1-x)^2/(1-x_a)$.}

\begin{comment}

Moreover:
\begin{enumerate}

\item Infection among the anti-vaxxers $\rho_a$ is increasing in homophily;

    \item Infection among the vaxxers $\rho_v$ is always decreasing in homophily if $\mu<(1-x)^2/(1-x_a)$. Otherwise, is hump-shaped: decreasing if $h$ is large enough, increasing otherwise;
    
 %   \item Infection in all groups is decreasing in $x_a$, $x_v$.

\end{enumerate}

%Moreover, $\tilde{\rho}_a$ is also increasing in $h$, and $\tilde{\rho}_v$ is increasing in $h$ whenever $\rho_v$ is.
\end{comment}
\label{Steady}

\end{prop}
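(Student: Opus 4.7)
My plan is to apply the implicit function theorem to the steady-state equations $F_a=F_v=0$ to obtain smooth branches $\rho_a(h),\rho_v(h)$ on $[0,1]$, and then to compute the sign of $\tfrac{d\rho}{dh}$ at the two endpoints $h=0$ and $h=1$. Together with continuity of $\rho'(h)$, this yields the ``increasing for small $h$, decreasing for large $h$'' behavior claimed in the proposition.

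A useful preliminary observation is that the aggregate infection coincides at the two boundaries. At $h=0$ both groups perceive the same infection field $\tilde\rho_a=\tilde\rho_v=\rho$, so the balance equation gives $\rho_g(0) = (1-x_g)\rho/(\mu+\rho)$; aggregating over $g$ yields $\rho(0)=1-x-\mu$. At $h=1$ the groups are perfectly isolated and $\rho_g(1) = 1 - x_g - \mu$, giving the same aggregate $\rho(1)=1-x-\mu$. This boundary symmetry already tells us that any non-constant behavior of $\rho(h)$ must be non-monotone. For the group-level ordering $\rho_a(h)>\rho_v(h)$ along the whole curve, which I will use below, the identity $\tilde\rho_a-\tilde\rho_v = h(\rho_a-\rho_v)$ (obtained from $\tilde q_a+\tilde q_v=1+h$) together with the boundary inequalities $\rho_a(0)>\rho_v(0)$, $\rho_a(1)>\rho_v(1)$ and continuity rules out any interior crossing.

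Existence of smooth branches follows from Proposition~\ref{SS}: stability of the interior steady state gives $|J|>0$, so the IFT applies. Differentiating $F_g(\rho_a(h),\rho_v(h);h)=0$ and using $\partial_h\tilde q_a=1-q$, $\partial_h\tilde q_v=q$ gives
\[
\begin{pmatrix}\rho_a'(h)\\ \rho_v'(h)\end{pmatrix}
= -J^{-1}\begin{pmatrix} S_a(1-q)(\rho_a-\rho_v)\\ -S_v q(\rho_a-\rho_v)\end{pmatrix}.
\]
At $h=1$ the Jacobian is diagonal because $\tilde q_g=1$ kills the off-diagonal entries $S_g(1-\tilde q_g)$, so the calculation collapses to
\[
\frac{d\rho}{dh}\bigg|_{h=1} = q(1-q)\mu(x_v-x_a)\,\frac{\rho_v(1)-\rho_a(1)}{\rho_a(1)\rho_v(1)} < 0,
\]
using $S_g(1)=\mu$ and $\rho_v(1)<\rho_a(1)$.

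At $h=0$ the same implicit-function formula must be evaluated with the full non-diagonal Jacobian, with $S_g(0)=(1-x_g)\mu/(1-x)$ and $\tilde q_a=q$, $\tilde q_v=1-q$. I would apply Cramer's rule to $q\rho_a'(0)+(1-q)\rho_v'(0)$, factor the common $(x_v-x_a)$ out of both components of $\partial_h F$, and simplify the resulting numerator using the steady-state identity $\mu+\rho(0)=1-x$. The expected outcome is a strictly positive multiple of $(x_v-x_a)$, yielding $\tfrac{d\rho}{dh}\big|_{h=0}>0$. This algebraic verification is the main obstacle of the argument, since unlike at $h=1$ the Jacobian entries mix non-trivially, and the expected sign encodes the fact (discussed heuristically in the introduction) that for small $h$ the positive \emph{contagion effect} dominates the \emph{size effect}, which is symmetric across the two groups. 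Once both endpoint signs are in hand, continuity of $\rho'(h)$ provides a right-neighborhood of $0$ on which $\rho'>0$ and a left-neighborhood of $1$ on which $\rho'<0$, giving exactly the claim of the proposition.
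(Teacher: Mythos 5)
Your strategy is exactly the paper's: apply the implicit function theorem to $F_a=F_v=0$, evaluate the sign of $q\rho_a'(h)+(1-q)\rho_v'(h)$ at $h=0$ and $h=1$, and conclude near the endpoints by continuity. Your forcing terms $\partial_h F_a=S_a(1-q)\Delta\rho$, $\partial_h F_v=-S_vq\Delta\rho$ and your $h=1$ computation (diagonal Jacobian, $S_g=\mu$, $A=-\rho_a$, $D=-\rho_v$, giving $\partial_h\rho\propto-\mu\Delta\rho<0$) coincide with the paper's. The problem is that the step you yourself identify as ``the main obstacle'' --- the sign at $h=0$ --- is never actually carried out: you only state an ``expected outcome.'' Since the whole content of the proposition for small $h$ rests on that sign, this is a genuine gap, not a routine verification to be waved through, and the heuristic appeal to the contagion effect dominating cannot substitute for it.

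The gap closes quickly along the route the paper takes, and you already have every ingredient. Cramer's rule at general $h$ gives
\[
\partial_h\rho=\frac{q(1-q)\Delta\rho}{|J|}\left(S_a(\tilde\rho_v+\mu)-S_v(\tilde\rho_a+\mu)\right),
\]
and at $h=0$ one has $\tilde\rho_a=\tilde\rho_v=\rho$, so the bracket factors as $(S_a-S_v)(\mu+\rho)$. Positivity then follows from $S_a>S_v$, equivalently $\Delta x>\Delta\rho$, which is immediate from the explicit $h=0$ solution you wrote in your preliminary observation: $\rho_g(0)=(1-x_g)\rho/(\mu+\rho)$ gives $\Delta\rho=\Delta x\,\rho/(\mu+\rho)<\Delta x$ (the paper proves $S_a>S_v$, $\rho_a>\rho_v$ at general $h$ in a separate lemma by a ratio/contradiction argument). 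As a side remark, your claim that the identity $\tilde\rho_a-\tilde\rho_v=h(\rho_a-\rho_v)$ plus continuity rules out an interior crossing of $\rho_a$ and $\rho_v$ is not complete as stated --- to exclude a crossing you must also invoke the steady-state equations (a crossing would force $S_a=S_v$ and hence $x_a=x_v$) --- but this ordering along the whole curve is not actually needed for your endpoint argument, so it is harmless.
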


%Note first that homophily $h$ has a \emph{redistributive} effect: it protects the group with more vaccinated and, in our case, group $v$. As members of group $v$ are less likely to meet members of group $a$, their risk of infection decreases, so (since so far we maintain $x_v$ exogenously fixed) their infection level decreases. The symmetric happens for members of group $v$.

The intuition for the results on group-level infection rates is the following.
An increase in homophily has a direct effect, due to the change in the meeting rates across groups; and an indirect effect, due to the effect that a change in the steady states have. The direct effects are caused by the homophily changing the probability of infection:
\begin{align}
    \partial_h \tilde{\rho}_a&=(1-q)\Delta \rho \nonumber;\\
    \partial_h \tilde{\rho}_v&=-q\Delta \rho.
    \label{direct}
\end{align}
They have opposite signs: anti-vaxxers meet more frequently anti-vaxxers hence, ceteris paribus, their probability of infection goes up. For vaxxers the opposite happens.

The indirect effects are due to the impact that each infection level has on the dynamic increments $\dot{\rho}_a=F_a(\rho_a,\rho_v)$ and $\dot{\rho}_v=F_v(\rho_a,\rho_v)$. They can be decomposed as such:
\begin{align}
\dd F_a=&\partial_{\rho_a}(S_a\tilde{\rho}_a)\dd\rho_a+\partial_{\rho_v}(S_a\tilde{\rho}_a)\dd\rho_v \nonumber\\
=&(\underbrace{-\mu}_{\substack{\text{recovery}\\\text{effect}}}-\underbrace{\tilde{\rho_{a}}}_{\substack{\text{size}\\\text{effect}}}+\underbrace{\tilde{q}S_{a}}_{\substack{\text{contagion}\\\text{effect}}})d\rho_{a}+\underbrace{(1-\tilde{q})S_{a}}_{\substack{\text{contagion}\\\text{effect}}}d\rho_{v}  \nonumber;  \\
\dd F_v=&(\underbrace{-\mu}_{\text{recovery}}-\underbrace{\tilde{\rho_{v}}}_{\text{size}}+\underbrace{\tilde{q}_vS_{v}}_{\text{contagion}})d\rho_{a}+\underbrace{(1-\tilde{q}_v)S_{v}}_{\text{contagion}}d\rho_{v}.
\label{eq:effects}
\end{align}

For example, for group $a$, an increase in the steady-state level $\rho_a$ generates, for group $a$: (i) an increase of recovery, (ii) a decrease in the pool of susceptible agents (size effect), and (iii) a increase in the probability of infection (contagion effect). The recovery effect is constant, and symmetric across groups. Since the increment in infection comes from a product of the amount of susceptible agents and of the probability of infection, each of these two effects are respectively proportional to the level of the other (via the Leibniz differentiation rule). The size effect, that is the reduction of the pool of susceptible agents, is proportional to the infection probability $\tilde{\rho}_a$: hence it is stronger for anti-vaxxers. The size effect is always negative. Finally, there is the contagion effect, due to the increase in the probability of meeting an infected person. This is positive, and its magnitude depends on $q$, but for $q=1/2$ it is proportional to the share of susceptible agents, and so the effect is once again stronger for anti-vaxxers group. Considering group $a$, the recovery and the size effect are only present for a variation of the own steady state level $\rho_a$, while the contagion effect is present both for the own steady state $\rho_a$, and for a variation in the steady state of the other group $\rho_v$.

The indirect effects always have the opposite sign compared to the direct effects, making the overall balance of uncertain sign. The results above indicate that the indirect effects are never strong enough to counterbalance the direct effect in group $a$, so $\rho_a$ always increases with $\rho_{0,a}$. In group $v$, however, the derivative can take both signs: it is negative if the direct effect prevails, and positive otherwise. Since cross-group contagion is part of the indirect effect, the direct effect prevails and $\rho_v$ decreases with $\rho_{0,a}$ when $h$ is large, meaning the two groups are almost separated. Conversely, when $h$ is small, the indirect effect may be stronger than the direct one, causing $\rho_v$ to increase with $\rho_{0,a}$. This occurs when the recovery rate $\mu$ is large enough, making the contagion effect a more significant driver of infection.\footnote{Notice that this reasoning holds only when both infection rates do not reach corner solutions: if for example $x_v=1$, so that all the vaxxers are vaccinated, then $\rho_v=0$ is a constant and does not change; in this case the only relevant derivative is:%\footnote{Cfr Section \ref{imperfect} for a comparison with the case of imperfectly effective vaccination.}
	\[
	\partial_{h}\rho_{a}=\partial_{h}\rho=-\frac{S_{a}(1-q)\Delta\rho}{J_{11}}>0
	\]
}

Note that also the effect of homophily on total infections stems from a balance of such direct and indirect effects. 
The expressions for the derivative in \eqref{eq:effects} reflect the fact that the variations of the steady state are a combination of the direct effects from \eqref{direct}, weighted by the responses of the dynamics to a variation in the steady state, in such a way to leave the dynamics at rest. %In particular, the derivative of $\rho_a$ in (\ref{rhoa}) ends up depending on $\tilde{\rho}_v$ because the adjustment necessary for the dynamics to be at rest is such that the change in steady state is a combination of the row of the Jacobian $J$ relative to the other group (this is just by the implicit function theorem). 

Summing up, the sign of $\partial_h \rho$ is determined by
\[
S_a(\tilde{\rho}_v+\mu)-S_v(\tilde{\rho}_a+\mu)
\]
which represents the balance of the strengths of the contagion effects (whose magnitude are proportional to $S_a$ and $S_v$) and of the size effects (whose magnitude are $\tilde{\rho}_a$ and $\tilde{\rho}_v$).

\begin{comment}
The fraction of susceptibles describes the direct effect, where an increase in $h$ increases the infections among $a$ by an amount proportional to the number of susceptibles, and decreases them in group $v$ following the same reasoning. $\tilde{\rho}_v$ describes the indirect effect on infections in group $a$ due to the fact that if homophily increases infections among antivaxxers, in turn vaxxers, are also more exposed to infection (unless groups are totally separated), and this increases further the infections among antivaxxers. Now, the direct effect is stronger in the antivax group since $S_a>S_v$, but the indirect effect is stronger in the vax group since $\tilde{\rho}_a>\tilde{\rho}_v$, precisely because it depends on the infections in the other group.
\end{comment}

When homophily $h$ is low, the infection probabilities are the same $\tilde{\rho}_a\sim \tilde{\rho}_v$, hence the size effects, that are proportional to them, do not matter: the contagion effect, which is positive, dominates and so infections increase in $h$.
%the difference in the direct effects are stronger, because the probability of infection is similar across groups ($\tilde{\rho}_a=\tilde{\rho}_v=\rho$ if $h\to0$), and so homophily increases total infections; 
Instead, when homophily is high, the amount of susceptible agents are the same in the two groups, $S_a,S_v\to \mu$, hence the contagion effect does not matter, and the result is determined by the size effect, which is negative: hence homophily decreases total infections.\footnote{One might wonder why the population size $q$ has little effect on the result. Note that the marginal changes in the probability of infection (and hence both $\partial_h\rho_a$ and $\partial_h\rho_v$) depend on the fraction of the population in the \emph{other} group: this is the amount of the change in people met for a unit increase in $h$.
The consequence is that when computing the total infection, the population fractions can be collected, because each term is multiplied by $q(1-q)$, and does not matter anymore.}

%{\color{red} [LA metterei in una appendix non so come intitolata pero']}

\subsection{Cumulative infection}
\label{subsec:cum}

We now analyze how results are affected once we explicitly model the infection dynamic. First, we can note that cumulative infection is closely related to Bonacich centrality:
\[
\begin{pmatrix}
    \tilde{CI}_a\\
    \tilde{CI}_v
\end{pmatrix}=r\int_0^{\infty}e^{(-rI+J)t}\dd\rho_0\dd t=(I-1/rJ)^{-1}\dd\rho_0.
\]
We can see that the vector of cumulative infections in the two groups is equal to the Bonacich centrality  {of each group} in the weighted network defined by the Jacobian matrix $J$.
This expression is going to be useful to make calculations with cumulative infection.
The intuition can be better grasped considering the associated discrete time dynamics, that satisfies: 
\begin{equation}
\begin{pmatrix}
    \dd\rho_{a,t+1}\\
    \dd\rho_{v,t+1}
\end{pmatrix}=J\begin{pmatrix}
   d\rho_{a,t}\\
   d\rho_{v,t}
\end{pmatrix}.
\label{discrete}
\end{equation}
In such a case, the cumulative infection is simply \[
\sum_t r^{-t} J^t \dd\rho_0=(I-1/rJ)^{-1}\dd\rho_0.
\]Each step in the time iteration adds a number of infections proportional to the direct and indirect connections in the weighted connection network defined above up to step $t$. The sum of all the total direct and indirect discounted connections amounts to the total cumulative infection over time, and is equal to the Bonacich centrality. %Then, it is just a matter of calculation to show that:
The continuous time result can be obtained for the step size going to zero.

The next Proposition is the main result of this section, showing that %\deleted{especially for $h$ high,} 
$CI$ has opposite behavior with respect to $\rho^{SS}$.

\begin{prop}

%Assume the initial condition is symmetric $\rho_{0,a}=d\rho_{0,v}$.
The impact of homophily $h$ on $CI$, $\dd_h CI$, is positive when $h$ is low enough, and negative when $h$ is high enough. 

\begin{comment}
\begin{enumerate}

\item The direct effect $\partial_hCI$ is positive if $h$ high enough, while, for $h\to 0$, $\partial_hCI$ is positive if and only if $\mu>\frac{1-x}{2}$;

\item the indirect effect is equal to $-\frac{1}{det J}\partial_h\rho$.

\item  the total effect $\dd_h CI$ is positive when $h$ is low enough, and negative when $h$ is high enough.

\end{enumerate}
\end{comment}

\label{Cumulative}
\end{prop}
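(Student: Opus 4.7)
The plan is to start from the Bonacich-type representation implicit in Section \ref{subsec:cum}: writing $M := rI - J$, the cumulative infection satisfies
\[
CI = r\,(q,\,1-q)\,M^{-1}\,(1,1)^{\top}.
\]
Differentiating with the matrix identity $\partial_h M^{-1} = M^{-1}\,(\partial_h J)\,M^{-1}$ gives
\[
\partial_h CI = r\,(q,\,1-q)\,M^{-1}\,(\partial_h J)\,M^{-1}\,(1,1)^{\top},
\]
so the whole analysis reduces to understanding $\partial_h J$ together with a $2 \times 2$ sandwich.

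Next I would split $\partial_h J$ into its direct contribution, coming from $\partial_h \tilde q_a = 1-q$ and $\partial_h \tilde q_v = q$ with the steady states held fixed, and its indirect contribution, coming from $\partial_{\rho_a} J\cdot \partial_h \rho_a + \partial_{\rho_v}J\cdot \partial_h \rho_v$. The steady-state responses $\partial_h \rho_g$ are pinned down by the implicit function theorem applied to $F_a=F_v=0$, so they can be written in terms of the Jacobian entries already appearing in $M$. A key simplification is that after pre- and post-multiplying by $M^{-1}$ and expanding via the $2\times 2$ adjugate, all of the steady-state response collapses into a scalar multiple of $\partial_h \rho = q\,\partial_h\rho_a+(1-q)\,\partial_h\rho_v$. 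This should reproduce the clean decomposition anticipated by the outline: an indirect piece equal to $-\,|J|^{-1}\,\partial_h \rho$ plus a direct piece $\partial_h^{\mathrm{dir}}CI$ that depends only on the current values of $\tilde q_g$, $\tilde \rho_g$, $S_g$, $\mu$ and $r$. Intuitively, the indirect piece records the fact that a larger steady-state infection shrinks the pool of susceptibles available to be mobilized by an outbreak, so it has the opposite sign to $\partial_h \rho$, justifying the opposite behavior of $CI$ versus $\rho^{SS}$ once the direct piece is controlled.

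With this decomposition in hand I would analyze the two extreme regimes separately. For $h\to 0$ the meeting rates converge to $\tilde q_a = q$, $\tilde q_v = 1-q$ and both groups face the same infection probability $\tilde\rho_a = \tilde\rho_v = \rho$; this symmetrizes the entries of $M$ and reduces $\partial_h^{\mathrm{dir}}CI$ to an explicit function of $\mu$ and $x=qx_a+(1-q)x_v$. Combining its sign with the sign of $\partial_h \rho$ near $h=0$ provided by Proposition~\ref{Steady} yields the claim in that regime. For $h\to 1$ the off-diagonal coupling of $J$ vanishes and $M$ becomes nearly block-diagonal, so $(1,1)^{\top}$ is close to a right-eigenvector and the direct piece again admits a transparent closed form in terms of the two decoupled steady states. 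Combined with the opposite sign of $\partial_h \rho$ near $h=1$ from Proposition~\ref{Steady}, this delivers the opposite sign of $\partial_h CI$, completing the proof.

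The main obstacle I expect is the bookkeeping of the $2\times 2$ matrix calculus: expanding $M^{-1}(\partial_h J)M^{-1}$ with all four entries of $\partial_h J$ and all chain-rule pieces inside each entry produces many terms of mixed sign. The crucial trick will be to apply Jacobi's formula $\partial_h |M| = \mathrm{tr}\bigl(\mathrm{adj}(M)\,\partial_h J\bigr)$ so that the derivative of the determinant cancels one of the numerator contributions; this is exactly what collapses the indirect part into the clean factor $-\,|J|^{-1}\,\partial_h \rho$ and, together with the cross-term vanishings at $h=0$ and $h=1$, makes the sign of $\partial_h CI$ at each extreme readable off the sign of $\partial_h \rho$ already established in Proposition~\ref{Steady}.
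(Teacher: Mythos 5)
Your plan follows the same architecture as the paper's proof: the resolvent representation $CI \propto (q,1-q)(rI-J)^{-1}d\rho_0$, the identity $\partial_h (rI-J)^{-1}=(rI-J)^{-1}(\partial_h J)(rI-J)^{-1}$, the split of $d_hCI$ into a direct effect (through $\tilde q_a,\tilde q_v$) and an indirect effect (through $\partial_h\rho_a,\partial_h\rho_v$), and the evaluation at the two extremes $h\to0$, $h\to1$ using Proposition \ref{Steady}. However, there are two concrete gaps. First, the claim that the indirect effect collapses \emph{exactly} to $-|J|^{-1}\partial_h\rho$ (via Jacobi's formula) is not justified and is not what actually happens except in the limit $h\to 0$: there one has $(\partial_{\rho_a}CI,\partial_{\rho_v}CI)\propto -(q,1-q)$ because the two group-level derivatives share the common denominator $(1-x-\mu+r)^2$, so the indirect effect is a positive multiple of $-\partial_h\rho$. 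At $h\to 1$ the denominators are $(1-x_a-\mu+r)^2$ and $(1-x_v-\mu+r)^2$, so no such proportionality holds; the paper instead \emph{bounds} the indirect effect below by $-\tfrac{2d\rho_0}{(1-x_a-\mu+r)^2}\partial_h\rho>0$, using $\partial_h\rho_v<0$ and the ordering of the denominators. Your proof needs either this bounding step or an equivalent argument; the "clean factor $-|J|^{-1}\partial_h\rho$" does not exist there.

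Second, and more importantly, at $h\to 0$ a pure sign-combination argument fails. The direct effect at $h=0$ is positive precisely when $\mu>\tfrac{1-x}{2}$, while the indirect effect is negative (since $\partial_h\rho>0$ there). In that parameter region the two pieces have opposite signs, so the sign of $d_hCI$ is \emph{not} "readable off the sign of $\partial_h\rho$"; you must compare magnitudes. The paper does this by explicitly summing the two closed-form expressions at $h=0$, obtaining a quantity proportional to $-2r+2\mu-3(1-x)$, which is negative because $\mu<1-x$; only then does the indirect effect provably dominate. Your write-up should include this explicit comparison (or an equivalent estimate) for the case $\mu>\tfrac{1-x}{2}$; otherwise the conclusion at small $h$ is unsupported for part of the admissible parameter range.
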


What is the reason for this discrepancy?  {The total effect can be decomposed into a direct effect of $h$ on the dynamics, and an indirect effect, due to $h$ affecting also the steady-state levels:}
\[
\dd_h CI=\underbrace{\partial_hCI}_{\text{direct effect}}+\underbrace{\partial_{\rho_{a}}CI\partial_h\rho_a+\partial_{\rho_{v}}CI\partial_h\rho_v}_{\text{indirect effect}}.
\]
 {In the proof of the above Proposition, we clarify that the direct effect $\partial_hCI$ is positive if $h$ high enough, and the indirect effect has a sign that is exactly opposite to the sign of the effect on the steady state in Proposition \ref{Steady}.} 
In the following paragraphs we try to give intuitions for both.

\paragraph{Intuition: direct effect}
\label{direct_effect}
To better understand the intuition behind the direct effect of $h$ on the cumulative infection, we turn again to the approximate discrete dynamics (\ref{discrete}). Let us analyze a simple two-step discrete version of the dynamics. At $t=1$ we have:
\begin{align*}
d\rho_{a,1}&=(-\tilde{\rho_{a}}-\mu+\tilde{q}_aS_{a})\overline{d\rho_0}+(1-\tilde{q}_a)S_{a}\overline{d\rho_0},\\
 \dd\rho_{a,1}&=(-\tilde{\rho_v}-\mu+\tilde{q}_vS_{a})\overline{d\rho_0}+(1-\tilde{q}_v)S_{v}\overline{d\rho_0}.
\end{align*}
The gap with total infection at the steady state is:
\begin{align*}
d\rho_{1}%&q(-\tilde{\rho}_{a}-\mu+S_{a}\tilde{q_{a}}+S_{a}(1-\tilde{q_{a}}))d\rho_{0}+(1-q)(-\tilde{\rho}_{v}-\mu+S_{v}\tilde{q_{v}}+S_{v}(1-\tilde{q_{v}}))d\rho_{0}  \\  
=&(-\rho^{SS}-\mu+qS_{a}+(1-q)S_{v})\overline{d\rho_0}=(-2\rho^{SS}-\mu+1-x)\overline{d\rho_0},
\end{align*}
and hence we can see that this is \emph{independent} of homophily $h$. The fact that the two groups have an identical initial deviation $\overline{d\rho_0}$ means that only the average effects matter: the average of the contagion effect terms is equal to the average (total) number of susceptible agents, while the average size effect is equal to the total number of infections. Hence, after one period, only population-level statistics matter.

However, the two deviations $d\rho_{a,1}$ and $d\rho_{v,1}$ are not identical. After one period, the gap between groups' infections $d\rho_{1,a}-d\rho_{1,v}=\Delta \dd\rho$ is:
\[
\Delta \dd\rho_1=(\Delta S-h\Delta\rho^{SS})\overline{d\rho_{0}},
\]
which once again, depends on the size and contagion effects previously discussed. Similarly to the derivatives in \eqref{Steady}, both effects are stronger for anti-vaxxers, so if $h$ is low, since the size effect is symmetric, the contagion effect dominates and the gap is positive; the opposite happens when $h$ is high. These effects are analogous to the effects driving the impact of homophily on steady state infections.

To compute total infections at period 2, we can decompose the new delta infection rates as deviations from the average number of infected agents: $d\rho_{a,1}=d\rho_{1}+(1-q)\Delta \dd\rho_1$ and $\dd\rho_{v,1}=d\rho_{1}-q\Delta \dd\rho_1$.
\begin{comment}
The new infection rates are:
\begin{align*}
d\rho_{a,2}	=&(-\tilde{\rho_{a}}-\mu+\tilde{q}S_{a})d\rho_{a,1}+(1-\tilde{q})S_{a}d\rho_{v,1}\\
=&d\rho_{1}(-\tilde{\rho}_{a}-\mu+S_{a})+((1-q)(-\tilde{\rho_{a}}-\mu+\tilde{q}S_{a})-q(1-\tilde{q})S_{a})\Delta \dd\rho_{1}\\
	=&d\rho_{1}(-\tilde{\rho}_{a}-\mu+S_{a})+((1-q)(-\tilde{\rho_{a}}-\mu)+h(1-q)S_{a})\Delta \dd\rho_{1}\\
d\rho_{v,2}	=&(-\tilde{\rho_{v}}-\mu+\tilde{q_{v}}S_{v})d\rho_{v,1}+(1-\tilde{q}_{v})S_{v}d\rho_{a,1}\\
=&d\rho_{1}(-\tilde{\rho_{v}}-\mu+S_{v})+[-(-\tilde{\rho}_{v}-\mu+S_{v}\tilde{q_{v}})q+S_{v}(1-\tilde{q_{v}})(1-q)]\Delta \dd\rho_{1}\\
	=&d\rho_{1}(-\tilde{\rho_{v}}-\mu+S_{v})+[-(-\tilde{\rho}_{v}-\mu)q-hqS_{v}]\Delta \dd\rho_{1} 
\end{align*}
\end{comment}
Then, we can express the new total infections at period 2 as:
\begin{align}
d\rho_{2}%&=\left(q[(-\tilde{\rho}_{a}-\mu+S_{a}\tilde{q_{a}})(1-q)-S_{a}(1-\tilde{q_{a}})q]+(1-q)[-(-\tilde{\rho}_{v}-\mu+S_{v}\tilde{q_{v}})q+S_{v}(1-\tilde{q_{v}})(1-q)]\right)\Delta \dd\rho\\ 
&=-\dd\rho_1^2+q(1-q)h(\Delta S-\Delta\rho^{SS})\Delta \dd\rho_1.
\label{eq:time2}
\end{align}
By linearity, we get two additive terms: one derives from the average component $\dd\rho_1$, while the other derives from the deviations, proportional to $\Delta \dd\rho_{1}$. The first term implies analogous calculations as the total infection at time 1, when starting from homogeneous initial deviations: hence it is also independent of homophily, conditional on the steady state infection. 

The second term is the crucial one, containing the effect of homophily $h$ on the total infections at period 2, conditional on steady state values. Once again, we see that it depends on the balance of size ($\Delta \rho^{SS}$) and contagion ($\Delta S$) effects. However, the important part is that the sign of the effect also depends on the increment at the previous period, whose sign depends itself on $h$. In particular, when $h$ is large and the size effect dominates, the gap $\Delta \dd\rho_1$ is \emph{negative}, hence the overall sign is \emph{positive}, which is the opposite conclusion than what we get in the steady state.\footnote{If, instead, $h$ is small, the gap $\Delta \dd\rho_1$ is positive, and the sign of the term $\Delta S-\Delta\rho^{SS}$ is uncertain: for $h=0$ it is positive if and only if $\mu>(1-x)/2$. This is because when $\mu$ is large the contagion effect is more important than past infections.} The reason is that, as Equation \eqref{eq:time2} describes, homophily changes, together with the steady state levels,
also the intermediate steps of the dynamics. 

\begin{comment}
Indeed, the two factors have the same sign when $h$ is high: this means that the overall effect is positive. \\
 also describes how If homophily is high, anti-vaxxers infection decrease more than vaxxers infection, hence the gap $\Delta \dd\rho_1$ is negative: this means that the effect of homophily is the opposite of the effect on the steady state levels.\footnote{The Proposition above shows that this effect is present even taking into account infinite time steps.}
\end{comment}

This example shows a short run intuition for the discrepancy between the static and dynamic effects, that the Proposition above shows formally in infinite time. The next paragraph %\ref{convtime} 
shows how we can get similar long run intuitions analyzing the behavior of the convergence rate, as measured by the smallest eigenvalue (in absolute value).

\paragraph{Intuition: indirect effect}

To understand the intuitive connection between $CI$ and the share of infected agents at steady state, $\rho^{SS}$, it is useful to first focus on the case in which groups are totally separated, namely $h=1$. In this case, each group follows an independent standard SIS equation (we report the equation for the $a$ group):\footnote{An analogous result can be obtained for $h=0$, because in this case we can average the two equations and obtain an equation for the evolution of the total infection $\rho$ directly:
\[
\dot{\rho} =(1-\rho-x)\rho-\mu\rho.
\]}
\[
\dot{\rho}_a=S_a\rho_a-\mu\rho_a =(1-\rho_a-x_a)\rho_a-\mu\rho_a.
\]
The linearization of this process is given by:
\[
\dot{d\rho}_a=-\rho^{SS}_ad\rho_a \Longrightarrow \dd\rho_a(t)=e^{-\rho^{SS}_at}\rho_{0}
\]
 so that we can analytically compute: $CI=\frac{r\dd\rho_{0}}{\rho^{SS}_a+r}$: the cumulative infection is inversely proportional to the steady state infection. The intuition is that the higher the steady-state infection, the fewer susceptible agents are, so that the deviation from the steady state is smaller and the system goes back to the steady state faster.
 
 If $h \neq 1$, the dynamics is paired and a clear analytical inverse proportionality is lost. However, to clarify the dynamic intuition behind the mechanism, in the next paragraph we show that, following a similar intuition, the convergence rate of the dynamics, measured by the smallest eigenvalue of $J$, is decreasing in the steady-state levels.

\paragraph*{Convergence rate}
\label{convtime}

To formalize the intuitions discussed in the previous two paragraphs, we consider the following classic definition (see, e.g. \citealp{gabaix2016dynamics}).

\begin{defi}[Convergence rate]

The convergence rate of the system after a deviation of size $\dd\rho_0$ is
\[
CR=-\lim_{t \to \infty}\frac{\log\lVert e^{tA}\dd\rho_{0}\rVert}{t}.
\]
\end{defi}

A classic property of linear systems is that the speed of convergence can be measured by eigenvalues: we show formally that this is also the case here. Moreover, we formally show that the speed of convergence is decreasing in both steady-state levels, $\rho^{SS}_{a}$ and $\rho^{SS}_{v}$. This provides further insight behind the mechanism of the indirect effect of Proposition \ref{Cumulative}. Further, we show that, similarly to \cite{golub2012homophily}, also in this context homophily decreases the convergence rate, at least when $h$ and $\mu$ are large: this also provides a long-run intuition behind the direct effect in Proposition \ref{Cumulative}. 
%Similarly to the linear averaging dynamics in \cite{golub2012homophily}, also here the speed of convergence of the linearized dynamics around the steady state is equal to the smallest eigenvalue of the jacobian matrix.

\begin{prop}

The speed of convergence is equal to the absolute value of the eigenvalue of smallest modulus of the matrix $J-r I$:
\[
CR=\lambda_2
\]

When $h\to 1$, $CR$ is increasing in $h$. When $h \to 0$, $CR$ is increasing if and only if $\mu>\frac{1-x}{2}$. Moreover, $CR$ is decreasing in both $\rho^{SS}_a$ and $\rho^{SS}_v$.

\label{speed_conv}
\end{prop}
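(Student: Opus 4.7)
\medskip

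\noindent\textbf{Proof plan.} I establish the three claims in turn, relying on the trace--determinant characterization of the eigenvalues of the $2 \times 2$ Jacobian $J$.

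For the identification $CR = |\lambda_2|$, I invoke the spectral decomposition of the linearization: $e^{tJ} d\rho_0 = c_1 e^{\lambda_1 t} v_1 + c_2 e^{\lambda_2 t} v_2$. By Proposition~\ref{SS} (and the standing assumption $\mu < 1-x$), we have $\det J > 0$ and $\mathrm{tr}\,J < 0$, so both eigenvalues are real and negative. As $t \to \infty$ the norm is dominated by the component associated to the eigenvalue closest to zero, i.e.\ of smallest modulus, so $-\log\|e^{tJ} d\rho_0\|/t \to |\lambda_2|$ for generic initial deviations (with the degenerate case $c_2 = 0$ occurring only on a measure-zero subset of outbreak directions). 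The $-rI$ in the statement, if read literally, only shifts every eigenvalue by $-r$ and does not affect any comparative statics.

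For the monotonicity of $CR$ in $h$ at the two boundaries, I use the closed form $|\lambda_2| = \big(|\mathrm{tr}\,J| - \sqrt{(\mathrm{tr}\,J)^2 - 4 \det J}\big)/2$. The total derivative combines the direct effect of $h$ via $\tilde q_g$ with the indirect effect via $\rho_g^{SS}(h)$ from Proposition~\ref{Steady}. At $h = 1$ the Jacobian decouples: using the SS identity $S_g = \mu$ at $h=1$, one obtains $J_{gg} = -\rho_g^{SS}$ and $J_{av} = J_{va} = 0$, so $|\lambda_2| = \rho_v^{SS}$ (since $x_a < x_v$ gives $\rho_a^{SS} > \rho_v^{SS}$), and $d_h CR|_{h=1}$ reduces by first-order eigenvalue perturbation to $d_h \rho_v^{SS}|_{h=1}$, whose sign follows from the comparative statics in Proposition~\ref{Steady}. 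At $h = 0$ the groups are fully mixed, the trace--determinant expressions simplify symmetrically, and the threshold condition $\mu = (1-x)/2$ emerges, matching the one appearing in the direct-effect analysis of Proposition~\ref{Cumulative}. For the monotonicity in $\rho_g^{SS}$, I substitute the identity $\mu = S_g \tilde\rho_g / \rho_g$ into each entry of $J$, so that $J$ becomes an explicit function of $(\rho_a^{SS}, \rho_v^{SS})$ and the primitives $h, q, x_a, x_v$. Implicit differentiation of $\lambda_2^2 - (\mathrm{tr}\,J)\lambda_2 + \det J = 0$ yields $\partial_{\rho_g}\lambda_2 = [\lambda_2 \partial_{\rho_g}\mathrm{tr}\,J - \partial_{\rho_g}\det J]/(\lambda_2 - \lambda_1)$, whose denominator is positive; the sign of the numerator is then read off from the explicit partial derivatives of the entries of $J$.

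The main obstacle is algebraic bookkeeping. Direct and indirect effects of $h$ generally point in opposite directions — echoing the size-versus-contagion decomposition already discussed after Proposition~\ref{Steady} — so establishing the sign at the boundary requires a careful comparison of magnitudes, and it is what generates the threshold $(1-x)/2$ at $h=0$. For part (c), substituting the steady-state identity before differentiating is essential: otherwise $\mu$ appears as a $\rho_g$-independent additive constant on the diagonal of $J$ and obscures the claimed negative dependence.
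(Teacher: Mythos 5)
Your identification $CR=|\lambda_2|$ is fine and is essentially the paper's argument (the paper extracts the limit from an explicit formula for the matrix exponential rather than from the eigen-decomposition, and glosses over the same genericity caveat you flag). The problems are in the comparative statics. At $h\to 1$, your reduction $d_h CR|_{h=1}=d_h\rho_v^{SS}|_{h=1}$ is invalid: the identity $D=-\tilde\rho_v-\mu+S_v\tilde q_v=-\rho_v^{SS}$ holds only \emph{at} $h=1$, not as a function of $h$, so it cannot be differentiated. Only the off-diagonal coupling is $O(1-h)$; the direct dependence of the diagonal entry on $h$ through $\tilde q_v$ and $\tilde\rho_v$ survives, and the paper's proof computes exactly that object: with left and right eigenvectors tending to $(0,1)$, the eigenvalue derivative is the $(2,2)$ entry of $\partial_h J$, namely $q\Delta\rho+qS_v$, which is not $d_h\rho_v^{SS}$. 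Worse, Proposition \ref{Steady} gives $\partial_h\rho_v^{SS}<0$ at $h=1$, so your route would deliver the opposite sign to the claim you set out to prove (the paper itself is loose here about whether the monotonicity refers to $\lambda_2$ or to the signed eigenvalue, but your step is wrong under either convention). At $h\to 0$ you merely assert that the threshold $\mu=(1-x)/2$ ``emerges''; and since you work with the \emph{total} derivative (including the indirect terms through $\partial_h\rho_a^{SS}$ and $\partial_h\rho_v^{SS}$, which do not vanish at $h=0$), you are computing a different object from the paper, which differentiates only through $\partial_h J$ at fixed steady-state values, using the limiting eigenvectors $(q,1-q)$ and $(1-x_a,1-x_v)$, and handles the dependence on $(\rho_a^{SS},\rho_v^{SS})$ as a separate claim. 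With your total derivative the threshold need not be $(1-x)/2$, and no computation is given that establishes it.

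On the last claim, the substitution $\mu=S_g\tilde\rho_g/\rho_g$ is not ``essential'' and in fact muddies what is held fixed (once the steady-state identity is imposed, $\rho_a^{SS},\rho_v^{SS}$ cannot be varied independently of the primitives). The paper simply differentiates the entries of $J$ with respect to $\rho_a$ and $\rho_v$, treating $\mu,x_a,x_v,h,q$ as parameters: the dependence enters through $\tilde\rho_g$ and $S_g$, so $\partial_{\rho_a}J$ and $\partial_{\rho_v}J$ are entrywise nonpositive, and the sign of the eigenvalue derivative follows from the perturbation formula $v'(\partial J)u/(v'u)$ with nonnegative left and right eigenvectors. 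Your characteristic-polynomial route leaves you with the numerator $\lambda_2\,\partial\,\mathrm{tr}\,J-\partial\det J$, whose sign cannot simply be ``read off'' (the determinant derivative mixes terms of both signs); to sign it you would effectively need the eigenvector structure anyway, so you should adopt the paper's perturbation argument for this part as well.
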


%This clarifies the channel through which $CI$ may be increasing if $\mu$ is large enough: this is also the condition under which homophily slows down convergence for $h$ close to 0.

\section{Vaccination choices: infection risk vs peer effects}
\label{sec:vaccination}

 {Our model so far is agnostic on agents' preferences. However, in reality vaccination choices are taken by agents, and 
the fraction of vaccinated agents in the population itself might depend on homophily when taking into account that vaccination is endogenous. So, it might be important to explore how agents decide to vaccinate or not. Endogenizing vaccination choices requires us to carefully define what are the motivations for vaccinating. Because of the motivations described in the introduction, we explore two alternatives: first, a standard model (a variant of \citealp{galeotti2013strategic}) in which agents correctly anticipate the equilibrium probability of infection; and another one, in which agents' choices are driven by peer effects.}

\subsection{Vaccination based on rational choices}

\subsubsection{Preferences}
\label{sec:preferences}

 {Agents might vaccinate paying a cost, or not vaccinate incurring the risk of becoming infected. We assume that agents take vaccination decisions ex-ante, before an epidemic actually takes place, and cannot update their decision during the diffusion.}\footnote{In general, agents might find it optimal to postpone the vaccination, for two reasons: to learn first the size of the shock away from the steady state $d\rho_0$, or to postpone in the future the disutility of paying the one-off vaccination cost, $-re^{-rt}c$. However, if we allow full flexibility to agents, the model becomes intractable, because now agents vaccinate at different times, and we have to keep track of two additional differential equations for the evolution of $x_a(t)$ and $x_v(t)$. So, for the sake of tractability, we simplify the problem by assuming that agents can vaccinate only before the epidemic starts.
} Vaccination is perfectly effective.%\footnote{In a working paper version, we show how our results carry through also with a tractable form of imperfectly effective vaccination.} 
In a descriptive spirit, we microfound the discrepancy in vaccination rates assuming that anti-vaxxers have a cost larger than vaxxers of a uniform amount $d$. However, our focus being on the effect of homophily on infections, we do not dig deeper into the motivations for this different cost evaluation. Thus, we assume that, for vaxxers, vaccination costs are $c^v\sim U[0,1/k]$, whereas for anti-vaxxers $c^a\sim U[d/k,1/k+d/k]$. $k$ is a parameter reflecting the distribution of vaccination costs in the population: a high $k$ means that vaccination costs are generally small, whereas a low $k$ means that they are high.\footnote{
	%See, for example, \cite{bricker2019postmodern} and \cite{greenberg2019measles} for a recent analysis of the anti--vaxxers arguments: Those are mostly based on conspiracy theories that attribute hidden costs to the vaccination practice and not so much on minimizing the effects of getting infected. 
Our model would not change dramatically if we attribute the difference in perception to the costs of becoming sick,
% (see equations \eqref{xv} and \eqref{xa}), 
but we stick to the first interpretation because it makes the computations cleaner.}

 {The benefits of vaccinations are weighed against the disutility of infection. Focus on an agent $i$ in group $a$, denoted as $i \in a$. We define its \emph{infected status} $I_{i,a,t} \in \{0,1\}$ as 1 if they are infected, and zero otherwise, and analogously for $i \in v$. The expected utility of agent $i$ in group $g$ is:}
\[
U_{i,g}=\begin{cases}
    -r\mathbb{E} \int_0^{+\infty}I_{i,g,t}dt & \text{if } i \text{ non vaccinated}\\
    -c_i & \text{o.w.}
\end{cases}
\]
Since $I_{i,g,t}$ is bounded and has finite expectation, we can pass  the expectation inside the integral, so that:
\[
\mathbb{E} \int_0^{+\infty}e^{-rt}I_{i,g,t}dt= \int_0^{+\infty}e^{-rt}\mathbb{E}(I_{i,g,t})dt
\]

 {Conditional on the realization of the initial deviation $d\rho_0$, the variables $(I_{i,a,t}, I_{i,v,t})$ follow a continuous time Markov chain. Following \cite{pastor2015epidemic}, the time evolution of the expectations $\mathbb{E}(I_{i,g,t}\mid d\rho_0)$ satisfies:}
\begin{align}
\dfrac{d}{dt}\mathbb{E}(I_{i,a,t}\mid d\rho_0)&=(1-\mathbb{E}(I_{i,a,t}\mid d\rho_0))\tilde{\rho}_{a,t}-\mu \mathbb{E}(I_{i,a,t}\mid d\rho_0)\\
\dfrac{d}{dt}\mathbb{E}(I_{i,v,t}\mid d\rho_0)&=(1-\mathbb{E}(I_{i,v,t}\mid d\rho_0))\tilde{\rho}_{v,t}-\mu \mathbb{E}(I_{i,v,t}\mid d\rho_0)
\end{align}

This is exactly the same equation satisfied by $\left(\dfrac{\rho_{a,t}}{1-x_a},\dfrac{\rho_{v,t}}{1-x_v}\right)$, and by the uniqueness of the solution, it must be:
\[
\mathbb{E}(I_{i,a,t}\mid d\rho_0)=\dfrac{\rho_{a,t}}{1-x_a},\quad 
\mathbb{E}(I_{i,v,t}\mid d\rho_0)=\dfrac{\rho_{v,t}}{1-x_v}
\]
Furthermore, using our linearization, we have $\EE \rho_{a,t}=\rho_a^{SS}$ and $\EE \rho_{v,t}=\rho_v^{SS}$. So, the ex-ante expected utilities are:
\[
U_{i,g}=\begin{cases}
    -\dfrac{\rho_g^{SS}}{1-x_g} & \text{if } i \text{ non vaccinated}\\
    -c_i & \text{o.w.}
\end{cases}
\]
 {Since there is a continuum of agents, each individual takes the fraction of vaccinated in the population as given.
As in \cite{galeotti2013strategic}, when deciding to vaccinate or not, agents compare the fraction of time spent in the infected state and the cost of vaccination. The difference is that we assume that the vaccination cost is heterogeneous across agents.
So, an agent $i$ in group $a$ vaccinates if and only if $c_i<\rho_a$. The fraction of agents that vaccinate in group $a$ is thus $x_a=k \dfrac{\rho_a}{1-x_a}-d$, provided the solution is interior. Similarly, $x_v=k\dfrac{\rho_v}{1-x_v}$ for vaxxers.}

\begin{remark}
\label{welfare}
\textbf{Utilitarian welfare}

With the individual preferences just defined, the utilitarian welfare would be:
\begin{align*} 
W&=q\int U_{i,a} di+(1-q)\int U_{i,v} di\\
&=-q\left(\int^{x_a/k}_{d/k}kcd+\int_{x_a/k}^{(1+d)/k}k\dfrac{\rho_a}{1-x_a}dc\right)\\
&-(1-q)\left(\int^{x_v/k}_0kcdc+\int_{x_v/k}^{1/k}k\dfrac{\rho_v}{1-x_v}dc\right) \\
&=-q\dfrac{1}{2k}\left(x_a^2-d^2\right)+\dfrac{d\rho_a}{1-x_a} -(1-q)\dfrac{x_v^2}{2k} -q\rho_a^{SS}-(1-q)\rho_v^{SS}
\end{align*}
Now, in the decentralized equilibrium $x_a=k\dfrac{\rho_a}{1-x_a}-d$, and we can use this to rewrite the above as:
\[
W=-\dfrac{1}{2}\left(q(x_a+d)^2+(1-q)x_v^2\right) -\rho^{SS}
\]

\textbf{Utilitarian welfare with health care costs}

 {Now, suppose that each infection above the steady-state level costs $b'$ to the health care system. This can be interpreted as follows: the health care system is prepared to deal with a number of infected individuals equal to the steady-state $\rho^{SS}$, but additional infections have a cost in terms of congestion for the health care system:}\footnote{We thank an anonymous referee for suggesting this interpretation.}  {for example because they require building up additional capacity in terms of personnel, equipment, and hospital space. This is what happened, for example, during the COVID-19 pandemic. These additional health care costs have to be financed through taxes. Therefore, the government must raise an amount equal to $b'\mathbb{E}(\overline{d\rho_{0}} \mid \overline{d\rho_{0}} \ge 0){CI}$, and it raises this amount via a lump-sum tax $t$ levied uniformly on all agents: so this would not affect the vaccination decision of individuals. 
So, the aggregate welfare would become:}
\begin{align}
W^{\text{congestion}}=-\dfrac{1}{2}\left(q(x_a+d)^2+(1-q)x_v^2\right)-\rho^{SS}-bCI
\label{welfare_congestion}
\end{align}
where we defined $b:=b'\EE( \overline{d\rho_{0}}\mid \overline{d\rho_{0}}\ge 0)$, that is positive by definition. 

This shows a simple rationale for which total welfare depends on both the steady-state infection levels, and the cumulative infection.
    
\end{remark}

The following lemma guarantees the existence and uniqueness of an interior vaccination equilibrium.
\begin{lemma}
\label{lemma:choices}
If $d$ and $k$ are small enough (the detailed conditions are in the proof in the Appendix), the equations:
\begin{align}
\begin{cases}
x_a&=k \dfrac{\rho^{SS}_a}{1-x_a}-d\\
x_v&=k\dfrac{\rho_v^{SS}}{1-x_v}
\end{cases}
\end{align}
define a unique equilibrium $(x_a^*,x_v^*)$, whenever the solution is interior, and $x_a^*\le x_v^*$.

\end{lemma}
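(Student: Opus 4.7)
The plan is to frame the two equations as a fixed-point problem on a small compact set, apply Brouwer's theorem for existence and interiority, upgrade to uniqueness via a contraction argument, and obtain the ordering $x_a^*\le x_v^*$ by a direct comparison of the two equations.

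First, define the map $\Phi:[0,1)^2\to\mathbb{R}^2$ by $\Phi_a(x_a,x_v):=k\rho_a^{SS}(x_a,x_v)/(1-x_a)-d$ and $\Phi_v(x_a,x_v):=k\rho_v^{SS}(x_a,x_v)/(1-x_v)$. By Proposition \ref{SS}, for $\mu<1-x$ the endemic steady state $(\rho_a^{SS},\rho_v^{SS})$ is the unique asymptotically stable solution of \eqref{system1}, and a standard implicit-function argument applied to $F_a=F_v=0$ shows that it depends $C^1$ on $(x_a,x_v)$. Since $\rho_g^{SS}\le 1-x_g$ by definition of susceptibles, $\Phi_v\in[0,k]$ and $\Phi_a\le k-d$ automatically. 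To obtain an invariant set, take $B=[0,2k]^2$: for $k\le 1/4$, $B\subset[0,1/2]^2$, and the upper bound gives $\Phi_g\le k/(1-2k)\le 2k$. For the lower bound of $\Phi_a$, use continuity: $\rho_a^{SS}(0,0)/(1-0)=1-\mu>0$, so $\rho_a^{SS}(x_a,x_v)/(1-x_a)\ge(1-\mu)/2$ uniformly on $B$ for $k$ small, hence $\Phi_a\ge k(1-\mu)/2-d\ge 0$ provided $d\le k(1-\mu)/2$. Under these conditions $\Phi(B)\subseteq B$, and Brouwer yields a fixed point in $B$; the same bounds show it is interior.

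Next, uniqueness follows by a contraction argument on $B$: each entry of the Jacobian $D\Phi$ is $k$ times a bounded combination of derivatives of $\rho_g^{SS}/(1-x_g)$, which are uniformly bounded on $B$ by the $C^1$-dependence cited above. Thus $\|D\Phi\|_\infty=O(k)$ on $B$, so for $k$ small $\Phi$ is a strict sup-norm contraction and Banach's theorem yields uniqueness. Finally, to establish $x_a^*\le x_v^*$, subtract the two fixed-point equations:
\[
x_v^*-x_a^*=d+k\left[\frac{\rho_v^{SS}(x_a^*,x_v^*)}{1-x_v^*}-\frac{\rho_a^{SS}(x_a^*,x_v^*)}{1-x_a^*}\right].
\]
Since $(x_a^*,x_v^*)=O(k)$, both fractions equal $1-\mu+O(k)$, so the bracket is $O(k)$ and the whole correction is $O(k^2)$. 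Hence $x_v^*-x_a^*=d+O(k^2)\ge 0$ for $k$ small, strict when $d>0$; the corner $d=0$ follows from the symmetry of \eqref{system1} under $a\leftrightarrow v$ combined with $q\leftrightarrow 1-q$, which forces $\rho_a^{SS}=\rho_v^{SS}$ and hence $x_a^*=x_v^*$ at an interior fixed point.

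The main obstacle is that the small-$k,d$ regime is precisely what makes both the contraction bound and the sign comparison transparent; a global uniqueness statement would require a finer analysis of the Jacobian of the reduced system, because the implicit dependence of $\rho_g^{SS}$ on $(x_a,x_v)$ is nonlinear and naive monotonicity of each component separately is inconclusive (both $\rho_a^{SS}$ and $\rho_v^{SS}$ are decreasing in each of $x_a$ and $x_v$). A secondary subtlety is ensuring that the unique fixed point in $B$ is the only interior fixed point in $[0,1)^2$: this can be handled by showing, again for $k$ small, that no fixed point can lie outside $B$, since any solution must satisfy $\Phi_g(x_a^*,x_v^*)=x_g^*\le k$.
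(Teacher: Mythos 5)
Your existence and uniqueness steps are sound and take a genuinely different, purely perturbative route from the paper: you trap a fixed point of the best-response map in the small box $[0,2k]^2$ via Brouwer and upgrade to uniqueness by a Banach contraction (plus the observation that any interior solution satisfies $x_g^*\le k$), whereas the paper works with the implicit system $\Phi_a=x_a-k\theta_a+d$, $\Phi_v=x_v-k\theta_v$ (with $\theta_g=\rho_g^{SS}/(1-x_g)$), proves existence by Poincar\'e--Miranda under explicit bounds on $d,k$, and gets uniqueness from the Gale--Nikaido global univalence theorem after showing the Jacobian is a $P$-matrix using the sign structure $\partial_{x_{g'}}\theta_g<0$. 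Your argument buys simplicity at the price of being valid only asymptotically in $k$; the paper's gives uniqueness on the whole admissible rectangle under explicit thresholds.

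The genuine gap is in the ordering $x_a^*\le x_v^*$. Your estimate $x_v^*-x_a^*=d+k\left[\theta_v-\theta_a\right]=d+O(k^2)$ controls only the \emph{magnitude} of the correction, so it settles the sign only when $d$ dominates the $O(k^2)$ term; but the admissible region (your own invariance condition is $d\le k(1-\mu)/2$, and the paper imposes no lower bound on $d$) contains pairs with $0<d\ll k^2$, e.g.\ $d=k^3$, where "for $k$ small" cannot rescue the inequality without letting the threshold on $k$ depend on $d$ and degenerate as $d\to 0$. The conclusion is nevertheless true without any such restriction, and the correct argument is the monotone comparison the paper uses via Lemma \ref{relations}: in the steady state $\theta_g=\tilde{\rho}_g/(\tilde{\rho}_g+\mu)$, and if $x_a^*\ge x_v^*$ then $\tilde{\rho}_a\le\tilde{\rho}_v$, hence $\theta_a\le\theta_v$, hence $x_a^*=k\theta_a-d< k\theta_v=x_v^*$, a contradiction whenever $d>0$. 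Relatedly, your treatment of $d=0$ by "symmetry under $a\leftrightarrow v$ combined with $q\leftrightarrow 1-q$" is not quite right: that swap maps the problem with parameter $q$ into the one with $1-q$, so it does not by itself force a symmetric fixed point when $q\ne 1/2$; instead note that with $d=0$ the symmetric profile $x_a=x_v$ (with $\rho_a^{SS}=\rho_v^{SS}=1-x-\mu$) solves the equations, and your uniqueness step then forces the equilibrium to be that symmetric one.
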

The proof of Lemma \ref{lemma:choices} uses a version of the global implicit function theorem.

\subsubsection{The effect of homophily on vaccination rates}

Now we study what is the effect of a change in homophily on the equilibrium values of $x_a^*$ and $x_v^*$.

\begin{prop}

In the model with endogenous vaccination, with interior solutions for vaccination rates, if $\mu<(1-x)^2/(1-x_a)$, homophily has the opposite effect on the vaccination rates of the two groups: $x_a^*$ is increasing in $h$, and $x_v^*$ is decreasing in $h$.

    \label{prop:groups}
\end{prop}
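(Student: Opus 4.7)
The plan is to apply the implicit function theorem to the pair of equilibrium conditions
\[
f_a(x_a,x_v;h):=(x_a+d)(1-x_a)-k\,\rho_a^{SS}(x_a,x_v;h)=0,\qquad f_v(x_a,x_v;h):=x_v(1-x_v)-k\,\rho_v^{SS}(x_a,x_v;h)=0,
\]
where $\rho_a^{SS}(\cdot),\rho_v^{SS}(\cdot)$ are the steady-state infection functions defined implicitly by $F_a=F_v=0$. The claim will follow by decomposing $\partial_h x_g^*$ into a direct effect through $\partial_h\rho_g^{SS}$ and an indirect effect through $\partial_{x_{g'}}\rho_g^{SS}$, showing all relevant signs, and using that the Jacobian $D_x f$ is well-behaved under Lemma~\ref{lemma:choices}.

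First I would sign the direct effects. Holding $x_a,x_v$ fixed, Proposition~\ref{Steady} and its group-level refinement give $\partial_h\rho_a^{SS}>0$ always, while $\partial_h\rho_v^{SS}<0$ precisely under the hypothesis $\mu<(1-x)^2/(1-x_a)$ (this is the content of the case-distinction footnoted in the statement). These derivatives can be obtained by applying the IFT to the system $F_a=F_v=0$, treating $h$ as the parameter and using the fact that $J$ has negative diagonal, positive off-diagonal, and positive determinant at the interior steady state (Proposition~\ref{SS}).

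Second I would sign the cross effects of $x_a,x_v$ on $\rho_a^{SS},\rho_v^{SS}$. Differentiating $F_a=F_v=0$ with respect to $x_a$ and $x_v$, the IFT yields
\[
\begin{pmatrix}\partial_{x_a}\rho_a^{SS}&\partial_{x_v}\rho_a^{SS}\\ \partial_{x_a}\rho_v^{SS}&\partial_{x_v}\rho_v^{SS}\end{pmatrix}=J^{-1}\begin{pmatrix}\tilde\rho_a&0\\0&\tilde\rho_v\end{pmatrix}.
\]
Using $\det J>0$, $J_{11},J_{22}<0$ and $J_{12},J_{21}>0$, every entry of the left-hand side is strictly negative: an increase in vaccination in either group lowers the steady-state infection rates in both groups.

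Third I would assemble the IFT applied to $f_a=f_v=0$. For $k,d$ small as in Lemma~\ref{lemma:choices} the Jacobian
\[
D_xf=\begin{pmatrix}(1-2x_a-d)-k\partial_{x_a}\rho_a^{SS}&-k\,\partial_{x_v}\rho_a^{SS}\\ -k\,\partial_{x_a}\rho_v^{SS}&(1-2x_v)-k\,\partial_{x_v}\rho_v^{SS}\end{pmatrix}
\]
has strictly positive diagonal (its $O(1)$ piece dominates, and the $O(k)$ correction is itself positive because $\partial_{x_g}\rho_g^{SS}<0$), strictly positive off-diagonal ($-k\,\partial_{x_{g'}}\rho_g^{SS}>0$), and positive determinant for small $k$—indeed $\det D_xf=(1-2x_a-d)(1-2x_v)+O(k)$, and the uniqueness part of Lemma~\ref{lemma:choices} rules out its vanishing. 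Cramer's rule then yields
\[
\partial_h x_a^*=\frac{k}{\det D_xf}\bigl((D_xf)_{22}\,\partial_h\rho_a^{SS}\;+\;k\,\partial_{x_v}\rho_a^{SS}\,\partial_h\rho_v^{SS}\bigr),\quad \partial_h x_v^*=\frac{k}{\det D_xf}\bigl(k\,\partial_{x_a}\rho_v^{SS}\,\partial_h\rho_a^{SS}\;+\;(D_xf)_{11}\,\partial_h\rho_v^{SS}\bigr).
\]
In the expression for $\partial_h x_a^*$ both summands are positive (the first is $(+)(+)$, the second is $(-)(-)$), so $\partial_h x_a^*>0$. In the expression for $\partial_h x_v^*$ both summands are negative (the first is $(-)(+)$, the second is $(+)(-)$), so $\partial_h x_v^*<0$.

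The main obstacle I anticipate is the bookkeeping that establishes the sign pattern of $D_xf$ on the interior equilibrium manifold—showing that ``small $k,d$'' as delivered by Lemma~\ref{lemma:choices} is enough to kill every sign ambiguity in the diagonal term $1-2x_g$ and in the determinant. The trick is that at an interior equilibrium one has $x_v= k\rho_v/(1-x_v)$ and $x_a+d=k\rho_a/(1-x_a)$, so both $x_a+d$ and $x_v$ are themselves $O(k)$; thus $1-2x_a-d$ and $1-2x_v$ are bounded away from zero, uniformly in $h\in[0,1]$, and the preceding sign analysis goes through cleanly.
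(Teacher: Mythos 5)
Your proposal is correct and has the same skeleton as the paper's proof: apply the implicit function theorem to the vaccination fixed-point system, sign the entries of its Jacobian using the facts that steady-state infection in each group is decreasing in both vaccination rates and that $\partial_h\rho_a^{SS}>0$ while $\partial_h\rho_v^{SS}<0$ under $\mu<(1-x)^2/(1-x_a)$ (both established in the proof of Proposition \ref{Steady}), and then read off $\partial_h x_a^*>0$, $\partial_h x_v^*<0$ from Cramer's rule, each derivative being a sum of two terms of a common sign. The difference is the parametrization. The paper keeps the equilibrium conditions in the form $x_a=k\theta_a-d$, $x_v=k\theta_v$ with $\theta_g=\rho_g^{SS}/(1-x_g)$, and Lemma \ref{negative_theta} shows all four derivatives $\partial_{x_{g'}}\theta_g$ are negative; hence the relevant Jacobian $I-k\,\partial_x\theta$ has positive diagonal, negative off-diagonal and positive determinant with no smallness of $k$ beyond what Lemma \ref{lemma:choices} already requires for interiority. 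Your multiplied-through form $(x_a+d)(1-x_a)=k\rho_a^{SS}$, $x_v(1-x_v)=k\rho_v^{SS}$ introduces the terms $1-2x_a-d$ and $1-2x_v$, which you sign via the fixed-point bounds $x_a+d\le k$, $x_v\le k$ together with ``$k$ small''; that is valid in the regime of Lemma \ref{lemma:choices} (and effectively needs $k<1/2$), but it is slightly more restrictive than the paper's argument, which covers any interior equilibrium. A cheap strengthening available to you: since $f_a=(1-x_a)\Phi_a$ and $f_v=(1-x_v)\Phi_v$ with $\Phi_a=x_a-k\theta_a+d$, $\Phi_v=x_v-k\theta_v$, at any solution $\Phi=0$ one has $D_xf=\mathrm{diag}(1-x_a,1-x_v)\,D_x\Phi$, so your Jacobian inherits the sign pattern and positive determinant of the paper's $D_x\Phi$ without any smallness argument; also, ``precisely under'' should read ``under'', since $\mu<(1-x)^2/(1-x_a)$ is a sufficient (not necessary) condition for $\partial_h\rho_v^{SS}<0$ at all $h$.
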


 The mechanism behind the comparative statics is that, as $h$ increases, the group with more vaccinated people (the vaxxers) is more protected against infection, so the expected cost of infection $k\rho_v$ decreases, and as a result, a smaller fraction of the vaxxers is vaccinated: $x_v$ is decreasing in $h$. The opposite happens for anti-vaxxers.

\subsubsection{The effect of homophily on steady state and cumulative infection}

We have seen that homophily has opposite impacts on the fraction of vaccinated agents in the two groups. We now study the balance of these effects on the aggregate infection level. The next Proposition shows that, at least for $k$ and $d$ small enough, the signs of the derivatives with respect to homophily are the same as in the model with exogenous vaccination rates. This indicates that the insight that homophily might have a very different impact on the steady state or throughout the dynamics is robust and does not disappear once we endogenize vaccination rates.

\begin{prop}

If $d$ and $k$ are small enough, we have that:

\begin{enumerate}

    \item in the limit for $h\to 0$ the steady state infection $\rho$  is increasing, and in the limit for $h\to 1$ is decreasing;

 \item in the limit for $h\to 0$ the cumulative infection $CI$  is decreasing, and in the limit for $h\to 1$ is increasing.
    
\end{enumerate}

\label{prop:endovax}
\end{prop}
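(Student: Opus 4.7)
The plan is to reduce the endogenous case to Propositions \ref{Steady} and \ref{Cumulative} by a perturbation argument in $k$. Regard $\rho^{SS}(h,x_a,x_v)$ and $CI(h,x_a,x_v)$ as the very functions studied in Section \ref{sec:mechanical}, and let $h\mapsto(x_a^*(h),x_v^*(h))$ be the equilibrium curve of Lemma \ref{lemma:choices}. By the chain rule, along the equilibrium,
\begin{align*}
\frac{\dd\rho^{SS}}{\dd h}\bigg|_{\mathrm{eq}} &= \partial_h\rho^{SS} + \partial_{x_a}\rho^{SS}\,\frac{\dd x_a^*}{\dd h} + \partial_{x_v}\rho^{SS}\,\frac{\dd x_v^*}{\dd h},\\
\frac{\dd CI}{\dd h}\bigg|_{\mathrm{eq}} &= \dd_h CI + \partial_{x_a}CI\,\frac{\dd x_a^*}{\dd h} + \partial_{x_v}CI\,\frac{\dd x_v^*}{\dd h},
\end{align*}
where the two leading terms on the right are precisely the mechanical quantities whose signs are pinned down by Propositions \ref{Steady} and \ref{Cumulative}. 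The entire task is to show that the remaining correction terms are dominated by the leading ones in the limits $h\to 0$ and $h\to 1$, once $k$ is small.

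To bound the corrections, I would implicitly differentiate the equilibrium conditions $(1-x_a^*)(x_a^*+d) = k\,\rho_a^{SS}(h,x_a^*,x_v^*)$ and $(1-x_v^*)x_v^* = k\,\rho_v^{SS}(h,x_a^*,x_v^*)$ with respect to $h$. Rearranging produces a $2\times 2$ linear system
\[
\bigl(A - kB\bigr)\begin{pmatrix}\dd x_a^*/\dd h\\ \dd x_v^*/\dd h\end{pmatrix} \;=\; k\begin{pmatrix}\partial_h\rho_a^{SS}\\ \partial_h\rho_v^{SS}\end{pmatrix},
\]
with $A = \mathrm{diag}(1-2x_a^*-d,\,1-2x_v^*)$ invertible on the interior regime guaranteed by Lemma \ref{lemma:choices}, and $B$ a bounded matrix of partials $\partial_{x_{g'}}\rho_g^{SS}$ (bounded by the implicit function theorem applied at the stable endemic steady state). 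Inverting gives $\dd x_g^*/\dd h = O(k)$ as $k\to 0$, and since $\partial_{x_g}\rho^{SS}$ and $\partial_{x_g}CI$ are themselves bounded (smoothness of the steady state and of the Bonacich-type formula for $CI$ from Section \ref{subsec:cum}), both correction terms are $O(k)$.

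Combining, $\tfrac{\dd\rho^{SS}}{\dd h}|_{\mathrm{eq}} = \partial_h\rho^{SS} + O(k)$ and $\tfrac{\dd CI}{\dd h}|_{\mathrm{eq}} = \dd_h CI + O(k)$. Proposition \ref{Steady} delivers $\partial_h\rho^{SS}>0$ at $h\to 0$ and $<0$ at $h\to 1$; Proposition \ref{Cumulative} delivers the opposite pair of signs for $\dd_h CI$. Both leading terms are bounded away from zero as soon as $d>0$, since then Lemma \ref{lemma:choices} yields $x_a^*<x_v^*$ and hence $\Delta\rho^{SS}\neq 0$ (the mechanical signs all vanish at $d=0$ because the two groups collapse to a single one). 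Thus for $k$ small enough the $O(k)$ correction cannot flip the sign, yielding the claim. The main delicate step will be verifying that the auxiliary parameter conditions behind Propositions \ref{Steady}--\ref{Cumulative} (in particular $\mu<1-x$ for endemicity, and the bound $\mu<(1-x)^2/(1-x_a)$ invoked in Proposition \ref{prop:groups}) continue to hold along the endogenous path; this is automatic in the small-$(k,d)$ regime, since then $x_a^*,x_v^*\to 0$ so $x\to 0$ and the conditions collapse to $\mu<1$, which is maintained throughout.
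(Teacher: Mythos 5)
Your overall strategy---write the equilibrium total derivatives as the mechanical effects of Propositions \ref{Steady} and \ref{Cumulative} plus correction terms coming from $\dd x_g^*/\dd h$, and then argue the corrections are negligible for small $k$---is the same route the paper takes. The gap is in the final comparison step. You claim the mechanical leading terms are bounded away from zero once $d>0$ is fixed, so that an $O(k)$ correction cannot flip their sign. But fixing $d>0$ and sending $k\to 0$ is incompatible with the interior equilibrium your own system presupposes: interiority of $x_a^*$ requires $d< k\,\rho_a^{SS}/(1-x_a^*)$, i.e.\ $d$ below a bound proportional to $k$ (this is exactly why the paper notes that $\overline{d}\to 0$ as $k\to 0$ and takes $d$ and $k$ to zero jointly). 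In the feasible regime $d\lesssim k\to 0$ one has $x_v^*-x_a^*\to 0$, hence $\Delta\rho^{SS}\to 0$, and the mechanical terms themselves vanish (for instance $\partial_h\rho|_{h\to 1}\propto \mu\,(\Delta\rho)^2/|J|$ and the mechanical effect on $CI$ is $O((\Delta x)^2)$ at both ends). Your closing remark that $x_a^*,x_v^*\to 0$ in the small-$(k,d)$ regime contradicts the boundedness claim on which the sign-preservation argument rests.

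Once both the leading terms and the corrections vanish, an absolute $O(k)$ bound on $\dd x_g^*/\dd h$ is not enough: you need the relative statement that the correction is small compared with the (vanishing) mechanical term. That requires counting orders in $\Delta x=x_v^*-x_a^*$ and, crucially, exploiting the near-cancellation across groups: individually $\partial_h x_a^*$ and $\partial_h x_v^*$ are of order $k\,\Delta x$ with opposite signs, and only the appropriately weighted aggregates---$q\,\partial_h x_a^*+(1-q)\,\partial_h x_v^*$ for the steady state, and the analogous sum weighted by $d_{x_a}CI$, $d_{x_v}CI$ for the cumulative infection---are of order $k\,(\Delta x)^2$, i.e.\ a factor $O(k)$ below the mechanical $O((\Delta x)^2)$ terms. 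This aggregation step is precisely what the paper's explicit bounds accomplish (e.g.\ bounding $q\partial_h x_a+(1-q)\partial_h x_v$ by a multiple $\frac{k/(1-x_v)}{1+k\mu/(1-x_a)^2}$ of $\partial_h\rho$ at $h\to 1$, and the corresponding inequality for $CI$), and it is absent from your argument. Your implicit-differentiation system $(A-kB)\,\dd x^*/\dd h=k\,\partial_h\rho^{SS}$ does contain the right raw material, since it shows $\dd x_g^*/\dd h=O\bigl(k\,\partial_h\rho_g^{SS}\bigr)$, but you would still need to carry out the weighted aggregation and the $\Delta x$ order-counting (or the paper's explicit inequalities) to close the proof.
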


\subsection{Vaccination motivated by peer pressure}
\label{sec:peer}

In this section we explore an alternative scenario in which vaccination decisions are not driven by a correct evaluation of the infection risk. Instead, we consider agents who do not know or consider the infection risk at all but vaccinate purely motivated by peer pressure: they vaccinate if a sufficient fraction of the agents they meet are vaccinated. Under this vaccination model, we show that if homophily is large enough, no one vaccinates, and so infection rates are higher with respect to the case in which vaccination is based on infection risk. Moreover, the behavior of vaccination rates with respect to homophily is the opposite of what happens with vaccination based on the risk of infection: $x_a$ is \emph{decreasing} in $h$, while $x_v$ is increasing in $h$. This is because in this case agents are insensitive to risk, and increasing homophily tends to increase the peer pressure to vaccinate for vaxxers, and to decrease it for anti-vaxxers.
Despite these differences, we show that in this alternative framework, the steady-state and the cumulative infection display the same qualitative behavior, with respect to a change in homophily for $h \to 1$, as in the baseline framework we presented.

\subsubsection{Why Peer Effects in Vaccinations?}

 {Recently, many papers have found evidence of peer effects in vaccination choices. \cite{raosocial} use randomized assignment of students to dorms to estimate the peer effect in flu vaccination decisions, finding a significant effect. \cite{hoffmann2019vaccines} study randomized allocation of worker shifts to weekdays or weekends and find evidence of peer effects on vaccination decisions, which they argue are driven by the desire to conform to the social norm of the group. In an RCT, \cite{sato2019peer} find peer effects in tetanus vaccination take-up in Nigeria. \cite{ibuka2018analysis} empirically analyze online survey data, finding evidence of positive peer effects in vaccination decisions.}

 {More generally, there is indirect evidence suggesting that health behavior displays clustering in social networks: \cite{christakis2007spread} (obesity), \cite{sorensen2006social} (health plan choice). A phenomenon that attracted lot of attention is the fact that vaccination exemptions required by parents on religious grounds tend to concentrate in some types of schools, such as Steiner and Waldorf schools, where they constitute a large fraction of the population. The phenomenon attracted a lot of discussion by medical researchers: \cite{ernst2012implications}, \cite{may2003clustering}, \cite{muscat2011gets}, \cite{zier2020attend}.
It is documented for California in \cite{silverman2019lessons}.
Recent evidence shows that similar trends \href{https://www.ilfoglio.it/salute/2019/05/23/news/chiuso-un-altro-asilo-no-vax-la-mappa-delle-scuole-fantasma-che-piacciono-agli-antivaccinisti-256645/}{happened in Italy} and \href{https://www.nytimes.com/2019/04/15/nyregion/measles-nyc-yeshiva-closing.html}{have been considered a cause of a measles outbreak in Manhattan in April 2019} (on this, see \citealp{mashinini2020impact}, and the earlier discussion by \citealp{shaw2014united}).
There is reason to believe that these more permissive schools have attracted parents that are more skeptical of vaccinations. For example, \cite{sobo2015social} argues that school community norms have an important impact in vaccine skepticism among families of children attending Steiner schools.}

 {\cite{edge2019observational} document that vaccination patterns in a network of social contacts of physicians in Manchester hospitals are correlated with being close in the network. Geographic clustering of vaccination behavior is another fact that suggests a role of peer effects: \cite{lieu2015geographic} show that vaccine-hesitant people are more likely to communicate with each other than with others. \cite{cullen2023direct} find that health care providers can affect the vaccination decisions of the social networks of their patients.}

 {Another strand of literature studies individual decision-making related to vaccinations, showing that behavior often departs from a simple cost-benefit calculation. In many cases, providing more information does not change the minds of vaccine-hesitant people (see \citealp{nyhan2013hazards,nyhan2014effective} and \citealp{nyhan2015does}).}

%REFERENCE NOWAK finds evidence that peer effects affect vaccination choices in a student population.  

\subsubsection{Formal model}

 {
The agents in the two groups have heterogeneous vaccination costs, whose distribution follows the same assumptions as in Section \ref{sec:vaccination}. The key difference is the evaluation of the health effect of being vaccinated. Rather than evaluating the disutility connected to the average infection rate, as in Section \ref{sec:vaccination}, agents evaluate the peer pressure stemming from other agents' vaccination decisions. In particular, agents compare the vaccination cost with a peer pressure term equal to the average number of infected individuals in their neighborhood.
}

 {Formally, an agent in group $a$ with heterogeneous cost $c$ vaccinates if $c<\tilde{q}_a x_a+(1-\tilde{q}_a)x_v$, and an agent in group $v$, analogously, vaccinates if $c<(1-\tilde{q}_v)x_a+\tilde{q}_v x_v$. Hence, the vaccination rates at interior solutions are defined by:}
\begin{align}
x_a=&k(\tilde{q}_a x_a+(1-\tilde{q}_a)x_v)-d,\nonumber \\
x_v=&k(\tilde{q}_v x_v+(1-\tilde{q}_v)x_a).
\label{vax_peer}
\end{align}

\begin{comment}

This is a linear system, and the interior solution is (whenever feasible):
\[
\left(\begin{array}{c}
x_a\\
x_v
\end{array}\right)=\left(I-k\left(\begin{array}{cc}
\tilde{q}_a & 1-\tilde{q}_a\\
1-\tilde{q}_v & \tilde{q}_v
\end{array}\right)\right)^{-1}\left(\begin{array}{c}
-d\\
0
\end{array}\right).
\]
This formulation highlights how the equilibrium vaccination rates are proportional to Bonacich centralities in the network defined by the meeting rates $\left(\begin{array}{cc}
\tilde{q}_a & 1-\tilde{q}_a\\
1-\tilde{q}_v & \tilde{q}_v
\end{array}\right)$. If $k<1$ the inverse matrix is positive, and so there are no interior solutions, and all the vaccination rates are 0. 
\end{comment}

When $h$ is not too large, these equations have interior solutions given by:
\begin{align*}
x_a=&\frac{d (1-k (1-(1-h) q))}{(k-1) (1-h k)}, \quad
x_v= \frac{d (1-h) k q}{(k-1) (1-hk)}.
\end{align*}
 {while for $h$ high enough the peer effects are so high that either no one vaccinates or everyone vaccinates. In particular, this means that, for $h\to 1$, we have $\partial_h x_a=\partial_h x_v=0$. So, for $h\to 1$, the effect of $h$ when vaccinations are exogenous is exactly the same as when vaccinations are endogenous, and so the results of Section \ref{sec:mechanical} apply.}
%Thus, if $h>1/k$ we must have $x_v=0$, and if $h>1-\frac{k-1}{kq}$, then $x_a=0$. Hence, 
  
The derivatives of the interior solutions with respect to homophily are:
\begin{align*}
\partial_h x_a&=-\frac{d k (1-q)}{(h k-1)^2}, \quad 
\partial_h x_v=\frac{d k q}{(h k-1)^2},
\end{align*}
 {
where it  is immediately clear that the signs are opposite with respect to the signs of the derivatives in the rational model in Section \ref{sec:vaccination}. The reason is that, here, homophily magnifies the influence of peer effects within each group. In the $a$ group, where there are fewer vaccinated individuals, an increase in homophily tends to decrease peer pressure to vaccinate, thereby reducing infection levels. This is diametrically opposite to the effect of homophily in the rational model: there, homophily magnifies the risk of infection from individuals of the same group, leading to an increase in vaccinations in the $a$ group. The opposite occurs in the $v$ group.
}

 {The next Proposition characterizes all the vaccination equilibria, and formalizes the results discussed above.}
\begin{prop}

In the model of vaccination induced by peer pressure:
\begin{enumerate}
    
\item No vaccinations ($x_a^*=x_v^*=0$) is always an equilibrium. Moreover, there is also an interior equilibrium if $k>1$, $h<1-\frac{k-1}{kq}$ and $d<\frac{(k-1)(1-hk)}{k q(1-h)}$ (the complete characterization can be found in the proof in Appendix \ref{app:peer}).

\item In the range of $h$ such that the vaccination rates are interior, $\partial_h x_a<0$ and $\partial_h x_v>0$.

\item For $h\to 1$, we have $d_h CI>0$, and $d_h \rho^{SS}<0$, as in Proposition \ref{prop:endovax}.

\end{enumerate}

\label{prop:peer}
    
\end{prop}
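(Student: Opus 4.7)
My plan is to dispatch the three claims separately: a fixed-point analysis of the linear peer-pressure system for part 1, direct differentiation of the closed-form formulas for part 2, and a reduction to the exogenous-vaccination comparative statics of Propositions \ref{Steady}--\ref{Cumulative} for part 3.

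For part 1, the no-vaccination profile $(0,0)$ is always an equilibrium because vanishing peer pressure lies strictly below the lower endpoint $d/k > 0$ of the anti-vaxxer cost support and meets only a null set of the vaxxer support. For interior equilibria, I would eliminate $x_v$ from the second equation of \eqref{vax_peer} as $x_v = k(1-\tilde{q}_v)x_a/(1-k\tilde{q}_v)$ and substitute into the first; using $\tilde{q}_a+\tilde{q}_v = 1+h$, the identity
\[
(1-k\tilde{q}_a)(1-k\tilde{q}_v) - k^2(1-\tilde{q}_a)(1-\tilde{q}_v) = (1-k)(1-kh)
\]
delivers the closed-form expressions stated in the text. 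Translating $x_a^\ast, x_v^\ast \in (0,1)$ into parameter conditions yields $k>1$ and $h < 1-(k-1)/(kq)$ for positivity (exactly the condition $1-k\tilde{q}_v > 0$), together with $d < (k-1)(1-hk)/(kq(1-h))$ for $x_v^\ast < 1$; the analogous upper bound for $x_a^\ast$ is weaker and follows automatically.

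For part 2, the quotient rule applied to the two closed forms gives, after a small simplification,
\[
\partial_h x_a^\ast = -\frac{dk(1-q)}{(1-hk)^2},\qquad \partial_h x_v^\ast = \frac{dkq}{(1-hk)^2},
\]
whose signs are immediate throughout the interior regime $1-hk > 0$.

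Part 3 contains the main obstacle. The observation that makes it tractable is that for $h$ in a left-neighborhood of $1$ one lies strictly above the threshold $1-(k-1)/(kq)$, so no interior equilibrium survives and vaccination rates are pinned at a corner of the feasible set: in particular $\partial_h x_a^\ast = \partial_h x_v^\ast = 0$ locally, so the total derivatives $d_h \rho^{SS}$ and $d_h CI$ collapse onto the direct partial derivatives at fixed vaccination rates, and Propositions \ref{Steady} and \ref{Cumulative} in the $h\to 1$ regime yield the claimed signs. The delicate step is selecting the correct corner: the symmetric $(0,0)$ profile would make the two groups indistinguishable and kill both direct effects, so I would instead evaluate at an asymmetric corner such as $(x_a^\ast, x_v^\ast) = (0,1)$, which is an equilibrium whenever $k\geq 1$, $d \leq k-1$, and $h$ is sufficiently close to $1$ (the first condition ensures that every vaxxer vaccinates given own-group peer pressure $\tilde{q}_v \to 1$, while the remaining two ensure that no anti-vaxxer does, since the cross-group peer pressure $(1-\tilde{q}_a) = (1-h)(1-q)\to 0$ falls below $d/k$). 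At this corner $x_a^\ast < x_v^\ast$, so the hypotheses of Propositions \ref{Steady} and \ref{Cumulative} are met, and their $h\to 1$ conclusions deliver $d_h \rho^{SS} < 0$ and $d_h CI > 0$.
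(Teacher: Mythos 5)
Your Parts 1 and 2 are correct and essentially coincide with the paper's own argument: the paper also solves the linear system \eqref{vax_peer} directly (your elimination identity $(1-k\tilde{q}_a)(1-k\tilde{q}_v)-k^2(1-\tilde{q}_a)(1-\tilde{q}_v)=(1-k)(1-kh)$ is just the determinant computation done there), obtains the same closed forms and feasibility conditions, and differentiates them. The only omission is that the statement advertises a \emph{complete} characterization of equilibria (including the corners $(0,1)$, $(1,1)$ and the mixed corner with $x_v^*=1$ and $x_a^*$ interior, plus uniqueness of $(0,0)$ for $k\le 1$), which your write-up does not supply; also your sufficient condition $d\le k-1$ for the corner $(0,1)$ is not the right one — near $h=1$ the binding requirements are $k\tilde{q}_v\ge 1$ and $k(1-\tilde{q}_a)=k(1-h)(1-q)\le d$, the latter holding automatically for any $d>0$.

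Part 3 is where there is a genuine gap, and your deviation from the paper makes it concrete. The paper's route is simply: near $h=1$ no interior equilibrium survives, so at the equilibrium played the vaccination rates are locally constant, $\partial_h x_a=\partial_h x_v=0$, and the total derivatives reduce to the exogenous-vaccination comparative statics of Propositions \ref{Steady} and \ref{Cumulative}. You correctly sense that the symmetric corners $(0,0)$ and $(1,1)$ make the two groups identical and kill the strict signs, but your fix — evaluating at the asymmetric corner $(x_a^*,x_v^*)=(0,1)$ and then citing Propositions \ref{Steady} and \ref{Cumulative} — does not work, because those propositions (and the limits $S_a,S_v\to\mu$, $\rho_v^{SS}\to 1-x_v-\mu>0$ used in their proofs) require both vaccination rates interior, in particular $x_v<1-\mu$. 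At $x_v=1$ the vaxxer group has no susceptibles, $\rho_v^{SS}\equiv 0$, the steady state reduces to $\rho^{SS}=q\left(1-x_a-\mu/\tilde{q}_a\right)$, and this is \emph{increasing} in $h$ for all $h$ — exactly the case flagged in the footnote to Proposition \ref{Steady}, where $\partial_h\rho=-S_a(1-q)\Delta\rho/J_{11}>0$. So at the corner you select, the claimed sign $d_h\rho^{SS}<0$ is actually reversed, and the cumulative-infection sign would also need a separate argument since the dynamics degenerates to a single-group SIS. To repair the step you would have to either restrict attention to the equilibria the paper has in mind (where the reduction to the exogenous case with heterogeneous interior rates is legitimate) or redo the $h\to 1$ comparative statics from scratch at the corner configuration; as written, the final step of your proof fails.
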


%and we can see that they are identical up to proportionality factors $1-q,$ and $q$. Since, for $h\to 0$, the derivatives of the steady state are proportional to $q$ and $1-q$, the additional effect of endogenizing vaccination rates is 0 for $h\to 0$. It is also null for $h\to 1$, because in that case, as we noted above, both vaccination rates are 0 and so they are not sensitive to homophily. It follows that under this vaccination model the impact of homophily on both the steady-state and the cumulative infection for $h\to 0$ and $h\to 1$ are the same as in the model with exogenous vaccination rates. In particular, the impact on the steady state and the impact on the cumulative infection are still opposite.

\subsubsection{A mixed model}

 {
Finally, as a worst-case scenario from the planner's perspective, we consider a model with mixed motivations. Specifically, we consider the case where anti-vaxxers, instead of correctly evaluating the risk, rely on peer pressure, whereas vaxxers are more prone to evaluate risks accurately. This example is consistent with empirical evidence showing that anti-vaccine attitudes are more prone to diffusion in online social networks (see, e.g., \citealp{puri2020social}). Under these assumptions, the effect of homophily can be the most negative. Indeed, for some parameter values, an increase in homophily unambiguously decreases vaccination rates in \emph{both groups}: vaxxers vaccinate less due to perceived lower risk, while anti-vaxxers vaccinate less due to peer pressure. To be formal, assume:
}
\[
\begin{cases}
x_a&=k(\tilde{q}_a x_a+(1-\tilde{q}_a)x_v)-d,\\    
x_v&=k\rho_v.
\end{cases}
\]

Indeed, thanks to the implicit function theorem the derivatives of vaccination rates are the following:
\begin{align*}
\left(\begin{array}{cc}
\partial_h x_a  \\
     \partial_h x_v
\end{array}\right)=-\frac{1}{det}\left(\begin{array}{cc}
     -(1-k\partial_{x_v}\rho_v)k(1-h)\Delta x-k^2(1-\tilde{q}_a) \partial_h \rho_v \\
-k^2\partial_{x_a}\rho_v(1-h)\Delta x-(1-k\tilde{q}_a)k \partial_h \rho_v
\end{array}\right)
\end{align*}
where $det=(1-k\partial_{x_v}\rho_v)(1-k\tilde{q}_a)-k^2(1-\tilde{q}_a)\partial_{x_a}\rho_v$. If $1>k\tilde{q}_a$, since $\partial_{x_v}\rho_v<0$ and $\partial_{x_a}\rho_v<0$ we have $det>0$. Moreover, since $\partial_h \rho_v<0$ it follows $     \partial_h x_v<0$. Finally, if $\Delta x$ is small enough, we also have $\partial_h x_a <0$.

Hence, we conclude that when the group with higher cost evaluation peer pressure is the driver of vaccination, while in the other group is the fear of infection, for some parameter values homophily unambiguously decreases equilibrium vaccinations in both.

\section{Conclusion}
\label{sec:conclusion}

% The problem of vaccine skepticism is a complex one, and requires an analysis from multiple perspectives, e.g., psychological, medical, and social. The results of this paper might be relevant for a policy maker interested in minimizing infection in a world with vaxxers and anti--vaxxers, having available a policy inducing some degree of segregation, or homophily, $h$. 
% The key observation is that reducing contact with anti--vaxxers may be counterproductive both from the perspective of vaxxers and of the society as a whole because it slows down the dynamics of the disease to its steady state, if there is an outbreak.
% Homophily may actually increase the duration of the outbreaks and, depending on the time preferences of the planner, this might crucially change the impact of the policy. %Further, if belonging to the vaxxers or anti--vaxxers group is endogenous, the intensity of cultural substitution is key in determining the impact of the policy. Our results suggest that the study of policy responses to the spread of vaccine-hesitant sentiment would benefit from trying to pin down more precisely the intensity of these mechanisms.
	
% \subsection{Alternative conclusion}

 {
The problem of vaccine skepticism is a complex one that requires analysis from multiple perspectives, including psychological, medical, and social dimensions. Our study investigates the effects of homophily on infection dynamics within a population divided into "vaxxers" and "anti-vaxxers." Our key findings can be summarized as follows.
}

 {
Firstly, we demonstrate that homophily has a nuanced impact on infection dynamics. Specifically, homophily shows, intuitively, a hump-shaped impact on steady-state infection levels but a U-shaped impact on the cumulative number of infections. This finding underscores the importance of considering both static and dynamic effects when evaluating the role of homophily in disease spread.
}

 {
Secondly, we explore the differences between endogenous and exogenous vaccination rates. When vaccination rates are endogenous, homophily affects the two groups differently. For anti-vaxxers, increased homophily reduces peer pressure to vaccinate, leading to lower infection levels within this group. Conversely, for vaxxers, homophily enhances vaccination rates due to increased peer pressure, thereby mitigating overall infection rates through higher vaccination levels.
}

 {
Thirdly, we investigate scenarios where vaccination decisions are influenced by peer pressure rather than a rational evaluation of infection risk. Our results indicate that high levels of homophily can result in lower vaccination rates in both groups. Vaxxers may perceive a lower risk, while anti-vaxxers are swayed by peer pressure, both contributing to reduced vaccination uptake.
}

 {
Lastly, while our findings provide valuable insights into the interplay between homophily and vaccination behavior, they also highlight potential challenges in formulating effective public health strategies. For instance, policies aimed at reducing contact with anti-vaxxers might inadvertently prolong outbreaks and exacerbate infection rates, depending on the time preferences of the policy planner. However, this point requires careful consideration of broader epidemiological and social factors.
}

 {
By addressing these points, we hope to provide a clearer understanding of the nuanced effects of homophily on vaccination behavior and infection dynamics, which can inform more effective public health strategies.
}

\clearpage

	\appendix
	%dummy comment inserted by tex2lyx to ensure that this paragraph is not empty
	\global\long\def\thesection{Appendix \Alph{section}}
	\global\long\def\thesubsection{\Alph{section}.\arabic{subsection}}
	\setcounter{equation}{0} \global\long\def\theequation{\alph{equation}}
	\setcounter{lemma}{0} \global\long\def\thelemma{\Alph{lemma}}
	\setcounter{prop}{0} \global\long\def\theprop{\Alph{prop}}
	\setcounter{defi}{0} \global\long\def\thedefinition{\Alph{defi}}
	\setcounter{figure}{0} \global\long\def\thefigure{\Alph{figure}}

\section{Analysis of the SIR model}
\label{app:SIR}

In this section, we study a diffusion described by a SIR model, and we characterize how the cumulative infection depends on homophily. We do so, because the SIR model is the workhorse model used to describe viral diseases, as for example Covid-19, and so it increases the range of applicability of our results, beyond the settings well-described by the SIS model.

In the SIR model, the steady state infection level is always zero, making it independent of homophily. However, the cumulative infection depends on the development of the dynamic, and is affected by homophily: so, our main insight about the difference between the steady state and the cumulative infection translates to the SIR model.
In the following, we give more structure to this statement, characterizing analytically the effect of homophily on the cumulative infection, under an assumption on the parameter $\mu$ that makes the problem more tractable. We note that the result does not rely on any linear approximation of the dynamics around the steady state.\footnote{Indeed, this approximation is not valid for the SIR model, since the steady states are not hyperbolic, that is they have zero eigenvalues, and so the linearization theorem fails (see \citealp{brauer2012mathematical}, and \citealp{meiss2007differential}).} 

First, we define the model. The key assumption is that recovered agents are not susceptible to the disease, but immune. So, they behave as agents that got a vaccination. We are going to denote the sum of recovered and vaccinated agents in the two groups as, respectively, $R_a$ and $R_v$. In this section, we limit ourselves to the case of exogenous vaccination, and we keep the assumption that the vaccination rates do not change over time. So, vaccination rates $x_a$ and $x_v$ represent the initial levels of removed agents: further agents become removed once they get infected and then recover.
The equation describing the SIR model are the following.
\begin{align}
    \dot{\rho}_{a}&=\tilde{\rho}_{a}(1-R_{a}-\rho_{a})-\mu\rho_{a}\nonumber\\
    \dot{R}_{a}&=\mu\rho_{a}\nonumber\\
    \dot{\rho}_{v}&=\tilde{\rho}_{v}(1-R_{v}-\rho_{v})-\mu\rho_{v}\nonumber\\
    \dot{R}_{v}&=\mu\rho_{v}
    \label{SIR}
\end{align}
with the initial conditions: $R_a(0)=x_a$, $R_v(0)=x_v$, $\rho_a(0)=\rho_{v,0}=\overline{d\rho}_0$.

First of all, we note that in this case, the cumulative infection is the same as the number of recovered in the long-run steady state:
\[
CI=q\int_0^{\infty}\rho_{a,t}dt+(1-q)\int_0^{\infty}\rho_{v,t}dt=\dfrac{1}{\mu}(qR_a^{SS}+(1-q)R_v^{SS})=\dfrac{1}{\mu}R
\]
where $R=qR_a^{SS}+(1-q)R_v^{SS}$ denotes the total number of recovered in the steady  state.
Since the long-run steady state is zero, here there is no need of discounting.

The formal result is as follows. We make the simplifying assumption $\mu>1-x_a$ because it makes some steps of the proof analytically tractable.\footnote{Note that for here we do not need to make any assumption to ensure an interior stable steady state, since in every steady state infection is zero, for any $\mu$.} We note that the result does not rely on any linear approximation of the dynamics around the steady state.

\begin{prop}

In the SIR model \eqref{SIR}, suppose that $x_v>x_a$ and $\mu>1-x_a$. Then 
$CI$ is increasing in $h$ if and only if $R_v^{SS}>R_a^{SS}$.

In particular, this condition is satisfied if $h$ is close to 0, or $h$ is close to 1 and $\mu$ large enough.
    
\end{prop}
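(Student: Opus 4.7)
The plan is to reduce the SIR dynamics to a pair of implicit ``final size'' relations and then differentiate them with respect to $h$. Starting from $\dot S_g = -\tilde\rho_g S_g$ (with $S_g = 1 - R_g - \rho_g$), dividing by $S_g$ and integrating $d\log S_g/dt = -\tilde\rho_g$ over $[0,\infty)$, while using $\int_0^\infty \rho_g\,dt = (R_g^{SS}-x_g)/\mu$, yields
\begin{align*}
\log(1-R_a^{SS}) &= \log(1-x_a-\overline{d\rho_0}) - \tfrac{1}{\mu}\bigl[\tilde q_a(R_a^{SS}-x_a)+(1-\tilde q_a)(R_v^{SS}-x_v)\bigr],\\
\log(1-R_v^{SS}) &= \log(1-x_v-\overline{d\rho_0}) - \tfrac{1}{\mu}\bigl[\tilde q_v(R_v^{SS}-x_v)+(1-\tilde q_v)(R_a^{SS}-x_a)\bigr].
\end{align*}
Setting $\alpha_g := R_g^{SS}-x_g$ and $\Delta\alpha := \alpha_a-\alpha_v$, subtraction together with the identity $\tilde q_a+\tilde q_v = 1+h$ gives
\[
\log\frac{1-R_a^{SS}}{1-R_v^{SS}} = \log\frac{1-x_a-\overline{d\rho_0}}{1-x_v-\overline{d\rho_0}} - \frac{h}{\mu}\Delta\alpha,
\]
and, combined with $\dot\rho_a(0)-\dot\rho_v(0)=\overline{d\rho_0}(x_v-x_a)>0$ propagated forward in time, implies $\Delta\alpha>0$.

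Next, implicit differentiation of both final size relations with respect to $h$, using $\partial_h\tilde q_a = 1-q$ and $\partial_h\tilde q_v = q$, produces the $2\times 2$ linear system
\[
\begin{pmatrix} A & 1-\tilde q_a \\ 1-\tilde q_v & B \end{pmatrix}\begin{pmatrix}\partial_h R_a^{SS}\\ \partial_h R_v^{SS}\end{pmatrix} = \Delta\alpha\begin{pmatrix} -(1-q)\\ q\end{pmatrix},
\]
where $A=\tilde q_a - \mu/(1-R_a^{SS})$ and $B=\tilde q_v - \mu/(1-R_v^{SS})$. Since $R_g^{SS}\ge x_g$ and $\mu > 1-x_a \ge 1-x_v$, we have $\mu/(1-R_g^{SS})>1$, hence $-A>1-\tilde q_a$ and $-B>1-\tilde q_v$, so the determinant $D := AB - (1-\tilde q_a)(1-\tilde q_v) > 0$.

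Solving by Cramer's rule and forming $\mu\,\partial_h CI = q\,\partial_h R_a^{SS} + (1-q)\,\partial_h R_v^{SS}$, the key algebraic step is the cancellation (using $1-\tilde q_a=(1-h)(1-q)$ and $1-\tilde q_v=(1-h)q$)
\[
q(1-q)(1-h)(2q-1) + (1-\tilde q_v)(1-q)^2 - (1-\tilde q_a)q^2 = 0,
\]
which annihilates the ``constant'' part of $A-B$ and leaves only its $\mu(R_v^{SS}-R_a^{SS})/[(1-R_a^{SS})(1-R_v^{SS})]$ portion. This gives
\[
\partial_h CI = \frac{q(1-q)\,\Delta\alpha\,(R_v^{SS}-R_a^{SS})}{D\,(1-R_a^{SS})(1-R_v^{SS})},
\]
whose sign equals that of $R_v^{SS}-R_a^{SS}$ since $\Delta\alpha$, $D$, $1-R_a^{SS}$, $1-R_v^{SS}$ are all positive; this proves the ``iff'' statement.

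For the two limiting regimes, at $h=0$ the subtracted relation reduces to $\log(1-R_a^{SS})-\log(1-R_v^{SS}) = \log(1-x_a-\overline{d\rho_0})-\log(1-x_v-\overline{d\rho_0}) > 0$, so $R_v^{SS}>R_a^{SS}$, and by continuity in $h$ the inequality persists in a neighborhood. At $h=1$ the two final size relations decouple into independent SIR equations $\log(1-R_g^{SS}) = \log(1-x_g-\overline{d\rho_0}) - (R_g^{SS}-x_g)/\mu$; writing $R_g^{SS} = x_g+\overline{d\rho_0}+\delta_g$ and expanding in $1/\mu$ gives $\delta_g = \overline{d\rho_0}(1-x_g-\overline{d\rho_0})/[\mu-(1-x_g-\overline{d\rho_0})] = O(1/\mu)$, so $R_v^{SS}-R_a^{SS} = (x_v-x_a) + O(1/\mu) > 0$ once $\mu$ is large enough, and continuity in $h$ extends this to a neighborhood of $h=1$. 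The main obstacle will be executing the algebraic cancellation cleanly and rigorously justifying $\Delta\alpha>0$ as a global property of the nonlinear final-size system rather than only a local one.
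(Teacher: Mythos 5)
Your overall route is the same as the paper's: you derive the final--size relations by integrating $\dot S_g=-\tilde\rho_g S_g$ and substituting $\int_0^\infty\rho_g\,dt=(R_g^{SS}-x_g)/\mu$, differentiate them implicitly in $h$, use $\mu>1-x_a$ to get an M--matrix (positive determinant), and land on exactly the paper's formula $\partial_h CI\propto q(1-q)\,\Delta\alpha\,\bigl(\tfrac{1}{1-R_v^{SS}}-\tfrac{1}{1-R_a^{SS}}\bigr)$; your cancellation identity checks out, and your treatment of the limits $h\to0$ and $h\to1$ with $\mu$ large matches the paper's (subtracting the relations at $h=0$, decoupled single--group equations and an expansion in $1/\mu$ at $h=1$). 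So the differentiation and sign analysis are correct and essentially identical to the published argument.

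The genuine gap is the positivity of $\Delta\alpha$ (equivalently $CI_a>CI_v$), on which the whole ``iff'' rests. You justify it by ``$\dot\rho_a(0)-\dot\rho_v(0)>0$ propagated forward in time,'' and you yourself flag this as the remaining obstacle; as stated it is not a proof, and the obvious comparison argument does not go through naively: the region $\{\rho_a\ge\rho_v,\ S_a\ge S_v\}$ is not forward invariant, since on the face $S_a=S_v$ with $\rho_a>\rho_v$ one has $\dot S_a-\dot S_v=-S_a h(\rho_a-\rho_v)\le 0$, so susceptibles in group $a$ can drop below those in group $v$ and a pointwise-in-time ordering of $\rho_a,\rho_v$ would need a more delicate argument. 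The paper closes this step differently and entirely at the level of the final--size system: it shows $\Delta CI>0$ at $h=0$, computes that $\partial_h\Delta CI$ has the same sign as $\Delta CI$, and then runs a continuity/contradiction argument in $h$ to conclude $\Delta CI>0$ for all $h$. Alternatively, you can close it directly from your own relations: if $\Delta\alpha\le 0$ then $\tilde\alpha_a-\tilde\alpha_v=h\Delta\alpha\le 0$, and subtracting $\alpha_g=\overline{d\rho_0}+(1-x_g-\overline{d\rho_0})(1-e^{-\tilde\alpha_g/\mu})$ together with $1-e^{-z}\le z$ gives $|\Delta\alpha|\le \tfrac{h(1-x_v-\overline{d\rho_0})}{\mu}|\Delta\alpha|$, which contradicts $\mu>1-x_a$ unless $\Delta\alpha=0$, and $\Delta\alpha=0$ forces $\Delta\alpha=(x_v-x_a)(1-e^{-\tilde\alpha_a/\mu})>0$. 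Either way, this step must be made rigorous before your proof is complete.
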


\begin{proof}

To derive the result, we use the approach of \cite{pastor2001epidemic}, that allows to write an implicit system of equations that defines the number of recovered at the end of the epidemic.

First, let us solve the equation for the susceptible agents:

\begin{align*}
\dot{S}_{a}	&=-\tilde{\rho}_{a}S_{a}\\
S_{a}(0)	&=1-x_{a}-\overline{d\rho}_0
\end{align*}
In the calculations from now on, we assume $\overline{d\rho}_0\to 0$, to simplify the notation. This is inconsequential (in the formulas for $CI$, it is sufficient to redefine the vaccination rates).

Integrating this equation, we get: $S_{a}(t)=(1-x_{a})e^{-\int_{0}^{t}\tilde{\rho}_{a}(s)ds}$.

Now, solving the equation for $\dot{R}_a$, we get: $R_{a}=x_{a}+\mu\int_{0}^{t}\rho_{a}(s)ds$, so we can write:
\begin{align*}
\int_{0}^{t}\tilde{\rho}_{a}(s)ds	&=\tilde{q}_{a}\int_{0}^{t}\rho_{a}(s)ds+(1-\tilde{q}_{a})\int_{0}^{t}\rho_{v}(s)ds
\\
&=\dfrac{1}{\mu}\left(\tilde{q}_{a}\left(R_{a}(t)-x_{a}\right)+(1-\tilde{q}_{a})\left(R_{v}(t)-x_{v}\right)\right)\\
&	=\dfrac{1}{\mu}\left(\tilde{R}_{a}(t)-\tilde{x}_{a}\right)    
\end{align*}
where: 
\begin{align*}
    \tilde{R}_{a}(t)&=\tilde{q}_{a}R_{a}(s)+(1-\tilde{q}_{a})R_{v}(s)
\end{align*}

Differentiating:
\begin{align*}
\dfrac{d\tilde{R}_{a}(t)}{dt}	&=\tilde{\rho}_{a}(t)\\
	&=\tilde{q}_{a}(1-R_{a}-S_{a})+(1-\tilde{q}_{a})(1-R_{v}-S_{v})\\
	&=1-\tilde{R}_{a}(t)-\tilde{q}_{a}(1-x_{a})e^{-\dfrac{1}{\mu}\left(\tilde{R}_{a}(t)-\tilde{x}_{a}\right) }-(1-\tilde{q}_{a})(1-x_{v})e^{-\dfrac{1}{\mu}\left(\tilde{R}_{v}(t)-\tilde{x}_{v}\right) }
 \end{align*}
and analogously for $\tilde{R}_v$.

Now, we use the fact that $CI_{a}=\dfrac{1}{\mu}(R_{a}-x_{a})$, $CI_{v}=\dfrac{1}{\mu}(R_{v}-x_{v})$, and defining $\tilde{CI}_{a}=\tilde{q}_{a}CI_{a}+(1-\tilde{q}_{a})CI_{v}$, $\tilde{CI}_{v}=(1-\tilde{q}_{v})CI_{a}+\tilde{q}_{v}CI_{v}$.
In the long run we know that $\tilde{\rho}_{a}(t)\to0$, so we get:
\begin{align*}
    \tilde{R}_{a}&	=1-\tilde{q}_{a}(1-x_{a})e^{-\tilde{CI}_{a}}-(1-\tilde{q}_{a})(1-x_{v})e^{-\tilde{CI}_{v}}\\
\tilde{R}_{v}&	=1-\tilde{q}_{v}(1-x_{v})e^{-\tilde{CI}_{v}}-(1-\tilde{q}_{v})(1-x_{a})e^{-\tilde{CI}_{a}}
\end{align*}

Now:
\[\begin{pmatrix}\tilde{R}_{a}\\
\tilde{R}_{v}
\end{pmatrix}=Q\begin{pmatrix}R_{a}\\
R_{v}
\end{pmatrix}
\]
defining 
\[Q=\begin{pmatrix}\tilde{q}_{a} & 1-\tilde{q}_{a}\\
1-\tilde{q}_{v} & \tilde{q}_{v}
\end{pmatrix},
\]
the adjacency matrix of the contact network.

So we can rewrite the above  formula as:
\[
\begin{pmatrix}R_{a}\\
R_{v}
\end{pmatrix}	=Q^{-1}\left(\boldsymbol{1}-Q\begin{pmatrix}(1-x_{a})e^{-\tilde{CI}_{a}}\\
(1-x_{v})e^{-\tilde{CI}_{v}}
\end{pmatrix}\right)
	=\left(\boldsymbol{1}-\begin{pmatrix}(1-x_{a})e^{-\tilde{CI}_{a}}\\
(1-x_{v})e^{-\tilde{CI}_{v}}
\end{pmatrix}\right)
\]
where $\boldsymbol{1}$ is a vector of ones, and
because $Q^{-1}\boldsymbol{1}=\boldsymbol{1}$, since all powers remain row-stochastic. This approach significantly simplifies the analysis.

So, we get equations that define implicitly the cumulative infections:
\begin{align*}
    F_{a}&	=-CI_{a}+\dfrac{(1-x_{a})}{\mu}\left(1-e^{-\tilde{CI}_{a}}\right)=0\\
F_{v}&	=-CI_{v}+\dfrac{(1-x_{v})}{\mu}\left(1-e^{-\tilde{CI}_{v}}\right)=0
\end{align*}

Now let us compute the derivative.
\begin{align*}
    J_{CI}F	&=-I+\dfrac{1}{\mu}\begin{pmatrix}(1-x_{a})e^{-\tilde{CI}_{a}}\tilde{q}_{a} & (1-x_{a})e^{-\tilde{CI}_{a}}(1-\tilde{q}_{a})\\
(1-x_{v})e^{-\tilde{CI}_{v}}(1-\tilde{q}_{v}) & (1-x_{v})e^{-\tilde{CI}_{v}}\tilde{q}_{v}
\end{pmatrix}\\
	=&-I+EQ,
\end{align*}
defining 
\[
E=\begin{pmatrix}\dfrac{1}{\mu}(1-x_{a})e^{-\tilde{CI}_{a}} & 0\\
0 & \dfrac{1}{\mu}(1-x_{v})e^{-\tilde{CI}_{v}}
\end{pmatrix}.
\]

Further defining $\Delta CI=CI_a-CI_v$, we get:
\begin{align*}
    J_{h}F &	=-\dfrac{1}{\mu}\begin{pmatrix}(1-x_{a})e^{-\tilde{CI}_{a}}\left((1-q)(-\Delta CI)\right)\\
(1-x_{v})e^{-\tilde{CI}_{v}}q(\Delta CI)
\end{pmatrix}\\
	&=E\begin{pmatrix}1-q\\
-q
\end{pmatrix}\Delta CI=E\begin{pmatrix}1-q\\
-q
\end{pmatrix}\Delta CI
\end{align*}

So, the object of interest is:
\[
\partial_{h}CI	=(I- EQ)^{-1}E\begin{pmatrix}1-q\\
-q
\end{pmatrix}\Delta CI
	=( E^{-1}-Q)^{-1}\begin{pmatrix}1-q\\
-q
\end{pmatrix}\Delta CI
\]
The first expression highlights better the connection with the centrality, the second is more compact. By construction, the inverse has positive elements, and its eigenvalues have positive real part, so the determinant is positive. The key is the trade-off of the two direct effects. Under our assumption on $\mu$, the matrix is an $M$-matrix (that is, with non-positive off-diagonal entries and eigenvalues with nonnegative real parts), so it is invertible, and has positive determinant.

By a direct calculation we find:
\[
\partial_{h}CI=\mu\dfrac{q(1-q)}{det(E^{-1}- Q)} \left(\dfrac{e^{\tilde{CI}_{v}}}{1-x_{v}}-\dfrac{e^{\tilde{CI}_{a}}}{1-x_{a}}\right)\Delta CI
\]

From the steady state equations we get: 
\[
\dfrac{e^{\tilde{R}_{v}}}{1-x_{v}}-\dfrac{e^{\tilde{R}_{a}}}{1-x_{a}}=\dfrac{1}{1-R_{v}}-\dfrac{1}{1-R_{a}}
\] 
that is positive if $R_{v}>R_{a}$. In the next paragraph, we prove that $\Delta CI>0$. So, the derivative of cumulative infection with respect to homophily is positive if and only if $R_{v}>R_{a}$.

\textbf{Proof that $\Delta CI>0$.}

What is the sign of $\Delta CI$? Focus on $h\to0$. In this case, $CI_{a}-CI_{v}=\dfrac{(x_{v}-x_{a})}{\mu}\left(1-e^{-CI}\right)>0$. So, for $h\to 0$, $CI_a>CI_v$.

We can analytically compute the derivatives of the cumulative  infection for each group, and we find:
\[
\partial_hCI_a=\dfrac{\mu^2(1-q)(1-E_1^{-1})}{det((\mu E^{-1}-Q)^{-1})}\Delta CI\quad \partial_hCI_v=-\dfrac{\mu^2q(1-E_2^{-1})}{det( E^{-1}-Q)}\Delta CI
\]
where $E_1$ and $E_2$ are the diagonal elements of the matrix $E$, and they are both smaller than 1.
So we get:
\[
\partial_h \Delta CI=\dfrac{\mu((1-q)(1-E_1^{-1})+q(1-E_2^{-1}))}{det( E^{-1}-Q)}\Delta CI
\]
The sign of this expression is the same of $\Delta CI$. So, at least for $h\to 0$, $\partial_h \Delta CI>0$.

Now suppose by contradiction that for some $h$ we have $\Delta CI<0$. Since $\partial_h \Delta CI(0)>0$, by continuity, there must be a value $\overline{h}>h$ such that $\partial_h \Delta CI(\overline{h})=0$ and $\partial_h \Delta CI(h)>0$ for $h<\overline{h}$. Moreover, we can express $\Delta CI(\overline{h})$ as:
\[
\Delta CI(\overline{h})=\Delta CI(0)+\int_0^{\overline{h}}\partial_h \Delta CI(s)ds>0
\]
because $\Delta CI(0)>0$, and under the above assumption the integral is positive. By the expression for the derivative of $\Delta CI$, since $\Delta CI(\overline{h})>0$, it follows that $\partial_h \Delta CI(\overline{h})>0$, contradicting the assumption that $\partial_h \Delta CI(\overline{h})=0$. So, the derivative must be positive for all $h$, and so it must be that for all $h$ we have $\Delta CI>0$.

\textbf{When is $R_v>R_a$ satisfied?}

To check when this condition is satisfied, consider the limit for $h\to 0$. In this limit, $\tilde{CI}_a=\tilde{CI}_v=CI$. Subtracting the equations for the recovered, we get: $R_{v}-R_{a}=\Delta x(1-e^{-CI})>0$, so the cumulative infection is increasing.

For $h\to 1$ instead, the recovered satisfy:
\begin{align*}
R_{a}&	=1-(1-x_{a})e^{-CI_{a}}\\
R_{v}&	=1-(1-x_{v})e^{-CI_{v}}    
\end{align*}

Dividing the equations:
\[
\dfrac{1-x_{a}}{1-x_{v}}=\dfrac{1-R_{a}}{1-R_{v}}e^{CI_{a}-CI_{v}}=\dfrac{1-R_{a}}{1-R_{v}}e^{\dfrac{1}{\mu}(R_{a}-R_{v})}e^{\dfrac{1}{\mu}(x_{v}-x_{a})}
\]
Now if $\mu$ grows very large, the exponential term go to 1, and so the recovered become close to the vaccination rates: so $R_v>R_a$.

\end{proof}

\section{Optimal Vaccination Rates}
\label{app:optimal}

 {In this section, we explore the optimal vaccination rates that the planner would choose to optimize welfare as defined in Section \ref{sec:preferences}, and we compare with the decentralized equilibrium of the vaccination model of Section \ref{sec:vaccination}.}

 {In our model, there are two sources of inefficiency. The first is classic and arises because agents, when vaccinating, do not internalize the benefit of their decision on the decreased risk of infection for other agents. The second is the anti-vaxxer bias, which leads them to overestimate the vaccination costs by $d$.}
 {The key simplification is that, from the perspective of the planner, the two groups of people are ex-ante identical, so the planner would treat society as one unique group. As a consequence, the only variable to decide is the total level of vaccination $x$. Moreover, from the perspective of the planner the disutility from expected infection is equal to $\dfrac{\rho^{SS}}{1-x}$ for both groups.}

 {Since agents are heterogeneous, one question is which subset of agents it would be optimal to vaccinate. Agents are heterogeneous only in vaccination costs, while the disutility of infection depends only on the total fraction of vaccinated individuals $x$. Thus, it follows immediately that for a given fraction of vaccinated individuals $x$, the optimal policy is to vaccinate all agents with lower vaccination costs, up to the threshold that generates $x$ total vaccinated. This threshold is $x/k$. Given this discussion, the expression for welfare in this context becomes:}
\begin{align}
W=&-\int^{x/k}_0kcdc-\int_{x/k}^{1/k}k\dfrac{\rho^{ss}}{1-x}dc\nonumber\\
&=-\dfrac{1}{2}x^{2}/k-\rho^{ss}
\label{welfare_vaccination}
\end{align}

 {If the groups are homogeneous, the dynamics simplify significantly, and the steady state becomes:}
\[
\rho^{SS}=\begin{cases}
    1-x-\mu & x\le 1-\mu\\
    0 & x> 1-\mu
\end{cases} 
\]

 {Thus, the efficient vaccination level $x$ is the maximizer of the expression \eqref{welfare_vaccination}, taking into account how the steady state depends on $x$. It is immediate to check that the expression for welfare is concave, and the unique solution is:}
\[
x^*=\min\{k,1-\mu\}
\]
This is because if $x^*>1-\mu$, the steady state has zero infections, and it is not optimal to vaccinate agents beyond that point.

We collect these results in the following Proposition.

\begin{prop}
The efficient vaccination level satisfies:
\[
x^*=\min\{k,1-\mu\}
\]

Moreover, the optimum is larger than the vaccination level in the decentralized equilibrium, i.e.: $x^*> x$. %The optimal vaccination rate is increasing in $k$, decreasing in $\mu$ and $b$.
 
\label{prop:optimal}
    
\end{prop}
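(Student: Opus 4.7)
My plan is to separate the two claims: characterizing $x^{*}$ as the maximizer of \eqref{welfare_vaccination}, and then showing it strictly exceeds the decentralized aggregate $x=q x_{a}^{*}+(1-q)x_{v}^{*}$.

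For the first claim, I would substitute the piecewise expression for $\rho^{SS}$ into $W$ to get $W(x)=-x^{2}/(2k)-(1-x-\mu)$ on $[0,1-\mu]$ and $W(x)=-x^{2}/(2k)$ on $(1-\mu,1]$. The first branch is strictly concave with $W'(x)=1-x/k$, so its unique critical point is $x=k$; the second branch is strictly decreasing in $x$. A case split then yields the result: if $k\le 1-\mu$, the interior maximizer $x=k$ sits in the first region and dominates by concavity; if $k>1-\mu$, $W$ is strictly increasing on $[0,1-\mu]$ and strictly decreasing beyond, so the kink at $x=1-\mu$ is globally optimal. In both cases $x^{*}=\min\{k,1-\mu\}$.

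For the second claim, I would bound $x$ below $x^{*}$ using two independent inequalities. First, the maintained interior-equilibrium assumption $\mu<1-x$ from Proposition~\ref{SS} gives $x<1-\mu$ directly. Second, from the equilibrium condition of Lemma~\ref{lemma:choices}, $x_{v}^{*}(1-x_{v}^{*})=k\rho_{v}^{SS}$; combining with the accounting identity $\rho_{v}^{SS}+x_{v}^{*}+S_{v}^{SS}=1$ and $S_{v}^{SS}>0$ at the interior endemic steady state yields $x_{v}^{*}(1-x_{v}^{*})<k(1-x_{v}^{*})$, hence $x_{v}^{*}<k$. The analogous calculation gives $x_{a}^{*}+d<k$, so in particular $x_{a}^{*}<k$. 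Averaging across groups, $x<k$, and therefore $x<\min\{k,1-\mu\}=x^{*}$.

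The main obstacle I anticipate is handling potential corner cases outside the interior regime of Lemma~\ref{lemma:choices}. If $x_{a}^{*}=0$, the vaxxer bound still gives $x=(1-q)x_{v}^{*}<k$, while $x<1-\mu$ is unaffected; a limit $x_{v}^{*}\to 1$ would force $S_{v}^{SS}\to 0$, which is incompatible with the interior endemic condition $\mu<1-x$. Beyond these checks, the argument reduces to a routine combination of strict concavity, the first-order conditions of the decentralized agents, and the identity $\rho_g+x_g+S_g=1$.
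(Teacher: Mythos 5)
Your proof is correct, and on the second claim it takes a somewhat different (and more careful) route than the paper. For the characterization $x^*=\min\{k,1-\mu\}$ you do essentially what the paper does — substitute the piecewise steady state into \eqref{welfare_vaccination}, use concavity and the FOC on $[0,1-\mu]$, and note the objective is decreasing beyond $1-\mu$ — just with the case split spelled out. For $x^*>x$, the paper shortcuts by writing the decentralized equilibrium as a single aggregate condition $x=k\rho/(1-x)=k(1-x-\mu)/(1-x)<k$, i.e.\ it treats the decentralized society as one homogeneous group, and leaves the companion inequality $x<1-\mu$ implicit (it is forced by the maintained interiority assumption, or by $\rho=1-x-\mu>0$ in that equation). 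You instead work with the actual two-group equilibrium of Lemma \ref{lemma:choices}, deriving $x_v^*<k$ and $x_a^*+d<k$ from $x_g(1-x_g)=k\rho_g^{SS}$ (up to the bias term) together with $S_g^{SS}>0$ at the interior endemic steady state, and then aggregate; combined with the explicit use of $\mu<1-x$ this gives $x<\min\{k,1-\mu\}$ with strict inequality. What your route buys is fidelity to the model actually being compared (heterogeneous costs and the bias $d$) and an explicit treatment of the $x<1-\mu$ step; what the paper's route buys is brevity. One small quibble: your corner-case remark that $x_v^*\to 1$ "forces $S_v^{SS}\to 0$, incompatible with $\mu<1-x$" is not quite right as stated ($x_v=1$ with $x_a<1$ need not violate $\mu<1-x$); but this lies outside the interior regime of Lemma \ref{lemma:choices} under which the comparison is made, so it does not affect the validity of your argument.
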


\begin{proof}

The welfare is concave. The FOC for $x<1-\mu$ gives $x^*=k$. For $x>1-\mu$, the derivative is always negative and so $x^*$ cannot be higher than $1-\mu$. So, the optimum must be $\min\{k,1-\mu\}$.

In the decentralized equilibrium with rational agents, we would have:
\[
x=k\dfrac{\rho}{1-x}=k\dfrac{1-x-\mu}{1-x}<k,
\]
and so we conclude $x^*>x$. 
    
\end{proof}

\begin{comment}
    \section{Alternative model for vaccination}

Under the alternative specifications for the individual welfare, agents may decide to vaccinate or not comparing the cost $c$ with $\rho_{a}^{ss}+bCI_{a}-d$ or $\rho_{v}^{ss}+bCI_{v}$. If so, the vaccination rates would be defined by the system:
\begin{align}
    x_{a}	&=k\left(\rho_{a}^{ss}+bCI_{a}\right)-d\\
x_{v}	&=k\left(\rho_{v}^{ss}+bCI_{v}\right)
\end{align}

The following Proposition sums up the results with this specification.

\begin{prop}

\begin{enumerate}
    \item For $h\to 1$, the steady state level of infection is decreasing in homophily: $\dfrac{d \rho^{SS}}{d h}<0$.

    \item  For $h\to 1$, the cumulative infection of the outbreak is increasing in homophily if $d\rho_0$, $d$ and $k$ are small enough.
    
\end{enumerate}

    \label{vaccination_cumulative}
\end{prop}
\end{comment}

\section{Proofs}
\label{app:proofs}

\subsection{Proofs of Section \ref{sec:model}}
\subsubsection*{Proof of Proposition \ref{SS} (page \pageref{SS})}

\begin{proof} The Jacobian of the dynamical system is:
\[
J=\left(
\begin{array}{cc}
 -\mu -\tilde{\rho}_a+\tilde{q}_a S_a & (1-\tilde{q}_a)S_a \\
 (1-\tilde{q}_v)S_v & -\mu -\tilde{\rho}_v+\tilde{q}_v S_v 
\end{array}
\right)=\left(
\begin{array}{cc}
 A & B \\
 C & D
\end{array}
\right)
\]
In the following, we are denoting the entries of the Jacobian as $A,B,C,D$, for simplicity.

Define:
\begin{align*}
\lambda_1&=-\frac{1}{2}\left(A+D-\sqrt{(A-D)^{2}+4 B C}\right)\\
\lambda_2&=-\frac{1}{2}\left(A+D+\sqrt{(A-D)^{2}+4 B C}\right)
\end{align*}
It can be checked that $-\lambda_1$ and $-\lambda_2$ are the two eigenvalues of the matrix $J$ (we define the quantities with the minus because of convenience in later formulas).
Since $B C\ge0$, it follows that the eigenvalues are real and distinct, and from the expression it follows $\lambda_1\ge \lambda_2$. 

Now let us focus on the Jacobian computed in the point $(0,0)$:
\[
J^0=\left(
\begin{array}{cc}
 -\mu +\tilde{q}_a (1-x_a) & (1-\tilde{q}_a)(1-x_a) \\
 (1-\tilde{q}_v)(1-x_v) & -\mu +\tilde{q}_v (1-x_v) 
\end{array}
\right)
\]
and denote the corresponding eigenvalues as $-\lambda^0_1$ and $-\lambda^0_2$.

By Theorem 3.1 in \cite{lajmanovich1976deterministic}, for the system \eqref{system1} there are two possibilities: if $\lambda_2^0\ge 0$ (so that both the eigenvalues are negative), then there in only the zero steady state, and it is asymptotically stable. If instead $\lambda^0_2<0$, then there is also a unique interior steady state, and is asymptotically stable for any initial condition $(\rho_a,\rho_v) \neq (0,0)$.

Using the expression above, the condition $\lambda_2^0< 0$ is equivalent to $\mu<\hat{\mu}(h):=\frac{1}{2}\left(T+\Delta\right)\in[0,1]$, where $T:=\tilde{q}_a(1-x_a)+\tilde{q}_v(1-x_v)$ and $\Delta:=\sqrt{T^2-4h(1-x_a)(1-x_v)}$. Notice that this is increasing in $h$. For this reason, $\min_h \hat{\mu}=\hat{\mu}(0)=1-x$. 
\end{proof}

\subsection{Proofs of Section \ref{sec:mechanical}}

\subsubsection*{Proof of Proposition \ref{Steady} (page \pageref{Steady})}

We need first to establish some useful relations.

\begin{lemma}
Suppose $x_a\le x_v$. In the interior steady state, we have $\rho_a\ge \tilde{\rho}_a\ge \tilde{\rho}_v\ge\rho_v$, $S_a>S_v$, and $x_v-x_a\ge \rho_a-\rho_v$. Moreover $\tilde{\rho}_a-\tilde{\rho}_v=h(\rho_a-\rho_v)$, and the diagonal elements of the Jacobian, $A$ and $D$, are negative.

If $h\to 1$, we have $\rho_a^{SS}\to 1-x_a-\mu$, $\rho_v^{SS}\to 1-x_v-\mu$, $\hat{\mu}\to 1-x_a$, $S_a\to \mu$ and $S_v\to \mu$.

If $h\to 0$, we have $\rho_a^{SS}\to \frac{(1-x_a)(1-x-\mu)}{1-x}$, $\rho_v^{SS}\to \frac{(1-x_v)(1-x-\mu)}{1-x}$, and $\hat{\mu} \to 1-x$, where $x=x_aq_a+x_v(1-q_a)$ is the average number of vaccinated. Moreover $S_a\to \frac{1-x_a}{1-x}\mu$, $S_v\to \frac{1-x_v}{1-x}\mu$.
\label{relations}
\end{lemma}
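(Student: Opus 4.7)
The plan is to exploit the steady-state equations $S_g \tilde{\rho}_g = \mu \rho_g$ (obtained from $F_g(\rho_a,\rho_v)=0$) together with the convex-combination structure of $\tilde\rho_g$ to derive all the stated relations in sequence. Write $\bar\rho := q\rho_a + (1-q)\rho_v$. A direct expansion of the definitions gives the key representation
\[
\tilde\rho_a = h\rho_a + (1-h)\bar\rho, \qquad \tilde\rho_v = h\rho_v + (1-h)\bar\rho,
\]
from which the identity $\tilde\rho_a - \tilde\rho_v = h(\rho_a - \rho_v)$ is immediate, as is the sandwich $\rho_a \ge \tilde\rho_a \ge \bar\rho \ge \tilde\rho_v \ge \rho_v$ \emph{provided} we first establish $\rho_a \ge \rho_v$.

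The main obstacle is establishing the ordering $\rho_a \ge \rho_v$ at the interior steady state, since $\rho_a,\rho_v$ are defined only implicitly. I would argue by contradiction: assume $\rho_a < \rho_v$. Then $\bar\rho \in (\rho_a,\rho_v)$, so by the convex-combination formulas $\tilde\rho_a > \rho_a$ and $\tilde\rho_v < \rho_v$ (for $h<1$). Dividing the steady-state conditions yields $\mu/S_a = \tilde\rho_a/\rho_a > 1 > \tilde\rho_v/\rho_v = \mu/S_v$, i.e.\ $S_a < \mu < S_v$. Unpacking $S_g = 1-x_g-\rho_g$ this reads $\rho_a - \rho_v > x_v - x_a \ge 0$, contradicting $\rho_a < \rho_v$. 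The boundary case $h=1$ is handled separately by noting that the two groups decouple, giving $\rho_g = 1-x_g-\mu$ directly. A nearly identical contradiction argument proves $S_a > S_v$: if $S_a \le S_v$, the steady-state ratio gives $\tilde\rho_a \rho_v \ge \tilde\rho_v \rho_a$, but direct computation using the convex-combination formulas yields $\tilde\rho_a \rho_v - \tilde\rho_v \rho_a = (1-h)\bar\rho(\rho_v - \rho_a) \le 0$, with strict inequality when $x_a < x_v$, a contradiction. Then $x_v - x_a \ge \rho_a - \rho_v$ is just a rewriting of $S_a \ge S_v$.

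For the Jacobian diagonal, I would substitute $S_a = \mu\rho_a/\tilde\rho_a$ into $A = -\mu - \tilde\rho_a + \tilde q_a S_a$ to obtain
\[
A = \frac{1}{\tilde\rho_a}\bigl(\mu(\tilde q_a \rho_a - \tilde\rho_a) - \tilde\rho_a^{\,2}\bigr) = \frac{-\mu(1-\tilde q_a)\rho_v - \tilde\rho_a^{\,2}}{\tilde\rho_a} < 0,
\]
using $\tilde q_a \rho_a - \tilde\rho_a = -(1-\tilde q_a)\rho_v$. The argument for $D$ is symmetric.

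Finally, the two limit computations are direct. For $h\to 1$, $\tilde q_g\to 1$ so $\tilde\rho_g = \rho_g$, decoupling the system; the interior SS equation $(1-x_g-\rho_g)=\mu$ gives $\rho_g^{SS}=1-x_g-\mu$ and $S_g=\mu$, and the formula for $\hat\mu(h)$ in Proposition~\ref{SS} evaluated at $h=1$ simplifies (since $(2-x_a-x_v)^2 - 4(1-x_a)(1-x_v) = (x_v-x_a)^2$) to $\hat\mu(1)=1-x_a$. For $h\to 0$, $\tilde\rho_a = \tilde\rho_v = \bar\rho$; substituting into the SS equations yields $\rho_g = (1-x_g)\bar\rho/(\mu+\bar\rho)$, and averaging gives $\bar\rho = (1-x-\mu)$. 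This delivers the stated closed forms for $\rho_g^{SS}$, and then $S_g = 1-x_g-\rho_g^{SS} = (1-x_g)\mu/(1-x)$; the formula for $\hat\mu$ at $h=0$ reduces trivially to $1-x$.
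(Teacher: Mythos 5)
Your proof is correct, and its core—establishing $\rho_a\ge\rho_v$ by contradiction from the steady-state relations $S_g\tilde{\rho}_g=\mu\rho_g$ combined with $S_g=1-x_g-\rho_g$—is essentially the paper's own argument, just organized through the comparison $S_a<\mu<S_v$ rather than through writing $S_a/S_v$ as a ratio of convex combinations; likewise your representation $\tilde{\rho}_g=h\rho_g+(1-h)\bar{\rho}$ delivers the identity $\tilde{\rho}_a-\tilde{\rho}_v=h(\rho_a-\rho_v)$ and the sandwich exactly as the paper does, and $x_v-x_a\ge\rho_a-\rho_v$ is the same rewriting of $S_a\ge S_v$. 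Where you genuinely depart is the sign of the diagonal entries of $J$: the paper deduces $A,D<0$ from stability (negative trace, positive determinant, nonnegative off-diagonal entries, hence $AD>BC\ge 0$), whereas you substitute $S_a=\mu\rho_a/\tilde{\rho}_a$ and compute $A=-\bigl(\mu(1-\tilde{q}_a)\rho_v+\tilde{\rho}_a^{\,2}\bigr)/\tilde{\rho}_a<0$ directly; your route is more elementary and does not need the positivity of $|J|$ as an input. Your limit computations are also more explicit than the paper's (which simply says to solve the system at $h=0,1$ and invoke continuity in the parameters); to be fully rigorous you should add, as the paper does, that the interior steady state varies continuously in $h$, so the values you compute at $h=0$ and $h=1$ are indeed the limits. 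Finally, note a caveat you share with the paper: the strict inequality $S_a>S_v$ requires $h<1$ and $x_a<x_v$ (at $h=1$ one has $S_a=S_v=\mu$), so your remark ``with strict inequality when $x_a<x_v$'' should be read with $h<1$, exactly as in the paper's own proof.
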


\begin{proof}
In the interior steady state:
\[
\frac{S_a}{S_v}=\frac{(1-\tilde{q}_v)\frac{\rho_a}{\rho_v}+\tilde{q}_v}{(1-\tilde{q}_a)\frac{\rho_v}{\rho_a}+\tilde{q}_a}
\]
If $\rho_a<\rho_v$ and $x_a<x_v$, then $S_a>S_v$, but the fraction above implies $S_v>S_a$, which is a contradiction. Hence $\rho_a>\rho_v$. From the identity $\tilde{\rho}_a-\tilde{\rho}_v=h(\rho_a-\rho_v)$ it follows $\tilde{\rho}_a>\tilde{\rho}_v$, and since they are averages $\rho_a\ge \tilde{\rho}_a\ge \tilde{\rho}_v\ge\rho_v$. Finally:
\[
S_{a}>S_{a}\frac{\tilde{\rho}_{a}}{\rho_{a}}=S_{v}\frac{\tilde{\rho}_{v}}{\rho_{v}}>S_{v}
\]
The limits for $h\to 0$ and $h\to 1$ follow from solving the system \ref{system1} in the respective cases, and from continuity of the solution of a differential equation with respect to the parameters.

Finally, if $\mu<\hat{\mu}(h)$, the diagonal elements of $J$, computed in the interior steady state, are both negative. We know from stability that $A+D=-\lambda_1-\lambda_2<0$, so at least one of them must be negative; moreover since $B$ and $C$ are nonnegative, and the determinant must be positive: $AD-BC>0 \implies AD>0$: so both $A$ and $D$ are negative.
\end{proof}

\bigskip

Now we are ready to prove Proposition \ref{Steady}.

\begin{proof}

Using the implicit function theorem we can compute the derivatives of infection rates in the steady state:
\begin{align*}
\partial_{h}F_{a}&	=(1-q)S_{a}\Delta\rho\\    
\partial_{h}F_{v}&	=-qS_{v}\Delta\rho
\end{align*}

\begin{align*}
\partial_{h}\rho_{a}%&	=-\frac{1}{|J|}(d(1-q)S_{a}\Delta\rho+bqS_{v}\Delta\rho)\\
%&=-\frac{1}{|J|}((S_{v}\tilde{q}_{v}-\tilde{\rho}_{v}-\mu)(1-q)S_{a}\Delta\rho+S_{a}(1-\tilde{q}_{a})qS_{v}\Delta\rho)\\
%	&=-\frac{S_{a}(1-q)\Delta\rho}{|J|}((S_{v}\tilde{q}_{v}-\tilde{\rho}_{v}-\mu)+(1-h)qS_{v})\\
 &=-\frac{S_{a}(1-q)\Delta\rho}{|J|}(S_{v}-\tilde{\rho}_{v}-\mu)\\
\partial_{h}\rho_{v}%&	=-\frac{1}{|J|}(-c(1-q)S_{a}\Delta\rho-aqS_{v}\Delta\rho)\\
%&=\frac{1}{|J|}(S_{v}(1-\tilde{q}_{v})(1-q)S_{a}\Delta\rho+(S_{a}\tilde{q}_{a}-\tilde{\rho}_{a}-\mu)qS_{v}\Delta\rho)\\
%	&=\frac{S_{v}q\Delta\rho}{|J|}((1-h)(1-q)S_{a}+S_{a}\tilde{q}_{a}-\tilde{\rho}_{a}-\mu)\\
&= \frac{S_{v}q\Delta\rho}{|J|}(S_{a}-\tilde{\rho}_{a}-\mu)
\end{align*}

Moreover, $S_{v}-\tilde{\rho}_{v}-\mu=S_{v}-\tilde{\rho}_{v}-S_v\frac{\tilde{\rho}_v}{\rho_v}<0$ since $\tilde{\rho}_v>\rho_v$, so it follows that $\rho_a$ is always increasing in $h$. From the steady state equation $(1-x_a-\rho_a)\tilde{\rho}_a/\rho_a=\mu$ it follows that also $\tilde{\rho}_a$ is increasing. 

Concerning the derivative of $\rho_v$, if $h \to 0$, then %then there is an interior steady state for $\mu<1-x=1-qx_a-(1-q)x_v$, and in such a case
$S_{a}-\tilde{\rho}_{a}-\mu=\mu\frac{1-x_a}{1-x}-(1-x)$. This is negative if and only if $\mu<(1-x)^2/(1-x_a)$. Similarly, if $h \to 1$ $S_{a}-\tilde{\rho}_{a}-\mu=S_a-\rho_a-S_a\frac{\tilde{\rho}_a}{\rho_a}=S_a-\rho_a-S_a<0$. Moreover, by the previous conclusions on $\rho_a$ and $\tilde{\rho}_a$, we get that $\partial_h \rho_v$ can only have one zero (because $S_a-\tilde{\rho}_a$ is decreasing): so $\rho_v$ is either decreasing or hump-shaped with one maximum. %we can conclude $\mu\frac{1-x_a}{1-x}-(1-x)<0$, but for $q(1-x)<\mu<(1-x)$ the derivative is positive.
From the steady state equation $(1-x_v-\rho_v)\tilde{\rho}_v/\rho_v=\mu$ again it follows that $\tilde{\rho}_v$ is increasing if and only if $\rho_v$ is.

The total infection is:
\begin{align*}
\partial_{h}\rho	&=\frac{q(1-q)\Delta\rho}{|J|}(-S_{a}(S_{v}-\tilde{\rho}_{v}-\mu)+S_{v}(S_{a}-\tilde{\rho}_{a}-\mu))\\
&=\frac{q(1-q)\Delta\rho}{|J|}(S_{a}(\tilde{\rho}_{v}+\mu)-S_{v}(\tilde{\rho}_{a}+\mu))
\end{align*}
	
If $h=0$ we get: $\partial_{h}\rho\propto(\Delta x-\Delta\rho)(\mu+\rho)>0$,
while for $h=1$ we get (since $S_{a}=S_{v}=\mu$):

\[
\partial_{h}\rho\propto-\Delta\rho\mu<0
\]
Moreover, if $\mu<(1-x)^2/(1-x_a)$, the derivative is monotonically decreasing, so the total infection is concave. 

Moreover, in the limit for $h\to 1$:
\begin{align*}
\partial_{h}\rho_{a} &\to -\frac{\mu(1-q)\Delta x}{\rho_a\rho_v}(-\rho_{v})\\
\partial_{h}\rho_{v}%&	=-\frac{1}{|J|}(-c(1-q)S_{a}\Delta\rho-aqS_{v}\Delta\rho)\\
%&=\frac{1}{|J|}(S_{v}(1-\tilde{q}_{v})(1-q)S_{a}\Delta\rho+(S_{a}\tilde{q}_{a}-\tilde{\rho}_{a}-\mu)qS_{v}\Delta\rho)\\
%	&=\frac{S_{v}q\Delta\rho}{|J|}((1-h)(1-q)S_{a}+S_{a}\tilde{q}_{a}-\tilde{\rho}_{a}-\mu)\\
&= \frac{\mu q\Delta x}{\rho_a\rho_v}(-\rho_{a})
\end{align*}
In the limit for $h\to 0$:
\begin{align*}
\partial_{h}\rho_{a}%&	=-\frac{1}{|J|}(d(1-q)S_{a}\Delta\rho+bqS_{v}\Delta\rho)\\
%&=-\frac{1}{|J|}((S_{v}\tilde{q}_{v}-\tilde{\rho}_{v}-\mu)(1-q)S_{a}\Delta\rho+S_{a}(1-\tilde{q}_{a})qS_{v}\Delta\rho)\\
%	&=-\frac{S_{a}(1-q)\Delta\rho}{|J|}((S_{v}\tilde{q}_{v}-\tilde{\rho}_{v}-\mu)+(1-h)qS_{v})\\
 &=-\frac{S_{a}(1-q)\Delta\rho}{|J|}(S_{v}-\rho-\mu)\\
\partial_{h}\rho_{v}%&	=-\frac{1}{|J|}(-c(1-q)S_{a}\Delta\rho-aqS_{v}\Delta\rho)\\
%&=\frac{1}{|J|}(S_{v}(1-\tilde{q}_{v})(1-q)S_{a}\Delta\rho+(S_{a}\tilde{q}_{a}-\tilde{\rho}_{a}-\mu)qS_{v}\Delta\rho)\\
%	&=\frac{S_{v}q\Delta\rho}{|J|}((1-h)(1-q)S_{a}+S_{a}\tilde{q}_{a}-\tilde{\rho}_{a}-\mu)\\
&= \frac{S_{v}q\Delta\rho}{|J|}(S_{a}-\rho-\mu)
\end{align*}

The behavior as a function of the vaccination rates is:
\begin{align*}
\partial_{x_{a}}F_{a}&	=-\tilde{\rho}_{a} &  \partial_{x_{a}}F_{v}&	=0\\
\partial_{x_{a}}\rho_{a}&	=-\frac{1}{|J|}(-J_{22}\tilde{\rho}_{a})<0 & \partial_{x_{a}}\rho_{v}&	=-\frac{1}{|J|}(J_{21}\tilde{\rho}_{a})<0\\
\end{align*}
so, all infection rates are decreasing in $x_a$, 
and analogously for the derivative with respect to $x_v$.
\end{proof}

\subsubsection*{Proof of Proposition \ref{Cumulative} (page \pageref{Cumulative})}

 {We need the following Lemma.}

\begin{lemma}

%Call $d\rho_0=(d\rho_{0,a}, \dd\rho_{0,v})$.
Call the vector of group-level cumulative infection values $\vec{CI}=(CI_a,CI_v)$.
The derivative of the cumulative infection with respect to a parameter $y$ can be expressed as:
\[
\partial_y \vec{CI}=(q,1-q)'(r I-J)^{-1}\partial_{y} J(r I-J)^{-1}d\rho_0
\]

\end{lemma}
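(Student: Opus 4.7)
The plan is to compute the derivative directly by differentiating the closed-form expression $\vec{CI}=(rI-J)^{-1}d\rho_0$ that has been established just before the lemma, and then use the standard identity for the derivative of a matrix inverse.

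First, I would observe that $d\rho_0=(\overline{d\rho_0},\overline{d\rho_0})$ is an initial condition and does not depend on the parameter $y$ (here $y$ will typically stand for $h$, $x_a$, $x_v$, $\rho_a^{SS}$, or $\rho_v^{SS}$); similarly, the discount rate $r$ and the identity $I$ do not depend on $y$. So the only object that carries $y$-dependence inside the bracket $(rI-J)^{-1}$ is the Jacobian $J$, and one has $\partial_y(rI-J)=-\partial_y J$.

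Next, I would invoke the standard formula $\partial_y A^{-1}=-A^{-1}(\partial_y A)A^{-1}$, valid whenever $A$ is invertible and differentiable in $y$. Applied to $A=rI-J$, this gives
\[
\partial_y (rI-J)^{-1}=-(rI-J)^{-1}\bigl(\partial_y(rI-J)\bigr)(rI-J)^{-1}=(rI-J)^{-1}\bigl(\partial_y J\bigr)(rI-J)^{-1}.
\]
Multiplying both sides on the right by the constant vector $d\rho_0$ yields the vector identity
\[
\partial_y \vec{CI}=(rI-J)^{-1}\bigl(\partial_y J\bigr)(rI-J)^{-1}d\rho_0,
\]
and contracting with the row vector $(q,1-q)$ to obtain the scalar total cumulative infection $CI=q\,CI_a+(1-q)\,CI_v$ produces the claimed identity.

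The only real content is therefore the matrix-inverse differentiation identity; there is no serious obstacle. One minor check I would include for completeness is that $rI-J$ is indeed invertible at the stable interior steady state, which follows because both eigenvalues of $J$ are negative by Proposition \ref{SS} (so $rI-J$ has strictly positive eigenvalues for any $r\ge 0$). This justifies interchanging the derivative and the inverse. The resulting formula is exactly the useful expression that makes the subsequent decomposition $d_h CI=\partial_h CI+\partial_{\rho_a}CI\,\partial_h\rho_a+\partial_{\rho_v}CI\,\partial_h\rho_v$ computationally tractable, because $\partial_y J$ is a simple explicit matrix for each choice of parameter $y$.
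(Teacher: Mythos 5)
Your proposal is correct and follows essentially the same route as the paper: the paper implicitly differentiates the defining equation $r\,\vec{CI}=d\rho_0+J\vec{CI}$ and solves for $\partial_y\vec{CI}$, which is exactly the same computation as your application of $\partial_y A^{-1}=-A^{-1}(\partial_y A)A^{-1}$ to $A=rI-J$, with $d\rho_0$ and $r$ held fixed. Your added remarks (invertibility of $rI-J$ from stability, and contracting with $(q,1-q)$ to reconcile the scalar prefactor in the statement) are sensible and consistent with the paper's proof, which additionally notes that $rI-J$ is an M-matrix so its inverse is entrywise positive, a fact used only in later propositions.
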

\begin{proof}
The vector of the cumulative infections $\vec{CI}=(CI_a,CI_v)$ is the solution of:
\[
r \vec{CI}=d\rho_0+J\vec{CI}
\]
where $\partial_{y} J$ is the element by element derivative of $J$.
Differentiating it and solving we obtain:
\[
\partial_{y}\vec{CI}=(r I-J)^{-1}\partial_{y} J(r I-J)^{-1}d\rho_0
\]

Since $r I-J$ is an \emph{M-matrix}  {(that is, with non-positive off-diagonal entries and eigenvalues with nonnegative real parts)}, the inverse has positive elements.   
\end{proof}

Now we can prove the Proposition.
\begin{proof}
For simplicity, write $d\rho_0=d\rho_{0,a}=d\rho_{0,v}$.
The total derivative is $d_h CI=\partial_h CI+\partial_{\rho_a}CI\partial_h \rho_a+\partial_{\rho_v}CI\partial_h \rho_v$.

 Now, since:
\[
\partial_{h}J=\left(\begin{array}{cc}
-(1-q)\Delta\rho & 0\\
0 & q\Delta\rho
\end{array}\right)+\left(\begin{array}{cc}
S_{a} & 0\\
0 & S_{v}
\end{array}\right)\left(\begin{array}{cc}
1-q & -(1-q)\\
-q & q
\end{array}\right)
\]\begin{comment}
and the matrix has positive determinant and positive trace if $S_a-\Delta \rho>0$. This might be satisfied or not: if $h\to 1$ the condition becomes $\mu>\Delta x$, and this cannot be satisfied in any interior steady state if $x_v=1$, because $\mu(h=1)=1-x_a$
and if $\mu$ is larger there is no interior steady state. If $x_v<1$ it can, depending on $\mu$. If $\mu$ large enough the matrix is positive definite, and so the derivative is positive whenever $(q,1-q)$ and $d\rho_0$ are proportional. In general it need not be.
\end{comment}
%the same method does not give a clear result for the sign of the derivative with respect to $h$.
using the Lemma we can explicitly calculate the derivatives for $h=0$ or $h=1$, and using $\mu < 1-x$ we obtain the following.
\begin{align*}
&\partial_{h}CI^{\text{out}}\mid_{h=1}
=-\frac{(1-q)q\rho_{0}(x_{a}-x_{v})^{2}(-2r+\mu+x_{a}+x_{v}-2)}{(-r+\mu+x_{a}-1)^{2}(-r+\mu+x_{v}-1)^{2}}>0\\
%&=\frac{(1-q)q\rho_{0}(x_{a}-x_{v})^{2}(2r+\rho_{a}+\rho_{v})}{(-r+\mu+x_{a}-1)^{2}(-r+\mu+x_{v}-1)^{2}}>0\\
&\partial_{h}CI^{\text{out}}\mid_{h=0}  \\=
&-\frac{\mu  (1-q) q \dd\rho_0 (x_a-x_v)^2 (2 \mu +q
   (x_a-x_v)+x_v-1)}{(1-x)^2 (-r -1+x) ( 1-x-\mu+r)^2}>0\\
   \iff & \mu> \frac{1-x}{2}
\end{align*}
because the denominator $-r +q
   (x_a-x_v)+x_v-1$ is negative.

\begin{comment}
We can see that the derivatives of the cumulative infection are proportional to the size of each group. This means that the total effect due to variation in steady state rates is proportional to minus the derivative of total infection $\partial_{\rho_a} CI\partial_h \rho_a+\partial_{\rho_v} CI\partial_h \rho_v \propto -\partial_h \rho$, from which it follows that it is positive for $h$ large and negative for $h$ small. This is because the increase in infections decreases the cumulative infection part mechanically, so has an effect that is opposite. It is precisely opposite thanks to the proportionality.
\end{comment}

Moreover, the derivative of the Jacobian matrix with respect to infection rates is:
\[
\partial_{\rho_{a}}J=\left(\begin{array}{cc}
-2\tilde{q}_{a} & -(1-\tilde{q_{a}})\\
0 & -(1-\tilde{q}_{v})
\end{array}\right) \quad \partial_{\rho_{v}}J=\left(\begin{array}{cc}
-(1-\tilde{q}_{a}) & 0\\
-(1-\tilde{q}_{v}) & -2\tilde{q}_{v}
\end{array}\right)
\]so we conclude that the derivative of \emph{both} cumulative infections with respect to infection rates are negative. In particular, for $h=0$ or $h=1$ we can explicitly write the derivatives of the total CI as:
\begin{align*}
\partial_{\rho_a} CI\mid_{h=1}=-((2 q \dd\rho_0)/(-1 + x_a - r + \mu)^2)\rho_{0}\\
\partial_{\rho_v} CI\mid_{h=1}=-(2 (1 -   q) \dd\rho_0)/(-1 + x_v - r + \mu)^2\rho_{0}\\
\partial_{\rho_a} CI\mid_{h=0}=-\frac{2 q \dd\rho_0}{(-r +\mu +q (x_a-x_v)+x_v-1)^2}\\
\partial_{\rho_v} CI\mid_{h=0}=-\frac{2 (1-q)
   \dd\rho_0}{(-r +\mu +q (x_a-x_v)+x_v-1)^2}
\end{align*}
% \deleted{Hence, for $h$ high both the direct and indirect term in the derivative are positive, and so the derivative is positive.} 
We conclude that, for $h\to 0$, $(\partial_{\rho_a} CI,
\partial_{\rho_v} CI) \propto -(q,1-q)$, and so:
\[
\partial_{\rho_a} CI\partial_{h} \rho_a\mid_{h=0}+\partial_{\rho_v} CI\partial_{h} \rho_a\mid_{h=0} \propto -\partial_h \rho,
\] 
so the indirect effect is negative.
The direct effect, instead, depends on $\mu$: for $\mu< \frac{1-x}{2}$ they have the same sign, and the derivative is negative. For $\mu> \frac{1-x}{2}$, using the results above we can compute the total effect, that is: 
\begin{align*}
\partial_{h}CI^{\text{out}}\mid_{h=0}+\partial_{\rho_a} CI\partial_{h} \rho_a\mid_{h=0}+\partial_{\rho_v} CI\partial_{h} \rho_a\mid_{h=0}\\
=\frac{\mu  (1-q) q \dd\rho_0 \Delta x^2 (-2 r +2 \mu -3(1-x)}{(1-x)^2 (r +(1-x))
	(-r +\mu -(1-x))^2}
\end{align*}
that has the same sign as $-2 r +2 \mu -3(1-x)<-2 r -(1-x)<0$, since $\mu<1-x$. So the indirect effect dominates.

Instead, for $h$ high, since $\dfrac{\partial\rho_{v}}{\partial h}<0$, we know that the term:
\[
-\dfrac{2d\rho_{0}}{(1-x_{v}-\mu+r)^{2}}\dfrac{\partial\rho_{v}}{\partial h}
\]
is positive. So, it must be larger than $-\dfrac{2d\rho_{0}}{(1-x_{a}-\mu+r)^{2}}\dfrac{\partial\rho_{v}}{\partial h}$,
because the denominator is smaller. So it is true that:
\begin{align*}\dfrac{\partial CI}{\partial\rho_{a}}\dfrac{\partial\rho_{a}}{\partial h}+\dfrac{\partial CI}{\partial\rho_{v}}\dfrac{\partial\rho_{v}}{\partial h}&	=-\dfrac{2qd\rho_{0}}{(1-x_{a}-\mu+r)^{2}}\dfrac{\partial\rho_{a}}{\partial h}-\dfrac{2(1-q)d\rho_{0}}{(1-x_{v}-\mu+r)^{2}}\dfrac{\partial\rho_{v}}{\partial h}\\
&	>-\dfrac{2d\rho_{0}}{(1-x_{a}-\mu+r)}\left(q\dfrac{\partial\rho_{a}}{\partial h}+(1-q)\dfrac{\partial\rho_{v}}{\partial h}\right)\\
&=-\dfrac{2d\rho_{0}}{(1-x_{a}-\mu+r)^{2}}\dfrac{\partial\rho}{\partial h}>0
\end{align*}
because we know that $\dfrac{\partial\rho}{\partial h}<0$ for $h\to1$. So, for $h$ high both the direct and indirect effects are positive.
\begin{comment}

\textbf{Asymmetric initial conditions}

\begin{align*}
\partial_{h}CI^{\text{out}}\mid_{h=1}	&= q(1-q)\left(\left(-\frac{\mu-\Delta x}{(1-x_a+r-\mu)^{2}}-\frac{\mu}{(1-x_a+r-\mu)(1-x_v+r-\mu)}\right)\rho_{0,a}\right.\\
&\left.+\left( -\frac{\mu}{(1-x_a+r-\mu)(1-x_v+r-\mu)}+\frac{\mu+\Delta x}{(1-x_v+r-\mu)^{2}}\right)d\rho_{0,v}\right)\\
\partial_{\rho_a} CI\mid_{h=1}&=-((2 q \dd\rho_0)/(-1 + x_a - r + \mu)^2)\rho_{0,a}\\
\partial_{\rho_v} CI\mid_{h=1}&=-(2 (1 -   q) \dd\rho_0)/(-1 + x_v - r + \mu)^2d\rho_{0,v}
\end{align*}
\end{comment}
\end{proof}

\subsubsection*{Proof of Proposition \ref{speed_conv} (page \pageref{speed_conv})}

\begin{proof}
	Using the results in \cite{bernstein1993some}, we can express the exponential matrix as a function of eigenvalues, and directly compute the limit:
\[
-\lim_{t \to \infty}\frac{\log\lVert e^{tA}\dd\rho_{0}\rVert}{t}=\lambda_{2}-\lim_{t \to \infty}\frac{\log\lVert\left(\frac{\lambda_{1}-\lambda_{2}e^{-(\lambda_{1}-\lambda_{2})t}}{\lambda_{1}-\lambda_{2}}I+\frac{1-e^{(\lambda_{1}-\lambda_{2})t}}{\lambda_{1}-\lambda_{2}}A\right)\dd\rho_{0}\rVert}{t}
\]
\[
=\lambda_{2}-\lim_{t \to \infty}\frac{\log\lVert(\lambda_{1}-\lambda_{2}e^{-(\lambda_{1}-\lambda_{2})t})\dd\rho_{0}+(1-e^{-(\lambda_{1}-\lambda_{2})t})A\dd\rho_{0}\rVert}{t}=\lambda_{2}
\]

Concerning the behavior of $\lambda_2$ as a function of $h$, we use a standard result on eigenvalue perturbations (\citealp{demmel1997applied}, Th 4.4). Namely, if $\lambda$ is a simple eigenvalue of $J-r I$:
\[
\partial_h\lambda=\frac{v'\partial (J-r I)u}{v'u}
\]
where $v$ is the left and $u$ the right eigenvector relative to $\lambda$.

In our case both eigenvalues are simple and we can explicitly solve for the eigenvectors. For $\lambda=-\lambda_2$, we obtain:
\begin{align*}
u&	=\left(\frac{2B}{\sqrt{(A-D)^{2}+4BC}-(A-D)},1\right)\\
v&	=\left(\frac{2C}{\sqrt{(A-D)^{2}+4BC}-(A-D)},1\right)
\end{align*} 
and they have both positive components, so $v'u>0$: hence the sign of the derivative is determined by the numerator. By results of the previous proposition, $\partial_{\rho_a} J$ and $\partial_{\rho_v} J$ have both negative elements, hence the derivatives $\partial_{\rho_a}\lambda_2$ and $\partial_{\rho_a}\lambda_2$ are negative.

Moreover, when $h \to 1$, both eigenvectors converge to $(0,1)$. So in the limit of $h\to 1$ we get:
\[
\partial_h \lambda_2=-\partial D=-q\Delta\rho-qS_{v}<0
\]
so that for high enough $h$, the speed of convergence is increasing in $h$.
Instead, when $h \to 0$, the eigenvectors converge to $(q,1-q)$ and $(1-x_a,1-x_v)$, and the derivative becomes:
\[
\partial_h \lambda \mid_{h\to 0}=-\frac{(1-q) q (1-x-2\mu) \Delta x^2}{(1-x)^2}
\]
that is positive if and only if $\mu>\frac{1-x}{2}$.
\end{proof}

\subsection{Proofs of Section \ref{sec:vaccination}}

For the proofs of this section, we are going to define $\theta_a=\dfrac{\rho_a}{1-x_a}$ and $\theta_v=\dfrac{\rho_v}{1-x_v}$.
    We can rewrite the system \eqref{system1} with this parameterization as:
\begin{align}
    \dot{\theta}_{a} &=(1-\theta_{a})\tilde{\rho}_{a}-\mu\theta_{a} \nonumber\\
\dot{\theta}_{v}&	=(1-\theta_{v})\tilde{\rho}_{v}-\mu\theta_{v}
\label{system2}
\end{align}

Moreover, we are going to need the following Lemma.

\begin{lemma}

System \eqref{system2} has the same stability properties of System \eqref{system1}.

The derivatives $\partial_{x_a} \theta_a$, $\partial_{x_v} \theta_a$, $\partial_{x_v} \theta_v$, $\partial_{x_v} \theta_a$ are all negative.

    \label{negative_theta}
\end{lemma}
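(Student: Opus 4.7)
The plan is to view system \eqref{system2} as the image of system \eqref{system1} under an invertible linear change of coordinates, and then apply the implicit function theorem on system \eqref{system2} using a Metzler--matrix argument to sign the derivatives.

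\textbf{Step 1 (stability).} Let $D=\mathrm{diag}(1-x_a,1-x_v)$, which is invertible whenever $x_a,x_v<1$. From $\rho=D\theta$ we get $\dot\theta=D^{-1}\dot\rho=D^{-1}F(D\theta)$, so the vector field of \eqref{system2} is a conjugation of the one of \eqref{system1}. Fixed points correspond one--to--one, and the Jacobians are related by $J^{\text{new}}=D^{-1}JD$, so they are similar matrices and share eigenvalues. Hence the two systems have identical stability classifications, which is the first statement. Moreover, Proposition \ref{SS} implies that in the stable interior steady state $J^{\text{new}}$ has negative trace and positive determinant.

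\textbf{Step 2 (Metzler structure).} A direct computation on \eqref{system2} gives off--diagonal entries of $J^{\text{new}}$ equal to $(1-\theta_a)(1-\tilde q_a)(1-x_v)$ and $(1-\theta_v)(1-\tilde q_v)(1-x_a)$, which are both nonnegative (and strictly positive once $h<1$). Thus $-J^{\text{new}}$ is an M--matrix, and by a standard M--matrix result its inverse has nonnegative entries; equivalently, every entry of $(J^{\text{new}})^{-1}$ is nonpositive (strictly negative off the degenerate boundary cases).

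\textbf{Step 3 (signs of the comparative statics).} From $\tilde\rho_a=\tilde q_a(1-x_a)\theta_a+(1-\tilde q_a)(1-x_v)\theta_v$ and the analogous expression for $\tilde\rho_v$, differentiate $F_a=(1-\theta_a)\tilde\rho_a-\mu\theta_a$ and $F_v=(1-\theta_v)\tilde\rho_v-\mu\theta_v$ with respect to $x_a$ and $x_v$:
\[
\partial_{x_a}F_a=-(1-\theta_a)\tilde q_a\theta_a,\qquad \partial_{x_a}F_v=-(1-\theta_v)(1-\tilde q_v)\theta_a,
\]
\[
\partial_{x_v}F_a=-(1-\theta_a)(1-\tilde q_a)\theta_v,\qquad \partial_{x_v}F_v=-(1-\theta_v)\tilde q_v\theta_v.
\]
Each of these is nonpositive, and is strictly negative in the interior steady state.

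\textbf{Step 4 (conclusion).} The implicit function theorem applied to $F_g(\theta_a,\theta_v;x_a,x_v)=0$ yields
\[
\begin{pmatrix}\partial_{x_g}\theta_a\\ \partial_{x_g}\theta_v\end{pmatrix}=-(J^{\text{new}})^{-1}\begin{pmatrix}\partial_{x_g}F_a\\ \partial_{x_g}F_v\end{pmatrix}.
\]
Since $-(J^{\text{new}})^{-1}$ has nonnegative entries by Step 2 and the right--hand vector is strictly negative by Step 3, the product is strictly negative, giving $\partial_{x_a}\theta_a,\partial_{x_a}\theta_v,\partial_{x_v}\theta_a,\partial_{x_v}\theta_v<0$.

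The only subtle point I anticipate is ensuring strictness rather than just nonpositivity of the entries of $(J^{\text{new}})^{-1}$ and of the $\partial_{x_g}F$ terms; this is handled by observing that we are in the interior steady state ($\theta_g\in(0,1)$, $\tilde q_g\in(0,1)$ when $h<1$), which makes all the inequalities strict.
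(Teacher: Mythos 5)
Your proposal is correct and follows essentially the same route as the paper: a diagonal change of coordinates establishes smooth equivalence (same steady states and eigenvalues, hence same stability), and then the implicit function theorem combined with the sign structure of the Jacobian — nonnegative off-diagonal entries plus Hurwitz stability, so that $-(J^{\text{new}})^{-1}\ge 0$ — and the strictly negative derivatives $\partial_{x_g}F$ yields the negative comparative statics. The only cosmetic difference is that you invoke the general M-matrix inverse-positivity result where the paper argues directly from negative diagonal, positive determinant, and nonnegative off-diagonal entries of the $2\times 2$ Jacobian, which is the same content.
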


\begin{proof}

The two systems are related by a change of coordinates: $(\rho_{a,t}, \rho_{v,t})=\Theta(\theta_{a,t},\theta_{v,t})$, where the function $\Theta$ maps $(\theta_{a,t},\theta_{v,t})$ into $((1-x_a)\theta_{a,t},(1-x_v)\theta_{v,t})$. Since it is differentiable has differentiable inverse for $x_a,x_v\in (0,1)$, the two systems are \emph{smoothly equivalent}, and so they have the same steady states, and all the steady states have  the same eigenvalues (\cite{meiss2007differential}, Ch. 4). It follows that the interior steady state is asymptotically stable under the same assumptions.

%The first part is immediate since $\rho_a$, $\rho_v$ differ from $\theta_a$ and $\theta_v$ by a multiplicative constant, so it has a unique steady state and it is  asymptotically stable if and only if system \eqref{system1} is. Moreover, the transformation is just a change of coordinates that maps $\rho_a\to (1-x_a)\theta_a$, $\rho_v\to (1-x_v)\theta_v$: so, the eigenvalues do not change. 

Now we compute the Jacobians.
\begin{align*}
\hspace{-1cm}
&J_{\theta}F=\\
&\begin{pmatrix}\tilde{q}_{a}(1-2\theta_{a})(1-x_{a})-(1-\tilde{q}_{a})\theta_{v}(1-x_{v})-\mu, &\hspace{-0.5cm} (1-\tilde{q}_{a})(1-\theta_{a})(1-x_{v})\\
\hspace{-2.5cm}(1-\tilde{q}_{v})(1-\theta_{v})(1-x_{a}), &\hspace{-2.5cm} \tilde{q}_{v}(1-2\theta_{v})(1-x_{v})-(1-\tilde{q}_{v})\theta_{a}(1-x_{a})-\mu
\end{pmatrix}
\end{align*}
Since the two systems are smoothly equivalent, in the interior steady state this must have negative eigenvalues, and so is invertible. It follows that it has positive  determinant and negative trace. Moreover, as in the proof of Lemma \ref{relations}, since the off-diagonal elements are positive, the diagonal elements must be both negative. So, it follows 
 that the inverse $J_{\theta}F^{-1}$ has all negative entries.
\[
J_{x}F=\begin{pmatrix}-(1-\theta_{a})\theta_{a}\tilde{q}_{a} & -(1-\theta_{a})\theta_{v}(1-\tilde{q}_{a})\\
-(1-\theta_{v})\theta_{a}(1-\tilde{q}_{v}) & -(1-\theta_{v})\theta_{v}\tilde{q}_{v}
\end{pmatrix}
\]
The derivatives with respect to vaccination rates are collected in the matrix:
\[
\partial_x\theta=-J_{\theta}F^{-1}J_xF
\]

\begin{comment}
The inverse of the Jacobian is:
\[
J_{\theta}F^{-1}=\dfrac{1}{det}\begin{pmatrix}\tilde{q}_{v}(1-2\theta_{v})(1-x_{v})-(1-\tilde{q}_{v})\theta_{a}(1-x_{a})-\mu & -(1-\tilde{q}_{a})(1-\theta_{a})(1-x_{v})\\
-(1-\tilde{q}_{v})(1-\theta_{v})(1-x_{a}) & \tilde{q}_{a}(1-2\theta_{a})(1-x_{a})-(1-\tilde{q}_{a})\theta_{v}(1-x_{v})-\mu
\end{pmatrix}
\]
\end{comment}
Since all the elements of the matrices $J_{\theta}F^{-1}$ and $J_xF$ are negative, it follows that 
$\partial_x\theta=-J_{\theta}F^{-1}J_xF$ has all negative elements, that is what we wanted to show.
\end{proof}

\subsubsection*{Proof of Lemma \ref{lemma:choices} (page \pageref{lemma:choices})}

\begin{proof} 

Define the system of implicit equations:
\begin{align*}
x_{a}&	=k\theta_{a}-d\\
x_{v}&	=k\theta_{v}
\end{align*}

And define the vector function $\Phi$ as:
\begin{align*}
\Phi_{a}(x_a,x_v,\theta_a,\theta_v)&	=x_{a}-k\theta_{a}+d\\
\Phi_{v}&	=x_{v}-k\theta_{v}
\end{align*}

To prove existence of an interior solution, we use the Poincaré-Miranda Theorem (\cite{kulpa1997poincare}). The conditions that have to be satisfied are:
\begin{align*}
\Phi_{a}(0,x_{v})&	=-k\theta_{a}+d\le 0\\
\Phi_{a}(1,x_{v})&	=1-k\theta_a+d\ge 0\\
\Phi_{v}(x_{a},0)&	=-k\theta_{v}\le 0\\
\Phi_{v}(x_{a},1)&	=1-k\theta_{v}\ge 0
\end{align*}

The solution is interior as long as these are satisfied.

The first needs just to be checked at the minimum of $\theta_{a}$, that is realized for $x_{v}=1$. In that case, we can solve explicitly for $\theta_{a}$, and we get: $\theta_{a}=1-\dfrac{\mu}{\tilde{q}_{a}}$,where in this case the upper bound on $\mu$ is $\hat{\mu}=\tilde{q}_{a}$, so this is always feasible. So the condition is: $d\le k\left(1-\dfrac{\mu}{\tilde{q}_{a}}\right)$.
The second and fourth need to be checked at the maxima of $\theta_a$ and $\theta_v$, that are realized, respectively, for $x_v=1$ and $x_a=1$. It can be calculated explicitly:
\begin{align*}
    \frac{(1-h)q\tilde{q}_{a}+h\mu}{(1-h)q(\tilde{q}_{a}-\mu)}&>k\\(1+d)\frac{(1-h)(1-q)\tilde{q}_{v}+h\mu}{(1-h)(1-q)(\tilde{q}_{v}-\mu)}&>k
\end{align*}
The second and $d\le k\left(1-\dfrac{\mu}{\tilde{q}_{a}}\right)$ are feasible only if:
\[
k\left(1-\dfrac{\mu}{\tilde{q}_{a}}\right)>k\frac{(1-h)(1-q)(\tilde{q}_{v}-\mu)}{(1-h)(1-q)\tilde{q}_{v}+h\mu}-1
\]
that is satisfied provided $k$ is low enough. So, we get two upper bounds $d\le \overline{d}$ and $k\le \overline{k}$. 

Now we prove uniqueness. The Jacobian of the system is:
\[
J_{x}\Phi=\begin{pmatrix}1-k\partial_{x_{a}}\theta_{a} & -k\partial_{x_{v}}\theta_{a}\\
-k\partial_{x_{a}}\theta_{v} & 1-k\partial_{x_{v}}\theta_{v}
\end{pmatrix}
\]
Given the signs computed in Lemma \ref{negative_theta}, this has positive diagonal and negative off-diagonal elements.

We have to study the sign of the determinant:
\begin{align*}
det J_{x}\Phi=&   1-k\partial_{x_{a}}\theta_{a}-k\partial_{x_{v}}\theta_{v}+k^{2}\left(\partial_{x_{a}}\theta_{a}\partial_{x_{v}}\theta_{v}-\partial_{x_{v}}\theta_{a}\partial_{x_{a}}\theta_{v}\right)
\end{align*}

Now the term \[\partial_{x_{a}}\theta_{a}\partial_{x_{v}}\theta_{v}-\partial_{x_{v}}\theta_{a}\partial_{x_{a}}\theta_{v}\] is the determinant of the matrix $\partial_x \theta$, that by the implicit function theorem is equal to:
\[
\begin{pmatrix}\partial_{x_{a}}\theta_{a} & \partial_{x_{v}}\theta_{a}\\
\partial_{x_{a}}\theta_{v} & \partial_{x_{v}}\theta_{v}
\end{pmatrix}=-J_{\theta}F^{-1}J_{x}F
\]

So: \[\partial_{x_{a}}\theta_{a}\partial_{x_{v}}\theta_{v}-\partial_{x_{v}}\theta_{a}\partial_{x_{a}}\theta_{v}=detJ_{\theta}F^{-1}det(-J_{x}F)\] 
The first determinant is positive by stability of the system; the second is: 
\begin{align*}
    det(-J_{x}F)&=detJ_{x}F=\theta_{a}\theta_{v}(1-\theta_{a})(1-\theta_{v})(\tilde{q}_{a}\tilde{q}_{v}-(1-\tilde{q}_{a})(1-\tilde{q}_{v}))\\
    &=\theta_{a}\theta_{v}(1-\theta_{a})(1-\theta_{v})h\ge0
\end{align*}
So, the determinant of $J_{x}$ is positive, and in particular the matrix is invertible, so there is locally a solution of the system. Moreover the determinant is positive, hence the matrix is positive definite, so the solution is unique and global thanks to the global implicit function theorem of \cite{gale1965jacobian}. 

Moreover, if $x_a^*\ge x_v^*$ it follows from Lemma \ref{relations} that $\rho_a\le \rho_v$ that implies $x_a^*\le x_v^*$, which is a contradiction: hence $x_a^*<x_v^*$.
\end{proof}

\subsubsection{Proof of Proposition \ref{prop:groups}}

\begin{proof}

We use the implicit function theorem. The derivatives of $\Phi$ with respect to homophily are:
\[
J_{h}\Phi=\begin{pmatrix}-k\partial_{h}\theta_{a}\\
-k\partial_{h}\theta_{v}
\end{pmatrix}
\]
So, the derivatives of the vaccination rates are:
\[
\partial_{h}x	=-J_{x}\Phi^{-1}J_{h}\Phi
	=\dfrac{1}{det J_{h}\Phi}\begin{pmatrix}1-k\partial_{x_{v}}\theta_{v} & k\partial_{x_{v}}\theta_{a}\\
k\partial_{x_{a}}\theta_{v} & 1-k\partial_{x_{a}}\theta_{a}
\end{pmatrix}\begin{pmatrix}k\partial_{h}\theta_{a}\\
k\partial_{h}\theta_{v}
\end{pmatrix}
\]
Now, $det J_{h}\Phi>0$. Moreover, $\partial_{h}\theta_{a}=\dfrac{1}{1-x_a}\partial_{h}\rho_{a}>0$ and $\partial_{h}\theta_{v}=\dfrac{1}{1-x_v}\partial_{h}\rho_{v}<0$ if $\mu<(1-x)^2/(1-x_a)$. So, it follows that if $\mu<(1-x)^2/(1-x_a)$, we have $\partial_h x_a>0$ and $\partial_h x_v<0$.    
\end{proof}

\subsubsection*{Proof of Proposition \ref{prop:endovax} (page \pageref{prop:endovax})}

\begin{proof}

\textbf{Part 1: steady state}

As clarified in the proof of Lemma \ref{lemma:choices}, to have an interior solution the conditions $d \le \overline{d}$ and $k\le \overline{k}$ must be satisfied. Moreover, $\overline{d}$ tends to zero if $k$ goes to zero. So, in the following, when we take the limits for $k$ and $d$ to zero, we always assume the conditions for an interior solutions are satisfied.

Consider first the limit $h\to 1$. The derivative of the vaccination rates becomes:
\begin{align*}
    \partial_{h}x&=\dfrac{1}{det J_x\Phi}\begin{pmatrix}1+k\dfrac{\mu}{(1-x_{a})^{2}} & 0\\
0 & 1+k\dfrac{\mu}{(1-x_{v})^{2}}
\end{pmatrix}\begin{pmatrix}k\partial_{h}\rho_{a}\dfrac{1}{1-x_{a}}\\
k\partial_{h}\rho_{v}\dfrac{1}{1-x_{v}}
\end{pmatrix}\\
&=\begin{pmatrix}k\left(1+k\dfrac{\mu}{(1-x_{v})^{2}}\right)^{-1}\dfrac{1}{1-x_{a}}\partial_{h}\rho_{a}\\
k\left(1+k\dfrac{\mu}{(1-x_{a})^{2}}\right)^{-1}\dfrac{1}{1-x_{v}}\partial_{h}\rho_{v}
\end{pmatrix}
\end{align*}
where $J_x\Phi=\left(1+k\dfrac{\mu}{(1-x_{a})^{2}}\right)\left(1+k\dfrac{\mu}{(1-x_{v})^{2}}\right)$.
So we get:
\[
q\partial_{h}x_{a}+(1-q)\partial_{h}x_{v}>k\left(1+k\dfrac{\mu}{(1-x_{a})^{2}}\right)^{-1}\dfrac{1}{1-x_{v}}\partial_{h}\rho
\]
and so, the total derivative satisfies:
\[
d_{h}\rho<\partial_{h}\rho\left(1-\dfrac{k\dfrac{1}{1-x_{v}}}{1+k\dfrac{\mu}{(1-x_{a})^{2}}}\right)
\]
The smallest level of $d$ to have an interior solution is $d>k/(1-\mu)$.
In the limit for $d\to k/(1-\mu)$ and $k\to 0$, we have that the term 
$\dfrac{k\dfrac{1}{1-x_{v}}}{1+k\dfrac{\mu}{(1-x_{a})^{2}}}$ goes to zero, and so the derivative has the same sign as $\partial_h\rho$.

Now focus on the limit $h\to 0$. 

In this limit the derivative of vaccination rates are:
\begin{align*}
    \partial_{h}x&=\dfrac{1}{1+\dfrac{k\mu}{(1-x)^{2}}}\begin{pmatrix}1+k\dfrac{(1-q)\mu}{(1-x)^{2}} & -k\dfrac{(1-q)\mu}{(1-x)^{2}}\\
-k\dfrac{q\mu}{(1-x)^{2}} & 1+k\dfrac{q\mu}{(1-x)^{2}}
\end{pmatrix}\begin{pmatrix}k\partial_{h}\rho_{a}\dfrac{1}{1-x_{a}}\\
k\partial_{h}\rho_{v}\dfrac{1}{1-x_{v}}
\end{pmatrix}\\
&=\dfrac{1}{1+\dfrac{k\mu}{(1-x)^{2}}}\begin{pmatrix}k\partial_{h}\rho_{a}\dfrac{1}{1-x_{a}}+k^{2}\dfrac{(1-q)\mu}{(1-x)^{2}}\left(\partial_{h}\rho_{a}\dfrac{1}{1-x_{a}}-\partial_{h}\rho_{v}\dfrac{1}{1-x_{v}}\right)\\
k\partial_{h}\rho_{v}\dfrac{1}{1-x_{v}}+k^{2}\dfrac{q\mu}{(1-x)^{2}}\left(\partial_{h}\rho_{v}\dfrac{1}{1-x_{v}}-\partial_{h}\rho_{a}\dfrac{1}{1-x_{a}}\right)
\end{pmatrix}
\end{align*}
When aggregating, we get a simplification:
\begin{align*}
    q\partial_{h}x_{a}+(1-q)\partial_{h}x_{v}&	=\dfrac{k}{1+\dfrac{k\mu}{(1-x)^{2}}}\left(q\partial_{h}\rho_{a}\dfrac{1}{1-x_{a}}+(1-q)\partial_{h}\rho_{v}\dfrac{1}{1-x_{v}}\right)
\end{align*}

We can compute this analytically:
\[
q\partial_{h}\rho_{a}\dfrac{1}{1-x_{a}}+(1-q)\partial_{h}\rho_{v}\dfrac{1}{1-x_{v}}=\frac{\mu ^2 (1-q) q (x_v-x_a)^2}{1-x)^4}
\]
So, the total derivative is:
\begin{align*}
d_h\rho=&\partial_h\rho-(q\partial_{h}x_{a}+(1-q)\partial_{h}x_{v})\\    =&\frac{\mu ^2 (1-q) q (x_v-x_a)^2}{1-x)^2}\left(\dfrac{1}{(1-x)^2}-\dfrac{k}{1+\dfrac{k\mu}{(1-x)^{2}}}\right)
\end{align*}
and again, in the limit for $d\to \overline{d}$ and $k\to 0$, we get that the second term disappears and the derivative is positive.

\textbf{Part 2: cumulative infection}

Focus on the limit for $h\to 1$.

Remember that in this limit we have $\partial_{x_a}\rho_a=-1$, $\partial_{x_v}\rho_a=0$, and analogously for $\rho_v$. So:
\[
d_{x_a}=CI\partial_{x_{a}}CI-\partial_{\rho_{a}}CI=\dfrac{q}{(1-x_{a}-\mu+r)^{2}}
\]
\[
d_{x_v}=CI\partial_{x_{v}}CI-\partial_{\rho_{v}}CI=\dfrac{1-q}{(1-x_{v}-\mu+r)^{2}}
\]

Moreover, notice that, since $\partial_{h}\rho_{a}>0$:
\[
\partial_hx_a=\dfrac{k\dfrac{1}{1-x_{a}}}{1+k\dfrac{\mu}{(1-x_{v})^{2}}}\partial_{h}\rho_{a}>
\dfrac{k\dfrac{1}{1-x_{v}}}{1+k\dfrac{\mu}{(1-x_{a})^{2}}}\partial_{h}\rho_{a}
\]

Using these, the indirect effect is:
\begin{align*}
&d_{x_{a}}CI\partial_{h}x_{a}+d_{x_{v}}CI\partial_{h}x_{v}\\
&=
\left(\partial_{x_{a}}CI-\partial_{\rho_{a}}CI\right)\partial_{h}x_{a}+\left(\partial_{x_{v}}CI-\partial_{\rho_{v}}CI\right)\partial_{h}x_{v}	\\
&=
\dfrac{q}{(1-x_{a}-\mu+r)^{2}}\partial_{h}x_{a}\dfrac{1-q}{(1-x_{v}-\mu+r)^{2}}\partial_{h}x_{v}	\\
&>k\dfrac{\dfrac{1}{1-x_{v}}}{1+k\dfrac{\mu}{(1-x_{a})^{2}}}\left(\dfrac{q}{(1-x_{a}-\mu+r)^{2}}\partial_{h}\rho_{a}+\dfrac{1-q}{(1-x_{v}-\mu+r)^{2}}\partial_{h}\rho_{v}\right)
\end{align*}

Using the results of the proof of Proposition \ref{Cumulative}, we obtain that the total effect satisfies:
\[
d_hCI>\left(-\dfrac{2}{(1-x_{a}-\mu+r)^{2}}+\dfrac{1}{(1-x_{v}-\mu+r)^{2}}\dfrac{k\dfrac{1}{1-x_{v}}}{1+k\dfrac{\mu}{(1-x_{a})^{2}}}\right)\partial_{h}\rho
\]

Now taking the limit for  $k\to d/(1-\mu)=0$ and $d\to0$, we obtain that the second term disappears, and the derivative is positive.

Now focus on $h\to 0$.
\[
d_{x_a}CI=CI\partial_{x_{a}}CI-\partial_{\rho_{a}}CI=\frac{q}{(-r+\mu+x-1)^{2}}
\]
\[
d_{x_v}CI=CI\partial_{x_{v}}CI-\partial_{\rho_{v}}CI=\frac{1-q}{(-r+\mu+x-1)^{2}}
\]
So, there is a simplification:
\begin{align*}
&d_{x_{a}}CI\partial_{h}x_{a}+d_{x_{v}}CI\partial_{h}x_{v}\\
&=
\left(\partial_{x_{a}}CI-\partial_{\rho_{a}}CI\right)\partial_{h}x_{a}+\left(\partial_{x_{v}}CI-\partial_{\rho_{v}}CI\right)\partial_{h}x_{v}	\\
&=
\dfrac{q}{(-\beta+\mu+x-1)}\partial_{h}x_{a}+\dfrac{1-q}{(-r+\mu+x-1)^{2}}\partial_{h}x_{v}	\\
&=\dfrac{1}{(-r+\mu+x-1)^2}\left(q\partial_{h}x_{a}+(1-q)\partial_{h}x_{v}\right)
\end{align*}
Now using the analytical results obtained for the steady state case, we get that the indirect effect is:
\[
d_{x_{a}}CI\partial_{h}x_{a}+d_{x_{v}}CI\partial_{h}x_{v}=\dfrac{1}{(-\beta+\mu+x-1)^2}\frac{\mu ^2 (1-q) q (x_v-x_a)^2}{(1-x)^4}
\]
Adding the direct effect computed in Proposition \ref{Cumulative}, we get: that the derivative is negative if and only if:
\[
\frac{(1-x)^2 (3(1-x)+2
   \beta -2 \mu )}{1-x+r}-\frac{k \mu
   }{\frac{k \mu }{(1-x)^2}+1}>0
\]
and the second term goes to zero in the limit for $d$ and $k$ small, so the derivative is indeed negative.
\end{proof}

\subsubsection*{Proof of Proposition \ref{prop:peer} (page \pageref{prop:peer})}

\label{app:peer}

\begin{proof}

\textbf{Part 1}

First we summarize the complete equilibrium characterization, then we prove it.

\textbf{Equilibria}: For all values of the parameters, there is an equilibrium with $x_a^*=x_v^*=0$.
For $k\le 1$, this is the unique equilibrium.
For $h<1-\frac{k-1}{kq}$ and $d<\frac{(k-1)(1-hk)}{k q(1-h)}$, there is also an equilibrium where both are interior: $x_a^*,x_v^*\in (0,1)$. If $h\ge 1-\frac{k-1}{kq}$ and, either $d>k(1-\tilde{q}_a)$, or $k\tilde{q}_a>1$, then there is an equilibrium with $x_a^*=0$ and $x_v^*=1$.
If $k\ge d+1$, there is an equilibrium with $x_a^*=x_v^*=1$. If $1+d>k\ge 1/\tilde{q}_v$, $k(1-\tilde{q}_a)> d$ and $k\tilde{q}_a<1$, there is an equilibrium where $x_v^*=1$ and $x_a^*$ is interior.

The equilibria (including corner ones) are the solutions of the following equations:
\begin{align}
x_a=&\min\{1,\max\{0,k(\tilde{q}_a x_a+(1-\tilde{q}_a)x_v)-d\}\},\nonumber \\
x_v=&\min\{1,\max\{k(\tilde{q}_v x_v+(1-\tilde{q}_v)x_a)\}\}.
\end{align}

From the equations, it follows immediately that $x_a^*=x_v^*=0$ is always an equilibrium. In addition, there might be interior or corner equilibria.

There is a corner equilibrium in which $x_a^*=x_v^*=1$ if $x_a^*=1\le k(\tilde{q}_a+1-\tilde{q}_a)-d=k-d$, that is $k\ge 1+d$. This condition immediately implies $x_a=1$, and this, in turn, implies $k(\tilde{q}_v+1-\tilde{q}_v)>1$, that implies that also $x_v=1$.

There is a corner equilibrium in which $x_a^*=0$ and $x_v^*=1$ if $k\tilde{q}_v\ge 1$ and $k(1-\tilde{q}_a)\le d$. If $1+d>k\ge 1/\tilde{q}_v$, $k(1-\tilde{q}_a)> d$ and $k\tilde{q}_a<1$, there is an equilibrium where $x_v^*=1$ and $x_a$ interior, equal to:
\[
x_a^*=\dfrac{k(1-\tilde{q}_a)- d}{1-k\tilde{q}_a}
\]

If $k<1$, supposing that the system \eqref{vax_peer} has interior solution, the linear system can be written as:
\[
\left(I-k\left(\begin{array}{cc}
\tilde{q}_a & 1-\tilde{q}_a\\
1-\tilde{q}_v & \tilde{q}_v
\end{array}\right)\right)\left(\begin{array}{c}
x_a\\
x_v
\end{array}\right)=\left(\begin{array}{c}
-d\\
0
\end{array}\right).
\]
where $I$ is the $2\times 2$ identity matrix. Since the matrix $\left(\begin{array}{cc}
\tilde{q}_a & 1-\tilde{q}_a\\
1-\tilde{q}_v & \tilde{q}_v
\end{array}\right)$ is stochastic, it follows that the maximum eigenvalue is 1. So, using standard results, if $k<1$ the matrix $\left(I-k\left(\begin{array}{cc}
\tilde{q}_a & 1-\tilde{q}_a\\
1-\tilde{q}_v & \tilde{q}_v
\end{array}\right)\right)$ is invertible and has positive inverse. So, it follows that the unique interior equilibrium would satisfie:
\[
\left(\begin{array}{c}
x_a^*\\
x_v^*
\end{array}\right)=\left(I-k\left(\begin{array}{cc}
\tilde{q}_a & 1-\tilde{q}_a\\
1-\tilde{q}_v & \tilde{q}_v
\end{array}\right)\right)^{-1}\left(\begin{array}{c}
-d\\
0
\end{array}\right).
\]
but, since the inverse is positive, it follows that $x_a^*,x_v^*<0$, which is not feasible. So, for $k<1$, there is no interior solution. Moreover, it is also not possible to have a partially interior solution with $x_v^*>0$ but $x_a^*=0$. Indeed, if $x_a^*=0$, then $x_v^*$ satisfies the equation:
\[
x_v=k\tilde{q}_vx_v
\]
and since $k<1$ then $k\tilde{q}_v<1$, so the only solution is $x_v^*=0$. A similar reasoning works for $x_a^*$.

If $k=1$, from the equations we obtain $x_v-x_a=0$ and $x_a-x_v=-d/(1-\tilde{q}_a)$, that is not feasible. So, again the only solution is for both rates to be zero.

So, we focus on the case of $k>1$. With a calculation, we find that the interior solution is:
\begin{align*}
x_a^*=&\frac{d (1-k (1-(1-h) q))}{(k-1) (1-h k)}, \quad
x_v^*= \frac{d (1-h) k q}{(k-1) (1-hk)}.
\end{align*}
and, in this interior solution, $x_v^*-x_a^*=\frac{d}{1-hk}$, so $x_v^*>x_a^*$, as expected.
This is feasible if $x_a^*>0$, and if $x_v^*<1$. It can be checked that this is the case if $1<k<\frac{1}{\tilde{q}_v}$ and $d<\frac{(k-1)(1-hk)}{k q(1-h)}$. The first condition implies $hk<1$, so in particular the second bound for $d$ is always nonnegative.

%for $h$ is small enough, so that $k<\frac{1}{\tilde{q}_v}$, then in addition to the zero equilibrium there is the interior equilibrium. If $d$ is small, for high homophily we pass to a region where $x_v^*=1$, while the antivax can be at a corner or not depending on how high is $d$. Selecting by continuity, we do not look at the case in which the equilibrium becomes $x_a^*0x_v^*=1$. In any case, 

\textbf{Part 2}

This part simply follows from the calculation of the derivatives:
\begin{align*}
\partial_h x_a&=-\frac{d k (1-q)}{(h k-1)^2}, \quad 
\partial_h x_v=\frac{d k q}{(h k-1)^2},
\end{align*}

\textbf{Part 3}

For $h\to 1$, given the characterization in the proof of Part 1, regardless of the other parameters, the only remaining equilibria are corner equilibria where vaccinations levels are either zero or one. So, the effect of increasing homophily is zero: $\partial_h x_a=\partial_h x_v=0$. As a consequence, the comparative statics with respect to homophily satisfy $d_h CI>0$, and $d_h \rho^{SS}<0$, as in Proposition \ref{prop:endovax}.
\end{proof}

\clearpage

\bibliographystyle{chicago}
\bibliography{biblio}

\end{document}